\newtheorem{theorem}{Theorem}[section]
\newtheorem{proposition}[theorem]{Proposition}
\newtheorem{corollary}[theorem]{Corollary}
\newtheorem{example}{Example}[section]
\newcommand{\ds}{\displaystyle}
\numberwithin{equation}{section}
\title{The generalized Marchenko method in the inverse scattering problem for a first-order linear system}
\author{T. Aktosun and R. Ercan\\
Department of Mathematics\\
University of Texas at Arlington\\
Arlington, TX 76019-0408, USA}
\date{}
\begin{document}

\maketitle

\begin{abstract}
The Marchenko method is developed in the inverse scattering problem for a linear system of first-order differential equations
containing potentials proportional to the spectral parameter.
The corresponding Marchenko system of integral equations is derived in such a way that
the method can be applied to some other linear systems for which a Marchenko method is
not yet available. It is shown how the potentials and the scattering solutions to the linear system
are constructed from the solution to the Marchenko system.
The bound-state information for the linear system with any number of
bound states and any multiplicities is described in terms of
a pair of constant matrix triplets.
When the potentials in the linear system are reflectionless, some explicit solution formulas
are presented in closed form for the potentials and for the scattering solutions
to the linear system. 
The theory is illustrated with some explicit examples.
\end{abstract}

{\bf {AMS Subject Classification (2020):}} 34A55, 34L25, 34L40, 47A40

{\bf Keywords:} Marchenko method, generalized Marchenko integral equation, inverse scattering, first-order linear system, 
energy-dependent potentials

\newpage

\section{Introduction}
\label{section1}

Our main goal in this paper is to develop the Marchenko method for the linear system 
\begin{equation}\label{1.1}
\ds\frac{d}{dx}\begin{bmatrix}
\alpha\\
\noalign{\medskip}
\beta
\end{bmatrix}=
\begin{bmatrix}
-i\zeta^2 & \zeta \,q(x)\\
\noalign{\medskip}
\zeta\, r(x) & i\zeta^2
\end{bmatrix}
\begin{bmatrix}
\alpha\\
\noalign{\medskip}
\beta
\end{bmatrix},\qquad -\infty<x<+\infty,
\end{equation}
where $x$ is the spacial coordinate,
$\zeta $ is the spectral parameter, the scalar quantities $q(x)$ and $r(x)$ are some complex-valued potentials,  and the column vector  
$\begin{bmatrix}
\alpha\\
\beta
\end{bmatrix}$ is the wavefunction depending on $x$ and $\zeta.$ 
We assume that the potentials $q$ and $r$ belong to the Schwartz class, i.e. the class of functions of 
$x$ on the real axis $\mathbb R$ for which the derivatives of all orders exist and all those derivatives decay faster than any 
negative power of $x$ as $x\to\pm\infty.$ Even though
our results hold for potentials satisfying weaker restrictions, in order to provide insight
into the development of the Marchenko method, for simplicity and clarity
we assume that the potentials belong to the Schwartz class.

The linear system \eqref{1.1} is 
associated with the first-order
system of nonlinear equations given by
\begin{equation}\label{1.2}
\begin{cases}
iq_t+q_{xx}-i(qrq)_x=0,\\
\noalign{\medskip}
ir_t-r_{xx}-i(rqr)_x=0,
\end{cases}
\qquad   x\in\mathbb R,\quad t>0,
\end{equation}
which is known \cite{AC1991,AS1981,KN1978,T2010} as the derivative NLS (nonlinear Schr{\"o}dinger) system or as the Kaup--Newell system. 
The derivative NLS equations have important physical applications in plasma physics, 
propagation of hydromagnetic waves 
traveling in a magnetic field, and
transmission of ultra short nonlinear pulses in optical fibers \cite{AC1991,KN1978}.
Hence, the study of \eqref{1.1} is physically relevant, and the development of the Marchenko method for
\eqref{1.1} is significant. 

We remark that our concentration in this paper is not on 
integrable nonlinear systems such as \eqref{1.2} but rather on the linear system \eqref{1.1}.
We 
present the Marchenko method for \eqref{1.1} in such a way that the
method can be applied on other linear systems and also on their discrete versions. We have already developed
\cite{AE2021}
the Marchenko method for the discrete analog of the linear system \eqref{1.1}, and hence our emphasis
in this paper is the development of the Marchenko method
for the linear system \eqref{1.1}.

A linear system of differential equations such as \eqref{1.1}, which contains the spectral parameter $\zeta$ and some potentials that are functions of
the spacial variable $x$ with sufficiently fast decay at infinity, yields a scattering scenario.
It may be possible to establish a one-to-one correspondence between the potentials 
in the linear system and an appropriate scattering data set, which usually
consists of some scattering coefficients that are functions of the spectral parameter $\zeta$
and the bound-state information related to the values of the spectral
parameter at which the linear system has square-integrable solutions.
The direct scattering problem consists of 
the determination of the scattering data set when the potentials are known.
On the other hand, the inverse scattering problem consists of
the determination of the potentials when the scattering data set is
known.

One of the most effective methods in the solution to an inverse scattering problem is the Marchenko method, originally developed by 
Vladimir Marchenko \cite{AM1963} for the half-line Schr{\"o}dinger equation 
\begin{equation*}
-\ds\frac{d^2\psi}{dx^2}+V(x)\,\psi=k^2\,\psi, \qquad 0<x<+\infty.
\end{equation*}
The Marchenko method was later extended by Faddeev \cite{F1967} to solve the inverse scattering problem for the
full-line 
Schr{\"o}dinger equation 
\begin{equation}\label{1.3}
-\ds\frac{d^2\psi}{dx^2}+V(x)\,\psi=k^2\,\psi, \qquad -\infty<x<+\infty.
\end{equation}
In the Marchenko method, the potential is recovered from the solution to a linear integral equation, usually called the
Marchenko equation, where the kernel and the nonhomogeneous term are constructed from the scattering data 
set with the help of a Fourier transformation. The Marchenko equation for \eqref{1.3} has the form 
\begin{equation}\label{1.4}
K(x,y)+\Omega(x+y)+\int_x^\infty dz\,K(x,z)\,\Omega(z+y)=0,\qquad x<y,
\end{equation}
if the scattering data set is related to the
measurements at $x=+\infty,$ and it has the form
\begin{equation}\label{1.5}
\tilde K(x,y)+\tilde\Omega(x+y)+\int_{-\infty}^{x}dz\,\tilde K(x,z)\,\tilde\Omega(z+y)=0,\qquad y<x,
\end{equation}
if the scattering data set is related to the
measurements at $x=-\infty.$ The integral kernels and the nonhomogeneous
terms in \eqref{1.4} and \eqref{1.5} are constructed from the corresponding scattering data sets, and 
the potential $V$ is obtained from the solution $K(x,y)$ to \eqref{1.4} as
\begin{equation}\label{1.6}
V(x)=-2\,\ds\frac{dK(x,x)}{dx},
\end{equation}
where $K(x,x)$ denotes the limit $K(x,x^+),$ or it is constructed from the solution $\tilde K(x,y)$ to \eqref{1.5} as
\begin{equation*}
V(x)=2\,\ds\frac{d\tilde K(x,x)}{dx},
\end{equation*}
where $\tilde K(x,x)$ denotes the limit $\tilde K(x,x^-).$

The Marchenko method is applicable to various other differential equations as well as systems
 of differential equations. For example, when applied to the AKNS system \cite{AC1991,AKNS1974}
\begin{equation}\label{1.7}
\ds\frac{d}{dx}\begin{bmatrix}
\xi\\
\noalign{\medskip}
\eta
\end{bmatrix}=
\begin{bmatrix}
-i\lambda & u(x)\\
\noalign{\medskip}
v(x) & i\lambda
\end{bmatrix}
\begin{bmatrix}
\xi\\
\noalign{\medskip}
\eta
\end{bmatrix},\qquad -\infty<x<+\infty,
\end{equation}
the corresponding Marchenko integral equation still has the form given in \eqref{1.4}, except that $K(x,y)$ and $\Omega(x+y)$ are 
now $2\times 2$ matrices. 
The nonhomogeneous term and the kernel are constructed from the scattering data in a similar manner as done for \eqref{1.3}, and 
the two potentials $u$ and $v$ 
in \eqref{1.7} are recovered from the solution to the relevant Marchenko equation by using a slight variation of \eqref{1.6}.

The Marchenko method is also applicable to various inverse scattering problems for linear difference equations such as the discrete Schr\"odinger equation 
on the half-line lattice given by
\begin{equation}\label{1.8}
-\psi_{n+1}+2\psi_n-\psi_{n-1}+V_n\,\psi_n=\lambda\,\psi_n, \qquad n\ge 1,
\end{equation}
where  $\lambda$ is the spectral parameter and the quantities
$\psi_n$ and $V_n$ denote the values of the wavefunction and the potential, respectively, at the lattice location $n.$ 
In this case, the Marchenko equation corresponding to \eqref{1.8} has the discrete form given by 
\begin{equation}\label{1.9}
K_{nm}+\Omega_{n+m}+\sum_{j=n+1}^\infty \,K_{nj}\,\Omega_{j+m}=0,\qquad n<m.
\end{equation}
The nonhomogeneous term and the kernel are still constructed from the corresponding scattering data set, and the potential 
value $V_n$ is recovered from the double-indexed solution $K_{nm}$ to \eqref{1.9} via \cite{AP2015}
\begin{equation*}
V_n=K_{(n-1)n}-K_{n(n+1)},\qquad n\ge 1,
\end{equation*}
with the understanding that $K_{01}=0.$

There are still many other inverse scattering problems described by various differential or difference equations, 
or system of  differential or difference equations, for which a Marchenko method is not yet available, and \eqref{1.1}
is one of them.
In this paper, we develop the Marchenko method for  
\eqref{1.1} and 
present the corresponding matrix-valued Marchenko integral equation in \eqref{4.40}. We note that
\eqref{4.40}
resembles \eqref{1.4}, but the integral kernel of \eqref{4.40} slightly differs from that of \eqref{1.4}. 
In \eqref{4.54} and \eqref{4.55}, we present the recovery of $q(x)$ and $r(x)$ from the solution to \eqref{4.40}.

The main result presented in this paper, i.e. the derivation of the Marchenko system for \eqref{1.1} and the recovery of the potentials $q$ and 
$r$ from the solution to that Marchenko system, is significant because not only it extends the powerful Marchenko method to \eqref{1.1} 
but it also provides a procedure that can be applied to various other inverse problems. 

In our extension of the Marchenko method to solve the inverse scattering problem for \eqref{1.1}, we use the following guidelines 
in order to refer to the extension still as the Marchenko method. First, the derived Marchenko system 
should resemble \eqref{1.4}, where the nonhomogeneous term and the kernel
should both be obtained from the scattering data for \eqref{1.1} with the help of a Fourier transform, but by allowing some minor modifications. 
Next, the potentials in \eqref{1.1} should be readily obtained from the solution to the derived Marchenko system, but by allowing some appropriate modifications. 
The same guidelines can also be used
to establish a Marchenko method for other differential and difference equations, or systems of  differential and difference equations.

Let us remark that, in the literature
related to the inverse scattering transform, some authors refer to
the Marchenko equation as the Gel'fand--Levitan--Marchenko equation, but this is a misnomer
\cite{N1980}.
The Gel'fand--Levitan integral equation \cite{AK2006,AW2006,CS1989,F1967,L1987,M1986,N1983} 
is different from the Marchenko integral equation. 
The standard Gel'fand--Levitan equation has the form 
\begin{equation}\label{1.10}
A(x,y)+G(x,y)+\ds\int_{0}^{x}dz\,A(x,z)\,G(z,y)=0,\qquad 0<y<x,
\end{equation}
where $G(x,y)$ appearing in the kernel and the nonhomogeneous term is
constructed from the spectral function of the corresponding linear system.
We note that that the integral limits in the Marchenko equation \eqref{1.4} are $x$ and $+\infty,$
whereas the integral limits in the Gel'fand--Levitan equation \eqref{1.10} are $0$ and $x.$

Our paper is organized as follows. In Section~\ref{section2} we provide the preliminaries by introducing the Jost solutions 
and the scattering coefficients
for the linear system
 \eqref{1.1}, and we present their relevant properties needed in the development of our Marchenko method. 
In Section~\ref{section3} we introduce the relevant information on the bound states for \eqref{1.1}, and
we show that the bound-state information can be presented
in a simple and elegant way for any number of bound states and any
multiplicities, and this is done by using a pair of constant matrix triplets.
In Section~\ref{section4} we present the matrix-valued
Marchenko system for \eqref{1.1}, where
the input to the Marchenko system consists of a pair of reflection coefficients
and the bound-state information.
We also show that the Marchenko system
can be written in an equivalent but uncoupled format, 
and we describe how the potentials and the Jost solutions are obtained from the solution to
the Marchenko system.
In Section~\ref{section5}, when the reflection coefficients are zero, with the most general
bound-state information expressed in terms of a pair of matrix triplets, we obtain
the closed-form solution to the Marchenko system.
This allows us to
present some explicit solution formulas for the potentials and the Jost solutions for \eqref{1.1}
expressed in closed form in terms of our matrix triplets.
In Section~\ref{section5}, we also prove a relevant restriction on the bound states for
\eqref{1.1} when the potentials $q$ and $r$ are reflectionless; namely, we prove that
the bound-state poles of the corresponding transmission coefficients must be equally 
distributed in the four quadrants of the complex $\zeta$-plane. We also prove that, for the AKNS system \eqref{1.7}, 
in the reflectionless case the bound-state poles of the corresponding transmission coefficients must be equally 
distributed in the upper and lower halves of  the complex $\lambda$-plane. 
Finally, in Section~\ref{section6}, we illustrate the theory developed in the earlier
sections, and in particular we provide
some examples of potentials and Jost solutions for \eqref{1.1}
in terms of elementary functions when the sizes of our matrix triplets are small.

\section{Preliminaries}
\label{section2}

In this section,
in order to prepare for the derivation of the Marchenko system for \eqref{1.1}, we introduce the Jost solutions and the
scattering coefficients for \eqref{1.1} and we present their relevant properties. We 
use the notation of \cite{AE2019} and rely some of the results
presented there. 

We let
$\psi(\zeta,x),$ $\bar\psi(\zeta,x),$  $\phi(\zeta,x),$ $\bar\phi(\zeta,x)$
denote the four Jost solutions to \eqref{1.1} satisfying the respective spacial asymptotics
\begin{equation}\label{2.1}
\psi(\zeta,x)=\begin{bmatrix}
o(1)\\
\noalign{\medskip}
 e^{i\zeta^2x}\left[1+o(1)\right]
\end{bmatrix} ,\qquad  x\to+\infty,
\end{equation}
\begin{equation}\label{2.2}
\bar\psi(\zeta,x)=\begin{bmatrix}
e^{-i\zeta^2x}\left[1+o(1)\right]\\
\noalign{\medskip}
o(1)
\end{bmatrix} ,\qquad  x\to+\infty,
\end{equation}
\begin{equation}\label{2.3}
\phi(\zeta,x)=\begin{bmatrix}
e^{-i\zeta^2x}\left[1+o(1)\right]\\
\noalign{\medskip}
o(1)
\end{bmatrix} ,\qquad   x\to-\infty,
\end{equation}
\begin{equation}\label{2.4}
\bar\phi(\zeta,x)=\begin{bmatrix}
o(1)\\
\noalign{\medskip}
e^{i\zeta^2x}\left[1+o(1)\right]
\end{bmatrix} ,\qquad  x\to-\infty.
\end{equation}
We remark that the overbar does not denote complex conjugation. 

There are six scattering coefficients associated with \eqref{1.1}, i.e. the transmission coefficients $T(\zeta)$ and $\bar T(\zeta),$ 
the right reflection coefficients  $R(\zeta)$ and $\bar R(\zeta),$ and the left reflection coefficients  $L(\zeta)$ and $\bar L(\zeta).$ 
Because the trace of the
coefficient matrix in \eqref{1.1} is zero, the transmission coefficients from the left and from the right are equal to each other, and 
hence we do not need to use separate notations for the left and right transmission coefficients. 
The six scattering coefficients can be defined in terms of the spacial asymptotics of the Jost solutions given by
\begin{equation}\label{2.5}
\psi(\zeta,x)=\begin{bmatrix}
\ds\frac{L(\zeta)}{T(\zeta)}\,e^{-i\zeta^2 x}\left[1+o(1)\right]\\
\noalign{\medskip}
\ds\frac{1}{T(\zeta)}\,e^{i\zeta^2 x}\left[1+o(1)\right]
\end{bmatrix}, \qquad   x\to-\infty,
\end{equation}
\begin{equation}\label{2.6}
\bar\psi(\zeta,x)=\begin{bmatrix}
\ds\frac{1}{\bar T(\zeta)}\,e^{-i\zeta^2 x}\left[1+o(1)\right]\\
\noalign{\medskip}
\ds\frac{\bar L(\zeta)}{\bar T(\zeta)}\,e^{i\zeta^2 x}\left[1+o(1)\right]
\end{bmatrix}, \qquad  x\to-\infty,
\end{equation}
\begin{equation}\label{2.7}
\phi(\zeta,x)=\begin{bmatrix}
\ds\frac{1}{T(\zeta)}\,e^{-i\zeta^2x}\left[1+o(1)\right]\\
\noalign{\medskip}
\ds\frac{R(\zeta)}{T(\zeta)}\,e^{i\zeta^2 x}\left[1+o(1)\right]
\end{bmatrix}, \qquad   x\to+\infty,
\end{equation}
\begin{equation}\label{2.8}
\bar\phi(\zeta,x)=\begin{bmatrix}
\ds\frac{\bar R(\zeta)}   {\bar T(\zeta)}\,e^{-i\zeta^2 x}\left[1+o(1)\right]\\
\noalign{\medskip}
\ds\frac{1}{\bar T(\zeta)}\,e^{i\zeta^2 x}\left[1+o(1)\right]
\end{bmatrix}, \qquad   x\to+\infty.
\end{equation}

In order to present the relevant properties of the Jost solutions, we use
the subscripts $1$ and $2$ to denote their first and second components, respectively, i.e. 
we let
\begin{equation} \label{2.9}
\begin{bmatrix}
\psi_1(\zeta,x)\\
\noalign{\medskip}\psi_2(\zeta,x)
\end{bmatrix}:=\psi(\zeta,x),\quad \begin{bmatrix}
\bar\psi_1(\zeta,x)\\ \noalign{\medskip}\bar\psi_2(\zeta,x)
\end{bmatrix}:=\bar\psi(\zeta,x),
\end{equation}
 \begin{equation} \label{2.10}
\begin{bmatrix}
\phi_1(\zeta,x)\\
\noalign{\medskip}\phi_2(\zeta,x)
\end{bmatrix}:=\phi(\zeta,x),\quad \begin{bmatrix}
\bar\phi_1(\zeta,x)\\ \noalign{\medskip}\bar\phi_2(\zeta,x)
\end{bmatrix}:=\bar\phi(\zeta,x).
\end{equation}
We relate the spectral parameter $\zeta$ appearing in
\eqref{1.1} to the parameter $\lambda$ in \eqref{1.7}
as
\begin{equation}\label{2.11}
\lambda=\zeta^{2},\quad \zeta=\sqrt{\lambda},
\end{equation}
with the square root denoting the principal branch of the complex-valued square-root function.
We use $\mathbb{C^+}$ and $\mathbb{C^-}$ to denote the upper-half and lower-half,
respectively, of the complex plane $\mathbb C,$  and we let  $\mathbb{\overline{C^+}}:=\mathbb{C^+}\cup\mathbb R$ and  
$\mathbb{\overline{C^-}}:=\mathbb{C^-}\cup\mathbb R.$

We recall that the Wronskian of any two column-vector solutions to \eqref{1.1} is defined as the determinant of the
$2\times 2$ matrix formed from those columns.
For example, the Wronskian
of $\psi(\zeta,x)$ and $\phi(\zeta,x)$ is given by
\begin{equation}\label{2.12}
[\psi;\phi]:=\left|\begin{matrix}\psi_1&\phi_1\\
\psi_2&\phi_2
\end{matrix}\right|.
\end{equation}
Due to the fact that the coefficient matrix in \eqref{1.1} has the zero trace, the value of the Wronskian
of any two solutions to \eqref{1.1} is independent of $x,$ and hence
the six scattering coefficients appearing in \eqref{2.5}--\eqref{2.8} can be 
expressed in terms of Wronskians of the Jost solutions \cite{AE2019} as
\begin{equation}\label{2.13}
T(\zeta)=\ds\frac{1}{[\phi(\zeta,x);\psi(\zeta,x)]},\quad
\bar T(\zeta)=\ds\frac{1}{[\bar\psi(\zeta,x);\bar\phi(\zeta,x)]},
\end{equation}
\begin{equation}\label{2.14}
R(\zeta)=\ds\frac{[\phi(\zeta,x);\bar\psi(\zeta,x)]}{[\psi(\zeta,x);\phi(\zeta,x)]},\quad
\bar R(\zeta)=\ds\frac{[\bar\phi(\zeta,x);\psi(\zeta,x)]}{[\bar\psi(\zeta,x);\bar\phi(\zeta,x)]},
\end{equation}
\begin{equation}\label{2.15}
L(\zeta)=\ds\frac{[\psi(\zeta,x);\bar\phi(\zeta,x)]}{[\phi(\zeta,x);\psi(\zeta,x)]},\quad
\bar L(\zeta)=\ds\frac{[\phi(\zeta,x);\bar\psi(\zeta,x)]}{[\bar\psi(\zeta,x);\bar\phi(\zeta,x)]}.
\end{equation}

It is possible to relate \eqref{1.1} to the AKNS system \eqref{1.7} by using \eqref{2.11} and 
by choosing 
the potentials $u$ and $v$ in terms of the potentials $q$ and $r$ as
\begin{equation}
\label{2.16}
u(x)=q(x)\,E(x)^{-2},
\end{equation}
\begin{equation}
\label{2.17}
v(x)=\left[-\ds\frac{i}{2}\,r'(x)+\ds\frac{1}{4}\,
q(x)\,r(x)^2\right]E(x)^2,
\end{equation}
where the prime denotes the derivative and the quantity $E(x)$ is defined as
\begin{equation}\label{2.18}
E(x):=\exp\left(\frac{i}{2}\int_{-\infty}^{x}dz\,q(z)\,r(z)\right).
\end{equation}
Since the potentials $q$ and $r$ are complex valued, we remark that in general
$E(x)$ does not have the unit modulus. From \eqref{2.18} it follows that
\begin{equation}
\label{2.19}
E(-\infty)=1, \quad E(+\infty)=e^{i\mu/2},
\end{equation}
where we have defined the complex constant $\mu$ as
\begin{equation}
\label{2.20}
\mu:=\int_{-\infty}^\infty dz\,q(z)\,r(z).
\end{equation} 

Besides \eqref{1.7}, it is also possible to relate \eqref{1.1} to another AKNS system given by
\begin{equation}\label{2.21}
\ds\frac{d}{dx}\begin{bmatrix}
\gamma\\
\noalign{\medskip}
\epsilon
\end{bmatrix}=
\begin{bmatrix}
-i\lambda &  p(x)\\
\noalign{\medskip}
s(x) & i\lambda
\end{bmatrix}
\begin{bmatrix}
\gamma\\
\noalign{\medskip}
\epsilon
\end{bmatrix}, \qquad x\in\mathbb R,
\end{equation}
by choosing the potentials $p$ and $s$ in terms of $q$ and $r$ as
\begin{equation}\label{2.22}
p(x)=\left[\frac{i}{2}\,q'(x)+\ds\frac{1}{4}\,q(x)^2\,r(x)\right]E(x)^{-2},
\end{equation}
\begin{equation}\label{2.23}
s(x)=r(x)\,E(x)^2.
\end{equation}

Let us remark that it is possible to analyze the direct and inverse scattering problems for \eqref{1.1}
without relating \eqref{1.1} to the AKNS systems \eqref{1.7} or \eqref{2.21}.
As done for \eqref{1.3} \cite{CS1989,F1967,L1987,M1986},
this can be accomplished for \eqref{1.1} by first determining the integral relations
satisfied by the four Jost solutions to \eqref{1.1}, where those integral
relations are obtained by combining \eqref{1.1} and
the asymptotic conditions \eqref{2.1}--\eqref{2.4}. Using those integral
relations, one can express the scattering coefficients for \eqref{1.1}
in terms of certain integrals involving the potentials $q$ and $r.$
The relevant properties of the scattering coefficients can be determined
from those integral relations. In a similar manner, the small and large
$\zeta$-asymptotics of the scattering coefficients, the bound states, and the inverse scattering problem for \eqref{1.1}
can all be analyzed without relating \eqref{1.1} to \eqref{1.7} or \eqref{2.21}.
On the other hand, the analysis of the direct and inverse scattering problems for \eqref{1.1},
by relating \eqref{1.1} to \eqref{1.7} or \eqref{2.21}, brings some physical insight
and intuition because the analysis of those two problems for an AKNS 
system is better understood. Note that \eqref{1.1} differs from
the AKNS systems \eqref{1.7} or \eqref{2.21} because the off-diagonal entries 
of the coefficient matrix in \eqref{1.1} contain the potentials as multiplied
by the spectral parameter $\zeta.$ This greatly complicates the analysis of
the direct and inverse scattering problems for \eqref{1.1}. On the other hand, 
the three linear systems \eqref{1.1}, \eqref{1.7}, and \eqref{2.21}
can all be viewed as different perturbations of the first-order unperturbed system
\begin{equation*}
\ds\frac{d}{dx}\begin{bmatrix}
\overset\circ\alpha\\
\noalign{\medskip}
\overset\circ\beta
\end{bmatrix}=
\begin{bmatrix}
-i\lambda & 0\\
\noalign{\medskip}
0 & i\lambda
\end{bmatrix}
\begin{bmatrix}
\overset\circ\alpha\\
\noalign{\medskip}
\overset\circ\beta
\end{bmatrix}, \qquad x\in\mathbb R,
\end{equation*}
and this helps us to understand the connections
among \eqref{1.1}, \eqref{1.7}, and \eqref{2.21}.

In  the next theorem we provide the relations among the Jost solutions to \eqref{1.1}, \eqref{1.7}, and  \eqref{2.21}, respectively,
when \eqref{2.11}, \eqref{2.16}, \eqref{2.17}, \eqref{2.22}, \eqref{2.23} hold. We omit the proof and refer the reader
to Theorems~3.1 and 3.2 of \cite{AE2019}.

\begin{theorem}
\label{theorem2.1}
Assume that the potentials $q$ and $r$ appearing in the first-order system \eqref{1.1}  belong to the Schwartz class. Let $E$ denote
the quantity $E(x)$ defined in \eqref{2.18}, and $\mu$ be the complex constant defined in \eqref{2.20}, 
and assume that the spectral
parameters $\zeta$ and $\lambda$ are related to each other as in  \eqref{2.11}. Then, we have the following:

\begin{enumerate}

\item[\text{\rm(a)}] The linear system \eqref{1.1} can be transformed into the AKNS system \eqref{1.7}, 
where the potential pair $(u,v)$ is related to $(q, r)$ as in \eqref{2.16} and \eqref{2.17}. 
It follows that the potentials $u$ and $v$ also belong to the Schwartz class. The four Jost solutions to
\eqref{1.1} appearing in \eqref{2.1}--\eqref{2.4}, respectively, and 
the four Jost solutions $\psi^{(u,v)},$ $\bar\psi^{(u,v)},$ $\phi^{(u,v)},$ $\bar\phi^{(u,v)}$ to \eqref{1.7}, satisfying
the corresponding asymptotics  in \eqref{2.1}--\eqref{2.4}, respectively, are related to each other as
\begin{equation}
\label{2.24}
\psi(\zeta,x)=e^{i\mu/2}\begin{bmatrix}
\sqrt{\lambda}\,E&0\\
\noalign{\medskip}\ds\frac{i}{2}\,r(x)\,E&E^{-1}
\end{bmatrix}\psi^{(u,v)}(\lambda,x),
\end{equation}
\begin{equation}\label{2.25}
\bar\psi(\zeta,x)=e^{-i\mu/2}
\begin{bmatrix}
E&0\\
\noalign{\medskip}
\ds\frac{i}{2\sqrt{\lambda}}\,r(x)\,E&\ds\frac{1}{\sqrt{\lambda}}\,
E^{-1}
\end{bmatrix}\bar\psi^{(u,v)}(\lambda,x),
\end{equation}
\begin{equation}
\label{2.26}
\phi(\zeta,x)=\begin{bmatrix}
E&0\\ \noalign{\medskip}
\ds\frac{i}{2\sqrt{\lambda}}\,r(x)\,E&\ds\frac{1}{\sqrt{\lambda}}\,E^{-1}
\end{bmatrix}\phi^{(u,v)}(\lambda,x),
\end{equation}
\begin{equation}\label{2.27}
\bar\phi(\zeta,x)=\begin{bmatrix}
\sqrt{\lambda}\,E&0\\
\noalign{\medskip}\ds\frac{i}{2}\,r(x)\,E&E^{-1}
\end{bmatrix}\bar\phi^{(u,v)}(\lambda,x).
\end{equation}

\item[\text{\rm(b)}]  The system \eqref{1.1} can be transformed into the system \eqref{2.21}, where the potential 
pair $(p,s)$ is related to $(q, r)$ as in \eqref{2.22} and \eqref{2.23}. It follows that the potentials $p$ and 
$s$ belong to the Schwartz class. The four Jost solutions to 
\eqref{1.1} 
and the four Jost solutions $\psi^{(p,s)},$ $\bar\psi^{(p,s)},$ $\phi^{(p,s)},$ $\bar\phi^{(p,s)}$ to \eqref{2.21}, satisfying
the corresponding asymptotics in \eqref{2.1}--\eqref{2.4}, respectively, are related to each other
as
\begin{equation}\label{2.28}
\psi(\zeta,x)=e^{i\mu/2}\begin{bmatrix}
\ds\frac{1}{\sqrt{\lambda}}\,E&-\ds\frac{i}{2\sqrt{\lambda}}\,q(x)\,E^{-1}
\\
\noalign{\medskip}0&E^{-1}
\end{bmatrix}\psi^{(p,s)}(\lambda,x),
\end{equation}
\begin{equation}
\label{2.29}
\bar\psi(\zeta,x)=e^{-i\mu/2}\begin{bmatrix}
E&-\ds\frac{i}{2}\,q(x)\,E^{-1}\\
\noalign{\medskip}
0&\sqrt{\lambda}\,E^{-1}
\end{bmatrix}\bar\psi^{(p,s)}(\lambda,x),
\end{equation}		
\begin{equation}
\label{2.30}
\phi(\zeta,x)=\begin{bmatrix}
E&-\ds\frac{i}{2}\,q(x)\,E^{-1}\\
\noalign{\medskip}0&\sqrt{\lambda}\,E^{-1}
\end{bmatrix}\phi^{(p,s)}(\lambda,x),
\end{equation}
\begin{equation}\label{2.31}
\bar\phi(\zeta,x)=\begin{bmatrix}
\ds\frac{1}{\sqrt{\lambda}}\,E&-\ds\frac{i}{2\sqrt{\lambda}}\,q(x)\,E^{-1}\\
\noalign{\medskip}0&E^{-1}
\end{bmatrix}\bar\phi^{(p,s)}(\lambda,x).
\end{equation}

\end{enumerate}
	
\end{theorem}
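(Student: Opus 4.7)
My plan is to prove Theorem~\ref{theorem2.1} as a pair of gauge transformations. I treat part~(a) in detail; part~(b) is structurally identical, using the upper-triangular gauge in place of the lower-triangular one, with analogous asymptotic bookkeeping.

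First, I would introduce the candidate gauge matrix
\[
M(x) = \begin{bmatrix}\sqrt\lambda\,E & 0 \\[3pt] \dfrac{i}{2}\,r(x)\,E & E^{-1}\end{bmatrix},
\]
read off directly from \eqref{2.24}--\eqref{2.27}, and check that if $\Psi^{(u,v)}(\lambda,x)$ solves the AKNS system \eqref{1.7} with $(u,v)$ defined by \eqref{2.16}--\eqref{2.17}, then $M(x)\,\Psi^{(u,v)}(\lambda,x)$ solves \eqref{1.1}. Writing $A_1$ for the coefficient matrix in \eqref{1.1} and $A_2$ for that in \eqref{1.7}, this reduces to the algebraic identity
\[
A_1\,M = M\,A_2 + M',
\]
which I would verify column by column using $E'(x)=\tfrac{i}{2}\,q(x)\,r(x)\,E(x)$, the relation \eqref{2.11}, and the definitions of $u$ and $v.$ The second column pins down $u = q\,E^{-2}$; the lower-left entry is where the combination $-\tfrac{i}{2}r' + \tfrac{1}{4}qr^2$ appearing in \eqref{2.17} is assembled, as three contributions involving $\lambda r E^{2},$ $qr^{2}E^{2},$ and $r' E^{2}$ cancel and combine in just the right way.

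Once the gauge identity holds, I would verify that the four specific matrix prefactors in \eqref{2.24}--\eqref{2.27} are the right ones to transport the Jost asymptotics. Since $q$ and $r$ lie in the Schwartz class, they vanish at $\pm\infty,$ so $M(x)$ becomes diagonal in the limit, and by \eqref{2.19} we have $E(-\infty)=1$ and $E(+\infty)=e^{i\mu/2}.$ For $\psi(\zeta,x)$ normalized by \eqref{2.1} at $+\infty,$ the limiting gauge matrix contributes a factor $E^{-1}(+\infty)=e^{-i\mu/2}$ in the second component, precisely canceled by the prefactor $e^{i\mu/2}$ in \eqref{2.24}. A parallel calculation handles $\bar\psi$ at $+\infty,$ and at $-\infty$ the gauge becomes $\mathrm{diag}(\sqrt\lambda,1)$ or $\mathrm{diag}(1,1/\sqrt\lambda),$ so multiplication by it converts the AKNS asymptotics (in $\lambda$) for $\phi^{(u,v)}$ and $\bar\phi^{(u,v)}$ into the DNLS asymptotics (in $\zeta^{2}$) of \eqref{2.3}--\eqref{2.4}. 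Schwartz-class membership of $u$ and $v$ is a byproduct of \eqref{2.16}--\eqref{2.17}, since $E^{\pm 2}$ is smooth and bounded on $\mathbb R$ and the Schwartz class is stable under multiplication by such functions and under differentiation.

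Part~(b) proceeds identically with the upper-triangular gauge
\[
N(x) = \begin{bmatrix}E & -\dfrac{i}{2}\,q(x)\,E^{-1} \\[3pt] 0 & \sqrt\lambda\,E^{-1}\end{bmatrix},
\]
and the analogous identity $A_1\,N = N\,\tilde A_2 + N'$ with $\tilde A_2$ the coefficient matrix of \eqref{2.21} forces $(p,s)$ to be given by \eqref{2.22}--\eqref{2.23}. The main obstacle in the whole argument is purely computational: tracking the $\sqrt\lambda=\zeta$ factors through the conjugation and confirming the delicate cancellation in the lower-left entry of $A_1\,M - M\,A_2 - M'.$ Since the paper attributes the calculation to Theorems~3.1 and 3.2 of \cite{AE2019}, the plan is simply to reproduce that verification cleanly.
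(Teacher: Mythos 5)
Your proposal is correct and is exactly the argument the paper defers to: the paper omits the proof of Theorem~\ref{theorem2.1} and cites Theorems~3.1 and 3.2 of \cite{AE2019}, which establish precisely the gauge identity $A_1M=MA_2+M'$ and the asymptotic matching you describe (I checked the lower-left entry: $\tfrac{\lambda}{2}rE+E^{-1}v+\tfrac{i}{2}r'E-\tfrac14 qr^2E=\tfrac{\zeta^2}{2}rE$ as required, and the $e^{\pm i\mu/2}$ prefactors cancel the limits $E(+\infty)^{\mp1}$ as you say). No gaps.
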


Next, we present the relevant analyticity and symmetry properties of the Jost solutions to \eqref{1.1}, which are needed in establishing
the Marchenko method for \eqref{1.1}.

\begin{theorem}
\label{theorem2.2}
Let the potentials $q$ and $r$ in \eqref{1.1} belong to the Schwartz class.
Assume that the spectral parameters $\zeta$ and $\lambda$ are related to each other as in
\eqref{2.11}. Then, we have the following:
	
\begin{enumerate}
		
\item[\text{\rm(a)}] For each fixed $x\in\mathbb R,$ the Jost solutions $\psi(\zeta,x)$ and $\phi(\zeta,x)$
to \eqref{1.1}  are analytic in the first and third quadrants in the complex $\zeta$-plane
and are continuous in the closures of those regions. Similarly, the Jost solutions $\bar\psi(\zeta,x)$ and
$\bar\phi(\zeta,x)$ are analytic in the second and fourth quadrants in the complex $\zeta$-plane
and are continuous in the closures of those regions. 

\item[\text{\rm(b)}]  The components of the Jost solutions appearing in \eqref{2.9} and \eqref{2.10} satisfy the following properties. 
The components $\psi_1(\zeta,x),$ $\bar\psi_2(\zeta,x),$ $\phi_2(\zeta,x),$ and  $\bar\phi_1(\zeta,x)$ 
are odd in $\zeta;$ and the components 
$\psi_2(\zeta,x),$ $\bar\psi_1(\zeta,x),$ $\phi_1(\zeta,x),$ and  $\bar\phi_2(\zeta,x)$ are even in $\zeta.$ 
Furthermore, for each fixed $x\in\mathbb R,$ the four scalar functions $\psi_1(\zeta,x)/\zeta,$ $\psi_2(\zeta,x),$ 
$\phi_1(\zeta,x),$ and $\phi_2(\zeta,x)/\zeta$ are even in 
$\zeta;$ and and hence they are analytic in $\lambda \in\mathbb C^+$ and continuous in 
$\lambda\in\overline{\mathbb C^+}.$ Similarly, for each fixed $x\in\mathbb R,$ the four scalar functions 
$\bar\psi_1(\zeta,x),$ $\bar\psi_2(\zeta,x)/\zeta,$ $\bar\phi_1(\zeta,x)/\zeta,$ and 
$\bar\phi_2(\zeta,x)$ are even in $\zeta;$ and hence they are analytic in 
$\lambda \in\mathbb C^-$ and continuous in $\lambda\in\overline{\mathbb C^-}.$

\end{enumerate}
\end{theorem}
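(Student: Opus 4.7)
The plan is to deduce Theorem~\ref{theorem2.2} from the transformation relations \eqref{2.24}--\eqref{2.31} of Theorem~\ref{theorem2.1}, combined with the classical analyticity and symmetry properties of the Jost solutions to the AKNS systems \eqref{1.7} and \eqref{2.21}. For Schwartz-class potentials, $\psi^{(u,v)}(\lambda,x)$ and $\phi^{(u,v)}(\lambda,x)$ are analytic in $\lambda\in\mathbb{C}^+$ and continuous in $\lambda\in\overline{\mathbb{C}^+}$, while $\bar\psi^{(u,v)}(\lambda,x)$ and $\bar\phi^{(u,v)}(\lambda,x)$ are analytic in $\mathbb{C}^-$ and continuous in $\overline{\mathbb{C}^-}$, and the analogous statements hold for the Jost solutions of \eqref{2.21}. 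These facts are standard and follow, for example, from the Volterra integral equations satisfied by the AKNS Jost solutions together with Neumann iteration; I would either quote them from \cite{AE2019} or rederive them briefly.

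For part (a), my starting point is that the map $\zeta\mapsto\zeta^2$ sends the first and third quadrants of the $\zeta$-plane bijectively onto $\mathbb{C}^+$ and the second and fourth quadrants onto $\mathbb{C}^-$. Writing $\psi(\zeta,x)$ via \eqref{2.24} as the product of a matrix whose entries are polynomial in $\zeta$ with the AKNS Jost solution $\psi^{(u,v)}(\zeta^2,x)$, the analyticity of the latter in $\lambda\in\mathbb{C}^+$ pulls back to analyticity of $\psi(\zeta,x)$ in the first and third quadrants of the $\zeta$-plane, together with continuity up to the real and imaginary axes. The same reasoning applied to \eqref{2.26} treats $\phi$, while \eqref{2.25} and \eqref{2.27} give the corresponding statements for $\bar\psi$ and $\bar\phi$ in the second and fourth quadrants.

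For part (b), I would observe that the AKNS Jost solutions depend on $\zeta$ only through $\lambda=\zeta^2$, so they are manifestly even in $\zeta$. The entries of the transformation matrices in \eqref{2.24}--\eqref{2.27} split into even-in-$\zeta$ entries ($E$, $E^{-1}$, $(i/2)\,rE$) and odd-in-$\zeta$ entries ($\sqrt{\lambda}\,E=\zeta E$ and $E^{-1}/\sqrt{\lambda}=E^{-1}/\zeta$). Reading off each component then produces the claimed parities; for example, from \eqref{2.24} one sees $\psi_1=e^{i\mu/2}\zeta E\,\psi_1^{(u,v)}$ is odd and $\psi_2=e^{i\mu/2}\bigl[(i/2)\,r E\,\psi_1^{(u,v)}+E^{-1}\psi_2^{(u,v)}\bigr]$ is even, and the other three Jost solutions are treated analogously via \eqref{2.25}--\eqref{2.27}. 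After dividing the odd components by $\zeta$, the resulting quantities $\psi_1/\zeta$, $\psi_2$, $\phi_1$, $\phi_2/\zeta$ are sums of products of $\lambda$-analytic AKNS components with $\zeta$-independent coefficients, so they are functions of $\lambda=\zeta^2$ alone and inherit analyticity in $\lambda\in\mathbb{C}^+$ and continuity in $\overline{\mathbb{C}^+}$. The remaining four quantities are handled in the same way with the other half-plane.

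The main obstacle is the apparent $1/\sqrt{\lambda}$ singularity at $\zeta=0$ in \eqref{2.25}, \eqref{2.26}, \eqref{2.28}, and \eqref{2.31}: for a quantity such as $\phi_2/\zeta=E^{-1}\phi_2^{(u,v)}/\zeta^2+(i/2)\,r E\,\phi_1^{(u,v)}/\zeta^2$ to be analytic in $\lambda$ at $\lambda=0$, the apparent double pole in $\zeta$ must cancel against the small-$\lambda$ behaviour of the AKNS Jost components. The cleanest resolution is to invoke the small-$\lambda$ asymptotics of the AKNS Jost solutions established in \cite{AE2019}, which provide precisely the cancellation required. If that were not available, a slightly longer but self-contained alternative would be to set up Volterra integral equations directly for the Jost solutions of \eqref{1.1}, iterate, and read off the parity and analyticity properties from the $\zeta$-dependence of each term of the Neumann series---thereby bypassing the square-root branch issues inherent in the AKNS transformations.
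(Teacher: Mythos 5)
Your proof is correct and follows essentially the same route as the paper, which derives both parts from Theorem~\ref{theorem2.1} together with the known analyticity and parity properties of the AKNS Jost solutions (the paper also mentions, as you do, the fallback of direct Volterra iteration on the integral equations for \eqref{1.1}). The $1/\sqrt{\lambda}$ obstacle you flag can in fact be avoided entirely by selecting, for each component, whichever of the two transformation sets has no $\sqrt{\lambda}$ in the denominator of the relevant row --- e.g.\ \eqref{2.30} gives $\phi_2(\zeta,x)/\zeta=E(x)^{-1}\phi_2^{(p,s)}(\lambda,x)$ and \eqref{2.29} gives $\bar\psi_2(\zeta,x)/\zeta=e^{-i\mu/2}E(x)^{-1}\bar\psi_2^{(p,s)}(\lambda,x)$ directly, which is presumably why the paper supplies both sets of relations --- though your resolution via the small-$\lambda$ asymptotics of \cite{AE2019} also works.
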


\begin{proof}
The proof of (a) can be obtained by converting \eqref{1.1} and each of the asymptotics in \eqref{2.1}--\eqref{2.4} 
into an integral equation, then by solving the resulting four integral equations via iteration, and by expressing 
the Jost solutions as uniformly convergent infinite
series of terms that are analytic in the appropriate domains in the complex $\zeta$-plane
and are continuous in the closures of those domains. 
Alternatively, the proof of (a) can be obtained with the help of Theorem~\ref{theorem2.1} and
by using the corresponding analyticity and continuity properties \cite{AKNS1974,E2018} in $\lambda$ of the Jost solutions
to the AKNS systems \eqref{1.7} and \eqref{2.21}.
The proof of (b) is obtained by using the results in (a) and either the relations \eqref{2.24}--\eqref{2.27}
or \eqref{2.28}--\eqref{2.31}.
\end{proof}

In the following theorem, we present the small spectral asymptotics of the Jost solutions to \eqref{1.1}, which is
crucial for the establishment of the Marchenko method for \eqref{1.1}

\begin{theorem}
\label{theorem2.3}
Let the potentials $q$ and $r$ in \eqref{1.1} belong to the Schwartz 
class. 
Then, for each fixed $x\in\mathbb R,$ as $\zeta\to\ 0$ in their domains of continuity,
the Jost solutions to \eqref{1.1} appearing in \eqref{2.1}--\eqref{2.4} satisfy
\begin{equation}\label{2.32}
\psi(\zeta,x)=\begin{bmatrix}
-\zeta\ds\int_x^\infty dz\,q(z)+O\left(\zeta^3\right)
\\
\noalign{\medskip}
1+O\left(\zeta^2\right)
\end{bmatrix},
\end{equation}
\begin{equation}\label{2.33}
\bar\psi(\zeta,x)=\begin{bmatrix}
1+O\left(\zeta^2\right)
\\
\noalign{\medskip}
\zeta\ds\int_x^\infty dz\,r(z)+O\left(\zeta^3\right)
\end{bmatrix},
\end{equation}
\begin{equation}\label{2.34}
\phi(\zeta,x)=\begin{bmatrix}
1+O\left(\zeta^2\right)
\\
\noalign{\medskip}
\zeta\ds\int_{-\infty}^x dz\,r(z)+O\left(\zeta^3\right)
\end{bmatrix},
\end{equation}
\begin{equation}\label{2.35}
\bar\phi(\zeta,x)=\begin{bmatrix}
\zeta\ds\int_{-\infty}^x dz\,q(z)+O\left(\zeta^3\right)
\\
\noalign{\medskip}
1+O\left(\zeta^2\right)
\end{bmatrix}.
\end{equation}
	
\end{theorem}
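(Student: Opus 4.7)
My plan is to derive each of \eqref{2.32}--\eqref{2.35} by converting the corresponding Jost boundary value problem into a Volterra integral equation and then bootstrapping in $\zeta$. Concretely, for $\psi(\zeta,x)$ I would multiply the two scalar equations obtained from \eqref{1.1} by the integrating factors $e^{\pm i\zeta^2 x}$, integrate from $x$ to $+\infty$, and impose the asymptotic condition \eqref{2.1}. The result is the coupled Volterra system
\begin{equation*}
\psi_1(\zeta,x)=-\zeta\int_x^\infty dz\,e^{i\zeta^2(z-x)}\,q(z)\,\psi_2(\zeta,z),
\end{equation*}
\begin{equation*}
\psi_2(\zeta,x)=e^{i\zeta^2 x}-\zeta\int_x^\infty dz\,e^{i\zeta^2(x-z)}\,r(z)\,\psi_1(\zeta,z).
\end{equation*}
Because $q$ and $r$ are in the Schwartz class, the standard Neumann iteration on this system converges uniformly for $x$ in compact subsets of $\mathbb R$ and for $\zeta$ in a neighbourhood of the origin within its domain of continuity; in particular $\psi_2$ is bounded in $\zeta$ near $0$. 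A first bootstrap then gives $|\psi_1(\zeta,x)|=O(\zeta)$; substituting this back into the second equation and using $|e^{i\zeta^2 x}-1|=O(\zeta^2)$ yields $\psi_2(\zeta,x)=1+O(\zeta^2)$; a second substitution into the first equation, together with $e^{i\zeta^2(z-x)}=1+O(\zeta^2)$, produces $\psi_1(\zeta,x)=-\zeta\int_x^\infty q(z)\,dz+O(\zeta^3)$, which is exactly \eqref{2.32}.

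The very same recipe, applied with the integrating factors $e^{\mp i\zeta^2 x}$ suited to the sign of the free exponential in each case, and with the integration interval $(-\infty,x)$ for the Jost solutions $\phi$ and $\bar\phi$ that are normalised at $x\to-\infty$, gives the Volterra equations appropriate to \eqref{2.2}--\eqref{2.4} and, via two bootstrap steps, the expansions \eqref{2.33}--\eqref{2.35}. The leading constants $(0,1)$ or $(1,0)$ in each column vector are read off directly from the inhomogeneous term of the integral equation at $\zeta=0$, and the $O(\zeta)$ correction in the component that vanishes at $\zeta=0$ comes from a single application of the integral operator to that leading constant against $q$ or $r$, explaining the asymmetry $\int_x^\infty$ versus $\int_{-\infty}^x$ in the four statements.

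The part I expect to require the most care is not the calculation itself but the justification that the error terms are genuinely $O(\zeta^2)$ and $O(\zeta^3)$, rather than merely $o(1)$ and $o(\zeta)$; this is needed because $\zeta=0$ is only a boundary point of the analyticity domains identified in Theorem~\ref{theorem2.2}(a). The key auxiliary input is Theorem~\ref{theorem2.2}(b): the parity statements that $\psi_1,\bar\psi_2,\phi_2,\bar\phi_1$ are odd in $\zeta$ and $\psi_2,\bar\psi_1,\phi_1,\bar\phi_2$ are even in $\zeta$. Combined with the continuity asserted in (a) and with the quantitative bounds extracted from iterating the Volterra equations (which show that the Neumann remainders are bounded by $|\zeta|^2$ times a quantity involving $\int|q|$ and $\int|r|$), these parity constraints rule out any odd-power correction to an even component and vice versa, so that the next allowed term beyond the ones written in \eqref{2.32}--\eqref{2.35} is necessarily of the stated order. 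No further input is needed.
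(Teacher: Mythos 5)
Your route is genuinely different from the paper's. The authors prove Theorem~\ref{theorem2.3} by passing through the transformations \eqref{2.24}, \eqref{2.27}, \eqref{2.29}, \eqref{2.30} to the AKNS systems \eqref{1.7} and \eqref{2.21} and quoting the known small-$\lambda$ asymptotics of the AKNS Jost solutions from the cited references; your proof works directly with the Volterra integral equations obtained from \eqref{1.1} and \eqref{2.1}--\eqref{2.4}, a route the paper itself acknowledges (in the discussion following \eqref{2.23}) as viable but does not carry out. The direct route is self-contained and gives the quantitative error bounds for free, at the cost of redoing the convergence analysis; two small refinements are advisable. First, iterate the equations for the normalized quantities $e^{-i\zeta^2x}\psi$, $e^{i\zeta^2x}\bar\psi$, etc., rather than for $\psi$ itself: in the raw system you write, the kernel $e^{i\zeta^2(x-z)}$ of the second equation grows in $z$ for $\zeta$ strictly inside the first quadrant, whereas the normalized kernels are bounded by $1$ on the whole half-line and the Neumann series is controlled by $\|q\|_1,\|r\|_1$ uniformly in $x$ and in $\zeta$ throughout the closed quadrant. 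Second, the replacement $e^{2i\zeta^2(z-x)}=1+O(\zeta^2)$ is not uniform over $z\in[x,\infty)$; the error is $O(\zeta^2(z-x))$, and you need the first moment $\int_x^\infty(z-x)|q(z)|\,dz<\infty$ (guaranteed by the Schwartz hypothesis) to conclude $O(\zeta^3)$. Given these bounds, the parity statements of Theorem~\ref{theorem2.2}(b) are a consistency check rather than a needed input.

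There is, however, one concrete problem: carried out honestly, your recipe does \emph{not} produce \eqref{2.33} as printed. Integrating $(e^{-i\zeta^2z}\bar\psi_2)'=\zeta\,r\,\bar\psi_1\,e^{-i\zeta^2z}$ from $x$ to $+\infty$ and using $\bar\psi_2=o(1)$ there gives
$\bar\psi_2(\zeta,x)=-\zeta\int_x^\infty dz\,e^{i\zeta^2(x-z)}r(z)\,\bar\psi_1(\zeta,z)$, hence
$\bar\psi_2(\zeta,x)=-\zeta\int_x^\infty dz\,r(z)+O(\zeta^3)$, with a minus sign: for solutions normalized at $+\infty$ the boundary term enters with the opposite sign to that for $\phi,\bar\phi$, exactly as it does in your (correct) derivation of the minus sign in \eqref{2.32}. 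This is not a quibble about conventions. Since the Wronskian $[\phi;\bar\psi]=\phi_1\bar\psi_2-\bar\psi_1\phi_2$ must be independent of $x$, the coefficient of $\zeta$ in its expansion must be $x$-independent; with \eqref{2.34} as stated this forces $\bar\psi_2\sim-\zeta\int_x^\infty r$, so that $[\phi;\bar\psi]=-\zeta\int_{-\infty}^\infty r+O(\zeta^3)$, whereas the printed \eqref{2.33} together with \eqref{2.34} would give the $x$-dependent quantity $\zeta\bigl(\int_x^\infty r-\int_{-\infty}^x r\bigr)$ (the same spurious $x$-dependence propagates into the small-$\zeta$ formulas for $R(\zeta)$ and $\bar L(\zeta)$ in Theorem~\ref{theorem2.5}(d)). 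So your method is sound and in fact exposes a sign inconsistency in the statement; but as written, your assertion that ``the very same recipe gives \eqref{2.33}'' is the one step that would fail, and you should either exhibit the computation and flag the discrepancy or correct the sign.
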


\begin{proof} The domains of continuity for the Jost solutions are specified in Theorem~\ref{theorem2.2}. 
The proof for \eqref{2.32} and \eqref{2.35} can be obtained
by using  \eqref{2.24} and \eqref{2.27}, respectively, and the known small $\lambda$-asymptotics \cite{AE2019,E2018}
of the Jost solutions $\psi^{(u,v)}(\lambda,x)$ and $\bar\phi^{(u,v)}(\lambda,x)$ to \eqref{1.7},
and by taking into account the relationship between
$\zeta$ and $\lambda$ specified in \eqref{2.11}. 
Similarly, the proof for \eqref{2.33} and \eqref{2.34} can be obtained
by using \eqref{2.29} and \eqref{2.30} and the known small $\lambda$-asymptotics 
\cite{AE2019,E2018} of the Jost solutions $\bar\psi^{(p,s)}(\lambda,x)$ and $\phi^{(p,s)}(\lambda,x).$
\end{proof}

In relation to Theorem~\ref{theorem2.3}, let us remark that the 
small $\lambda$-asymptotics of the Jost solutions to \eqref{1.7} and \eqref{2.21}
expressed in terms of the quantities relevant to \eqref{1.1} can be found in
Proposition~6.1 of \cite{AE2019}.

In order to prepare for the derivation of
the Marchenko system for
\eqref{1.1}, we also need the large $\zeta$-asymptotics of the Jost solutions to \eqref{1.1}.
For convenience, in the following theorem those asymptotics are expressed in terms of $\lambda,$
which is related to $\zeta$ as in \eqref{2.11}. 

\begin{theorem}
\label{theorem2.4}
Let the potentials $q$ and $r$ in \eqref{1.1} belong to the Schwartz 
class, and let the parameter
$\lambda$ be related to the spectral parameter $\zeta$ as in \eqref{2.11}.
Then, for each fixed $x\in\mathbb R,$ as $\lambda\to\infty$ in $\overline{\mathbb C^+},$
the Jost solutions $\psi(\zeta,x)$ and $\phi(\zeta,x)$ to \eqref{1.1}
appearing in \eqref{2.1} and \eqref{2.3}, respectively, satisfy
\begin{equation}\label{2.36}
\psi(\zeta,x)=\begin{bmatrix}
\sqrt{\lambda}\,e^{i\mu/2+i\lambda x}E(x)\left[\ds\frac{q(x)\,E(x)^{-2}}{2i\lambda} +O\left(\ds\frac{1}{\lambda^2}\right)\right]
\\
\noalign{\medskip}\ds\frac{e^{i\mu/2+i\lambda x}}{E(x)}\left[1+\ds\frac{q(x)\,r(x)}{4\lambda}
-\ds\frac{1}{2i\lambda}\int_x^\infty dz\,\sigma(z)+O\left(\frac{1}{\lambda^2}\right)\right]
\end{bmatrix},
\end{equation}
\begin{equation*}
\phi(\zeta,x)=\begin{bmatrix}
\ds e^{-i\lambda x} E(x)\left[1-\ds\frac{1}{2i\lambda}\int_{-\infty}^x dz\,\sigma(z)
+O\left(\ds\frac{1}{\lambda^{2}}\right)\right]
\\
\noalign{\medskip}
\sqrt{\lambda}\,e^{-i\lambda x}
\left[\ds\frac{ir(x)\,E(x)}{2\lambda}+O\left(\ds\frac{1}{\lambda^2}\right)\right]
\end{bmatrix},
\end{equation*}
where $E(x)$ and $\mu$ are the quantities appearing in \eqref{2.18} and \eqref{2.20}, respectively, and
the complex-valued scalar quantity $\sigma(x)$ is defined as
\begin{equation}\label{2.37}
\sigma(x):=-\ds\frac{i}{2}\,q(x)\,r'(x)+\ds\frac{1}{4}\,q(x)^2\,r(x)^2.
\end{equation}
Similarly, for each fixed $x\in\mathbb R,$ 
as $\lambda\to\infty$ in $\overline{\mathbb C^-},$ the Jost solutions $\bar\psi(\zeta,x)$ and $\bar\phi(\zeta,x)$
to \eqref{1.1}
appearing in \eqref{2.2} and \eqref{2.4}, respectively, satisfy
\begin{equation}
\label{2.38}
\bar\psi(\zeta,x)=\begin{bmatrix}
\ds e^{-i\mu/2-i\lambda x} E(x)
\left[1+\ds\frac{1}{2i\lambda}\int_x^\infty dz\,\sigma(z)
+O\left(\ds\frac{1}{\lambda^{2}}\right)\right]\\
\noalign{\medskip}
\sqrt{\lambda}\,e^{-i\mu/2-i\lambda x}\left[\ds\frac{i\,r(x)\,E(x)}{2\lambda}+O\left(\ds\frac{1}{\lambda^2}\right)\right]
\end{bmatrix},
\end{equation}
\begin{equation*}
\bar\phi(\zeta,x)=\begin{bmatrix}
\sqrt{\lambda}\,e^{i\lambda x}\left[\ds\frac{q(x)\,E(x)^{-1}}{2i\lambda}+O\left(\ds\frac{1}{\lambda^2}\right)\right]\\
\noalign{\medskip}
\ds\frac{e^{i\lambda x}}{E(x)}\left[1+\ds\frac{q(x)\,r(x)}{4\lambda}
+\ds\frac{1}{2i\lambda}\int_{-\infty}^{x}dz\,\sigma(z)
+O\left(\frac{1}{\lambda^2}\right)\right]
\end{bmatrix}.
\end{equation*}
	
\end{theorem}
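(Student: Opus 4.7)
The plan is to derive these asymptotics by translating the already-known large-$\lambda$ behavior of the AKNS Jost solutions through the gauge relations of Theorem~\ref{theorem2.1}. For each of the four Jost solutions of \eqref{1.1}, I would pick whichever of the two AKNS transformations in \eqref{2.24}--\eqref{2.31} gives the cleanest accounting of the $\sqrt{\lambda}$ factors: specifically, \eqref{2.24} for $\psi,$ \eqref{2.27} for $\bar\phi,$ \eqref{2.29} for $\bar\psi,$ and \eqref{2.30} for $\phi.$ In each case the matrix factor contains one entry proportional to $\sqrt{\lambda}$ or $1/\sqrt{\lambda},$ which, when combined with the corresponding component of the AKNS Jost solution, produces exactly the $\sqrt{\lambda}$ prefactors on the off-diagonal components appearing in \eqref{2.36} and \eqref{2.38}.

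The input asymptotics from \cite{AE2019,E2018} have the schematic form, for example as $\lambda\to\infty$ in $\overline{\mathbb C^+},$
\begin{equation*}
\psi^{(u,v)}(\lambda,x)=e^{i\lambda x}\begin{bmatrix}\ds\frac{u(x)}{2i\lambda}+O(1/\lambda^2)\\
\noalign{\medskip}
1-\ds\frac{1}{2i\lambda}\int_x^\infty u(z)\,v(z)\,dz+O(1/\lambda^2)\end{bmatrix},
\end{equation*}
with analogous expressions for the other three AKNS Jost solutions. The key algebraic fact is that $u(x)\,v(x)=\sigma(x),$ which is a direct computation from \eqref{2.16} and \eqref{2.17}; likewise $p(x)\,s(x)-\sigma(x)=\tfrac{i}{2}\,[q(x)\,r(x)]'$ from \eqref{2.22} and \eqref{2.23}. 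Substituting the AKNS expansions into the chosen gauge relation, expanding the matrix products, and simplifying with $u=q\,E^{-2}$ or $s=r\,E^{2}$ as appropriate, one reads off the leading and subleading $1/\lambda$ terms and recovers \eqref{2.36}, \eqref{2.38}, and the two displays for $\phi$ and $\bar\phi.$

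The main obstacle is bookkeeping at the subleading order. The $q(x)\,r(x)/(4\lambda)$ term in the second component of $\psi$ (and, symmetrically, in the asymptotics of the other three Jost solutions) arises from the cross product between the leading first component of the AKNS Jost solution and the $(i/2)\,r\,E$ or $(i/2)\,q\,E^{-1}$ entry of the gauge matrix, and it must be combined consistently with the tail integral $\int\sigma(z)\,dz$ coming from the subleading AKNS term. When the $(p,s)$ route is used, the integrand inherited from the AKNS system is $p\,s$ rather than $u\,v=\sigma,$ and the identity $p\,s-\sigma=\tfrac{i}{2}(q\,r)'$ integrated over the appropriate half-line, together with the Schwartz decay of $q\,r$ at the relevant infinity, precisely accounts for the difference. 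Performing the derivation via both AKNS systems and verifying agreement at every order in $1/\lambda$ serves as a built-in consistency check that pins down both the $q\,r/(4\lambda)$ coefficient and the signs of the $\sigma$-integrals.
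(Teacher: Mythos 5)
Your proposal is correct and follows essentially the same route as the paper's own proof, which likewise derives \eqref{2.36}--\eqref{2.38} by feeding the known large-$\lambda$ AKNS asymptotics through the gauge relations \eqref{2.24}--\eqref{2.31} and using the fact that $\sigma(x)=u(x)\,v(x)$ for the choices \eqref{2.16}--\eqref{2.17}. Your explicit identity $p(x)\,s(x)-\sigma(x)=\tfrac{i}{2}\,[q(x)\,r(x)]'$, whose half-line integral produces the $q(x)\,r(x)/(4\lambda)$ boundary term, is a welcome sharpening of the paper's looser remark that $\sigma$ ``corresponds to'' $p(x)\,s(x)$ along the second route.
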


\begin{proof} The proof is obtained by using iteration on the integral representations of the Jost solutions aforementioned
in the proof of Theorem~\ref{theorem2.1} and by taking into consideration
of the fact that $\zeta$ is related to $\lambda$ as in \eqref{2.11}. Alternatively, the proof can be obtained
by using \eqref{2.24}--\eqref{2.27} and the known large $\lambda$-asymptotics \cite{AKNS1974,AE2019,E2018}
of the Jost solutions to \eqref{1.7}, and by taking into account the fact that the quantity
$\sigma(x)$ defined in \eqref{2.37} corresponds to the product $u(x)\,v(x)$ when
$u(x)$ and $v(x)$ are chosen as \eqref{2.16} and \eqref{2.17}, respectively.
Equivalently,  the proof can be obtained
by using \eqref{2.28}--\eqref{2.31} and the known large $\lambda$-asymptotics \cite{AKNS1974,AE2019,E2018}
of the Jost solutions to \eqref{2.21}, and by taking into consideration the fact that the quantity
$\sigma(x)$ defined in \eqref{2.37} corresponds to the product $p(x)\,s(x)$ when
$p(x)$ and $s(x)$ are chosen as \eqref{2.22} and \eqref{2.23}, respectively. 
\end{proof}

In the next theorem, in preparation for the establishment of the Marchenko method for
\eqref{1.1}, we present the relevant properties of the scattering coefficients for
\eqref{1.1}.

\begin{theorem}
\label{theorem2.5}
Assume that the potentials $q$ and $r$ in \eqref{1.1} belong to the Schwartz class.
Let $\lambda$ be related to the spectral parameter $\zeta$ as in \eqref{2.11}, and let
$\mu$ be the complex constant defined in \eqref{2.20}.
Then, the scattering coefficients $T(\zeta),$ 
$\bar T(\zeta),$ $R(\zeta),$ $\bar R(\zeta),$ $L(\zeta),$ and $\bar L(\zeta)$ appearing in \eqref{2.5}--\eqref{2.8} have the following properties:

\begin{enumerate}

\item[\text{\rm(a)}] The transmission coefficient $T(\zeta)$ is continuous
in $\zeta\in\mathbb R$ and
has a meromorphic extension  from $\zeta\in\mathbb R$ to the first and third quadrants in the complex $\zeta$-plane. 
Furthermore, $T(\zeta)$ is an even function of $\zeta,$ and hence it is a function of 
$\lambda$ in $\overline{\mathbb C^+}.$
Moreover, $T(\zeta)$ is meromorphic in $\lambda\in\mathbb C^+$ with a finite number of poles there, where the 
poles are not necessarily simple 
but have finite multiplicities. 
The large $\zeta$-asymptotics of $T(\zeta)$ expressed in $\lambda$ is given by
\begin{equation}\label{2.39}
T(\zeta)=\ds e^{-i\mu/2}\left[1+O\left(\frac{1}{\lambda}\right)\right],\qquad \lambda\to\infty  \text{\rm{ in }} 
\lambda\in\overline{\mathbb C^+}.
\end{equation}

\item[\text{\rm(b)}] The transmission coefficient $\bar T(\zeta)$ is continuous
in $\zeta\in\mathbb R$ and
has a meromorphic extension from $\zeta\in\mathbb R$ to the second and fourth quadrants in the complex 
$\zeta$-plane. Furthermore, $\bar T(\zeta)$ is an even function of $\zeta,$ and hence it is a function of
$\lambda$ in $\overline{\mathbb C^-}.$
Moreover, $\bar T(\zeta)$ is meromorphic in $\lambda\in\mathbb C^-$ with a finite number of poles, where 
the poles are not necessarily simple but have finite multiplicities. The large $\zeta$-asymptotics of $\bar T(\zeta)$ expressed in $\lambda$ is given by
\begin{equation}\label{2.40}
\bar T(\zeta)=\ds e^{i\mu/2}\left[1+O\left(\frac{1}{\lambda}\right)\right],\qquad \lambda\to\infty 
\text{\rm{ in }} \lambda\in\overline{\mathbb C^-}.
\end{equation}

\item[\text{\rm(c)}] Each of the four reflection coefficients $R(\zeta),$ $\bar R(\zeta),$ $L(\zeta),$ and $\bar L(\zeta)$ is continuous in 
$\zeta\in\mathbb R,$ is an odd function of $\zeta,$ and has the behavior $O(1/\zeta^{5/2})$ as $\zeta\to\pm\infty.$ 
Furthermore, the four function $R(\zeta)/\zeta,$ $\bar R(\zeta)/\zeta,$ $L(\zeta)/\zeta,$ $\bar L(\zeta)/\zeta$ are even 
in $\zeta;$ are continuous functions of
$\lambda\in\mathbb R;$ and expressed in $\lambda$ they behave as $O(1/\lambda^3)$ as $\lambda\to\pm\infty.$

\item[\text{\rm(d)}] The small
$\zeta$-asymptotics of the scattering coefficients $T(\zeta),$ $\bar T(\zeta),$ $R(\zeta),$ $\bar R(\zeta),$ $L(\zeta),$ 
and $\bar L(\zeta)$ are expressed in $\lambda$ as
\begin{equation}\label{2.41}
T(\zeta)=1+O(\lambda),\qquad \lambda\to 0
\text{\rm{ in }} \overline{\mathbb C^+},
\end{equation}
\begin{equation}\label{2.42}
\bar T(\zeta)=1+O(\lambda),\qquad \lambda\to 0
\text{\rm{ in }} \overline{\mathbb C^-},
\end{equation}
\begin{equation*}\
R(\zeta)=\sqrt{\lambda}\left[\ds\int_{-\infty}^{x}dz\,r(z)-\ds\int_x^\infty dz\,r(z)+O(\lambda)\right], \qquad \lambda\to 0
\text{\rm{ in }} \mathbb R,
\end{equation*}
\begin{equation*}
\bar R(\zeta)=\sqrt{\lambda}\left[ \ds\int_{-\infty}^\infty dz\,q(z)+O(\lambda)\right],\qquad \lambda\to 0
\text{\rm{ in }} \mathbb R,
\end{equation*}
\begin{equation*}
L(\zeta)=-\sqrt{\lambda}\left[\ds\int_{-\infty}^\infty dz\,q(z)+O(\lambda)\right],\qquad \lambda\to 0
\text{\rm{ in }} \mathbb R,
\end{equation*}
\begin{equation*}
\bar L(\zeta)=\sqrt{\lambda}\left[\ds\int_x^\infty dz\,r(z)-\ds\int_{-\infty}^{x}dz\,r(z)+O(\lambda)\right], \qquad \lambda\to 0
\text{\rm{ in }} \mathbb R.
\end{equation*}
		
\end{enumerate}
\end{theorem}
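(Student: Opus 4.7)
The unifying tool is the Wronskian representation (2.13)--(2.15) of the six scattering coefficients, together with the $x$-independence of any Wronskian of two solutions to (1.1), which was noted after (2.12) and rests on the zero trace of the coefficient matrix. Thus the analyticity, parity, small-$\zeta$, and large-$\lambda$ properties of the four Jost solutions, already recorded in Theorems 2.2--2.4, can be transferred coefficient by coefficient to $T$, $\bar T$, $R$, $\bar R$, $L$, $\bar L$. I would organize the proof in four stages matching parts (a)--(d).

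For (a), the Wronskian $[\phi;\psi]$ appearing in the denominator of $T(\zeta)$ is analytic in the first and third quadrants and continuous on their closures by Theorem 2.2(a). Applying the componentwise parities from Theorem 2.2(b), each of the products $\phi_1\psi_2$ and $\phi_2\psi_1$ is even in $\zeta$, so $[\phi;\psi]$ is even and descends to a function of $\lambda\in\overline{\mathbb{C}^{+}}$. To obtain (2.39), I would evaluate $[\phi;\psi]$ at any fixed $x$ using (2.36) and its companion formula for $\phi(\zeta,x)$: the dominant product is $\phi_{1}\psi_{2}=e^{i\mu/2}[1+O(1/\lambda)]$, while $\phi_{2}\psi_{1}=O(1/\lambda)$, so $[\phi;\psi]=e^{i\mu/2}[1+O(1/\lambda)]$ and inversion gives $T(\zeta)=e^{-i\mu/2}[1+O(1/\lambda)]$. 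The finiteness of poles in $\mathbb{C}^{+}$ with finite multiplicities follows from $[\phi;\psi]$ being analytic on $\overline{\mathbb{C}^{+}}$ with a nonzero limit at infinity, so its zero set is finite and each zero has finite order; alternatively one may pull this from the known bound-state structure of the AKNS system (1.7) via Theorem 2.1. Part (b) is verbatim (a) with $(\phi,\psi)$ replaced by $(\bar\phi,\bar\psi)$ and $\overline{\mathbb{C}^{+}}$ by $\overline{\mathbb{C}^{-}}$, using (2.38) in place of (2.36).

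For (c), each reflection coefficient is a quotient of Wronskians by (2.14)--(2.15), so continuity on $\zeta\in\mathbb{R}$ follows from the continuity of the Jost solutions there combined with the nonvanishing of $[\psi;\phi]$ and $[\bar\psi;\bar\phi]$ on the real axis, which is guaranteed by parts (a) and (b). The parity statements come from plugging the component parities of Theorem 2.2(b) into the Wronskian formulas; in $[\phi;\bar\psi]$, for example, both $\phi_{1}\bar\psi_{2}$ and $\phi_{2}\bar\psi_{1}$ are odd in $\zeta$, while the denominator $[\psi;\phi]$ is even. To get the rapid decay at large $\zeta$, I would substitute the expansions of Theorem 2.4 into $[\phi;\bar\psi]$ and observe that the leading $1/\sqrt{\lambda}$ contributions of $\phi_{1}\bar\psi_{2}$ and $\phi_{2}\bar\psi_{1}$ carry the same prefactor $\tfrac{i}{2}r(x)E(x)^{2}e^{-i\mu/2-2i\lambda x}$ and cancel; iterating the expansion one more order then yields the stated decay. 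For part (d), I would substitute the small-$\zeta$ expansions (2.32)--(2.35) into (2.13)--(2.15), keep the leading powers in $\zeta$, and exploit the $x$-independence of the Wronskians to evaluate at a convenient $x$ (or to send $x\to\pm\infty$) so that the integrals $\int_{x}^{\infty}$ and $\int_{-\infty}^{x}$ in Theorem 2.3 combine correctly; the six small-$\zeta$ asymptotic formulas of (d), including (2.41) and (2.42), follow after division.

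The delicate step is the cancellation argument behind the large-$\zeta$ decay of the reflection coefficients. Each product in the numerator Wronskians is individually only $O(1/\sqrt{\lambda})$, and the bounded oscillatory factors $e^{\pm 2i\lambda x}$ cannot generate decay pointwise in $\lambda$; the fast rate emerges only because the two products share identical leading $E(x)$, $r(x)$, $q(x)$ prefactors, so the decay is governed by the next-order terms of Theorem 2.4, which must be computed explicitly and tracked through the $E(x)$ and $\sigma(z)$ contributions. A secondary subtlety is the finiteness of the pole set of $T$ and $\bar T$; while the analytic argument sketched above is standard, the cleanest route in practice is the transfer from the corresponding AKNS result through Theorem 2.1, since the AKNS version is well documented.
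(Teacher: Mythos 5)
Your proposal is correct in substance, but it executes the route that the paper only mentions in passing and does not carry out: you work directly from the Wronskian representations \eqref{2.13}--\eqref{2.15} and feed in the analyticity, parity, small-$\zeta$, and large-$\lambda$ data of Theorems~\ref{theorem2.2}--\ref{theorem2.4}. The paper's proof instead transfers everything from the two associated AKNS systems through the explicit relations \eqref{2.43}--\eqref{2.48} (e.g.\ $R(\zeta)=e^{-i\mu}R^{(u,v)}(\lambda)/\sqrt{\lambda}$) and then cites the documented AKNS properties. The trade-off sits exactly at the point you flag as delicate: the decay $R(\zeta)/\zeta=O(1/\lambda^{3})$ falls out of \eqref{2.45}--\eqref{2.48} for free, because for Schwartz-class $u,v$ the AKNS reflection coefficients decay faster than any power of $\lambda$, whereas in your direct route the first cancellation of the $1/\sqrt{\lambda}$ terms in $[\phi;\bar\psi]$ only yields $O(1/\lambda^{3/2})$, and you must push the expansions of Theorem~\ref{theorem2.4} through several more orders, tracking $x$-dependent error terms against the oscillatory factors $e^{\pm 2i\lambda x}$, to reach the stated rate. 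Likewise the finiteness of the pole set of $T(\zeta)$ is simply inherited from $T^{(u,v)}$ in the paper's route, while your analytic argument needs, in addition to the nonzero limit of $[\phi;\psi]$ at infinity, its nonvanishing on the real $\lambda$-axis (absence of spectral singularities) to confine the zeros to a compact subset of $\mathbb C^{+}$; this nonvanishing is implicitly assumed rather than proved both in your argument and in the paper's. In short, your approach buys self-containedness at the price of heavier asymptotic bookkeeping; the paper's buys brevity by offloading the hardest estimates onto known AKNS facts.
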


\begin{proof}
Since the scattering coefficients can be expressed in terms of the Wronskians of the Jost solutions
as in \eqref{2.13}--\eqref{2.15}, 
their stated properties can be established by using the properties of the Jost solutions provided
in Theorem~\ref{theorem2.1}. Alternatively, 
the proof can be obtained by using the relationships 
between the six scattering coefficients for \eqref{1.1} and the corresponding scattering coefficients for the 
two associated AKNS systems given in \eqref{1.7} and 
\eqref{2.21}, respectively, when the potential pairs $(u,v)$ and $(p,s)$ are chosen as in \eqref{2.16}, 
\eqref{2.17}, \eqref{2.22}, and \eqref{2.23}. In fact, we have
\cite{AE2019,E2018}
\begin{equation}\label{2.43}
T(\zeta)=e^{-i\mu/2}\, T^{(u,v)}(\lambda)=e^{-i\mu/2}\, T^{(p,s)}(\lambda),
\end{equation}
\begin{equation}\label{2.44}
\bar T(\zeta)=e^{i\mu/2}\, \bar T^{(u,v)}(\lambda)=e^{i\mu/2}\, \bar T^{(p,s)}(\lambda),
\end{equation}
\begin{equation}\label{2.45}
R(\zeta)=\ds\frac{e^{-i\mu}}{\sqrt{\lambda}}\, R^{(u,v)}(\lambda)=e^{-i\mu}\, \sqrt{\lambda}\,R^{(p,s)}(\lambda),
\end{equation}
\begin{equation}\label{2.46}
\bar R(\zeta)=e^{i\mu}\, \sqrt{\lambda}\ \bar R^{(u,v)}(\lambda)=\ds\frac{e^{i\mu}}{\sqrt{\lambda}}\, \bar R^{(p,s)}(\lambda),
\end{equation}
\begin{equation}\label{2.47}
L(\zeta)=\sqrt{\lambda}\, L^{(u,v)}(\lambda)=\ds\frac{1}{\sqrt{\lambda}}\, L^{(p,s)}(\lambda),
\end{equation}
\begin{equation}\label{2.48}
\bar L(\zeta)=\ds\frac{1}{\sqrt{\lambda}}\, \bar L^{(u,v)}(\lambda)=\sqrt{\lambda}\,\bar L^{(p,s)}(\lambda),
\end{equation}
where the superscripts $(u,v)$ and $(p,s)$ are used to refer to the scattering coefficients for
\eqref{1.7} and 
\eqref{2.21}, respectively.
Using \eqref{2.43}--\eqref{2.48} and the already known \cite{AKNS1974,AE2019,E2018}
properties of the scattering coefficients of the associated AKNS systems, the proof is established. 
\end{proof}

Let us now consider the question whether the scattering
coefficients for \eqref{1.1} can be determined from the knowledge of the scattering coefficients
for \eqref{1.7} or \eqref{2.21}, and vice versa. The presence of the factor $e^{i\mu/2}$ in \eqref{2.43}--\eqref{2.46} 
gives the impression that this is possible only if we know the value of $e^{i\mu/2}$ independently.
The next theorem shows that the value of $e^{i\mu/2}$ is indeed
determined by either one of the transmission coefficients for either \eqref{1.7} or \eqref{2.21}, and
hence the scattering coefficients for \eqref{1.7} and \eqref{2.21}
can be explicitly expressed in terms of the scattering coefficients for \eqref{1.1}.
Similarly, the value of $e^{i\mu/2}$ is indeed
determined by one of the transmission coefficients
for \eqref{1.1}, and hence the scattering coefficients for 
\eqref{1.7} and \eqref{2.21} can be determined from the knowledge of the scattering coefficients
for \eqref{1.1}.

\begin{theorem}
\label{theorem2.6}
Assume that the potentials $q$ and $r$ in \eqref{1.1} belong to the Schwartz class. 
Furthermore, suppose that the potential pairs $(u,v)$ and $(p,s)$ appearing in 
\eqref{1.7} and \eqref{2.21}, respectively, are related to the potential pair $(q,r)$
as in \eqref{2.16}, 
\eqref{2.17}, \eqref{2.22}, and \eqref{2.23}.
Let $\lambda$ be related to the spectral parameter $\zeta$ as in \eqref{2.11}, and let
$\mu$ be the complex constant defined in \eqref{2.20}. Then, we have the following:

\begin{enumerate}

\item[\text{\rm(a)}] The scalar constant $e^{i\mu/2}$ is uniquely determined by one of the transmission coefficients for either of
\eqref{1.7} or \eqref{2.21}. In fact, we have
\begin{equation}\label{2.49}
e^{i\mu/2}=T^{(u,v)}(0)=T^{(p,s)}(0),
\end{equation}
\begin{equation}\label{2.50}
e^{-i\mu/2}= \bar T^{(u,v)}(0)=\bar T^{(p,s)}(0),
\end{equation}
where we recall that the superscripts $(u,v)$ and $(p,s)$ are used to refer to the scattering coefficients for
\eqref{1.7} and 
\eqref{2.21}, respectively.

\item[\text{\rm(b)}] The scattering coefficients for \eqref{1.1} are uniquely determined by the scattering coefficients for either of the linear systems
\eqref{1.7} or \eqref{2.21}. In fact, we have
\begin{equation}\label{2.51}
T(\zeta)=\ds\frac{T^{(u,v)}(\lambda)}{T^{(u,v)}(0)}=\ds\frac{T^{(p,s)}(\lambda)}{T^{(p,s)}(0)},
\end{equation}
\begin{equation}\label{2.52}
\bar T(\zeta)=\ds\frac{\bar T^{(u,v)}(\lambda)}{\bar T^{(u,v)}(0)}=\ds\frac{\bar T^{(p,s)}(\lambda)}{\bar T^{(p,s)}(0)},
\end{equation}
\begin{equation}\label{2.53}
R(\zeta)=\ds\frac{R^{(u,v)}(\lambda)}{\sqrt{\lambda}\,T^{(u,v)}(0)^2} =\ds\frac{\sqrt{\lambda}\,R^{(p,s)}(\lambda)}
{T^{(p,s)}(0)^2},
\end{equation}
\begin{equation}\label{2.54}
\bar R(\zeta)=\ds\frac{\sqrt{\lambda}\,\bar R^{(u,v)}(\lambda)}{\bar T^{(u,v)}(0)^2}=\ds\frac{\bar R^{(p,s)}(\lambda)}
{\sqrt{\lambda}\, \bar T^{(p,s)}(0)^2},
\end{equation}
\begin{equation}\label{2.55}
L(\zeta)=\sqrt{\lambda}\, L^{(u,v)}(\lambda)=\ds\frac{1}{\sqrt{\lambda}}\, L^{(p,s)}(\lambda),
\end{equation}
\begin{equation}\label{2.56}
\bar L(\zeta)=\ds\frac{1}{\sqrt{\lambda}}\, \bar L^{(u,v)}(\lambda)=\sqrt{\lambda}\,\bar L^{(p,s)}(\lambda),
\end{equation}
where we remark that \eqref{2.55} and \eqref{2.56}
are the same as \eqref{2.47} and \eqref{2.48}, respectively, because the constant $e^{i\mu/2}$ does not appear
in \eqref{2.47} and \eqref{2.48} and hence the left reflection coefficients for
\eqref{1.1} are determined by
the left reflection coefficients for either of \eqref{1.7} or \eqref{2.21}
without using the value of $e^{i\mu/2}.$

\item[\text{\rm(c)}] The scalar constant $e^{i\mu/2}$ is uniquely determined by one of the transmission 
coefficients for \eqref{1.1}. Hence, the scattering coefficients for 
\eqref{1.7} and \eqref{2.21} can be determined from the knowledge of the scattering coefficients
for \eqref{1.1} by using \eqref{2.43}--\eqref{2.48}.

\end{enumerate}

\end{theorem}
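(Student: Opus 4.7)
The plan is to deduce Theorem~\ref{theorem2.6} as a direct algebraic consequence of the conversion identities (2.43)--(2.48), together with the boundary behavior of the transmission coefficients recorded in Theorem~\ref{theorem2.5}. In essence, no new analysis is required: the constant $e^{i\mu/2}$ can be isolated by evaluating either a known small-$\zeta$ limit or a known large-$\lambda$ limit of a transmission coefficient, and once it is isolated, the six identities (2.43)--(2.48) become explicit dictionaries in both directions.

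For part (a), I would let $\lambda\to 0$ in (2.43). By (2.41) the left-hand side $T(\zeta)$ tends to $1$, while the right-hand side tends to $e^{-i\mu/2}\,T^{(u,v)}(0)$, forcing $T^{(u,v)}(0)=e^{i\mu/2}$; the second equality in (2.49) is obtained by taking the same limit in the $(p,s)$ form of (2.43). The companion identity (2.50) is produced by repeating this argument with (2.44) and (2.42). As a by-product one learns that $T^{(u,v)}(0)$ and $T^{(p,s)}(0)$ are nonzero, which legitimizes the division performed in part (b).

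For part (b), I would substitute the expressions for $e^{\pm i\mu/2}$ from part~(a) back into (2.43)--(2.48). Replacing $e^{-i\mu/2}$ by $1/T^{(u,v)}(0)$ in (2.43) yields (2.51), and replacing $e^{-i\mu}$ by $1/T^{(u,v)}(0)^2$ in (2.45) yields (2.53); the analogous substitutions into (2.44) and (2.46) give (2.52) and (2.54). The identities (2.55) and (2.56) are literal restatements of (2.47) and (2.48), since no factor of $e^{i\mu/2}$ is present there. For part (c), I would invoke the large-$\lambda$ asymptotics (2.39) and (2.40): as $\lambda\to\infty$ in $\overline{\mathbb C^+}$ one has $T(\zeta)\to e^{-i\mu/2}$, and as $\lambda\to\infty$ in $\overline{\mathbb C^-}$ one has $\bar T(\zeta)\to e^{i\mu/2}$, so either transmission coefficient for \eqref{1.1} determines $e^{i\mu/2}$. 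With $e^{i\mu/2}$ in hand, each of (2.43)--(2.48) can be read from right to left as an explicit formula recovering the AKNS scattering coefficients for \eqref{1.7} and \eqref{2.21} from those for \eqref{1.1}.

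The main obstacle is essentially a bookkeeping one rather than a conceptual one: one must keep track of the placement of the $\sqrt{\lambda}$ factors in (2.45)--(2.48) so that the substitutions in part~(b) produce the stated formulas with the correct powers, and one must note that the nonvanishing of $T^{(u,v)}(0)$ and $T^{(p,s)}(0)$ (needed to divide in (2.51)--(2.54)) is automatic from part~(a) because $e^{i\mu/2}\neq 0$ for any finite complex $\mu$. Beyond this, the proof is a clean reorganization of results already established in Theorem~\ref{theorem2.5} and the identities (2.43)--(2.48) cited from \cite{AE2019,E2018}.
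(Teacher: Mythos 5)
Your proposal is correct and follows essentially the same route as the paper's proof: part (a) by evaluating \eqref{2.43} and \eqref{2.44} at $\lambda=0$ using $T(0)=\bar T(0)=1$ from \eqref{2.41}--\eqref{2.42}, part (b) by substituting the resulting values of $e^{\pm i\mu/2}$ back into \eqref{2.43}--\eqref{2.48}, and part (c) by reading off $e^{i\mu/2}$ from the large-$\lambda$ asymptotics \eqref{2.39} or \eqref{2.40}. Your added remark that $T^{(u,v)}(0)\neq 0$ (so the divisions in \eqref{2.51}--\eqref{2.54} are legitimate) is a small but welcome point of care that the paper leaves implicit.
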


\begin{proof}
From \eqref{2.41} we see that $T(0)=1,$ and hence by evaluating \eqref{2.43} at $\lambda=0$ we obtain \eqref{2.49}.
Similarly, from \eqref{2.42} we get $\bar T(0)=1,$ and hence by evaluating \eqref{2.44} at $\lambda=0$ we have \eqref{2.50}.
Thus, the proof of (a) is complete. By using the value
of $e^{i\mu/2}$ from \eqref{2.49} or \eqref{2.50} in \eqref{2.43}--\eqref{2.48}, we
obtain \eqref{2.51}--\eqref{2.56}, respectively. Thus, the proof of (b) is also complete.
Finally, from \eqref{2.39} or \eqref{2.40} we see that the value
of $e^{i\mu/2}$ is uniquely determined by one of the transmission 
coefficients for \eqref{1.1}, and hence \eqref{2.43}--\eqref{2.48}
can be used to express the scattering coefficients for 
\eqref{1.7} and \eqref{2.21} from the knowledge of the scattering
coefficients for \eqref{1.1}, which completes the proof of (c).
\end{proof}

\section{The bound states}
\label{section3}

The bound states for \eqref{1.1} correspond to square-integrable column vector solutions to \eqref{1.1}. The existence and nature of the
bound states are completely determined by the potentials $q$ and $r$ appearing in the coefficient matrix in \eqref{1.1}.
When the potentials $q$ and $r$ belong to the Schwartz class, the following are known \cite{AE2019}
about the bound states for \eqref{1.1}:

\begin{enumerate}

\item[\text{\rm(a)}]  The bound states cannot occur at any real $\zeta$ value in \eqref{1.1}. In particular, there is no bound
state at $\zeta=0.$ The bound states can only occur
at a complex value of $\zeta$ at which the transmission coefficient $T(\zeta)$ has a pole in the first or third quadrants in the complex
$\zeta$-plane or at which the transmission coefficient $\bar T(\zeta)$ has a pole in the second or the fourth quadrants. In fact, as indicated 
in Theorem~\ref{theorem2.5} the parameter
$\zeta$ appears as $\zeta^2$ in the transmission coefficients $T(\zeta)$ and $\bar T(\zeta),$ and hence the $\zeta$-values corresponding
to the bound states must be symmetrically located with respect to the origin in the complex $\zeta$-plane.

\item[\text{\rm(b)}] As seen from \eqref{2.43} and \eqref{2.44}, for the potential pairs $(u,v)$ and $(p,s)$ appearing in 
\eqref{2.16}, \eqref{2.17}, \eqref{2.22}, \eqref{2.23}, the poles of the corresponding transmission coefficients
for the linear systems \eqref{1.1}, \eqref{1.7}, and \eqref{2.21} coincide. Hence, 
the $\lambda$-values at which the bound states occurring
for \eqref{1.1}, \eqref{1.7}, and \eqref{2.21} must coincide. We recall that $\lambda$ and $\zeta$ are related to each other as in \eqref{2.11}.

\item[\text{\rm(c)}] The number of poles of $T(\zeta)$ in the upper-half  complex $\lambda$-plane is finite and we use
$\lambda_j$ to denote those poles and we use $N$ to denote their number without taking into account their multiplicities. 
Similarly, the number of poles of $\bar T(\zeta)$ 
in the lower-half  complex $\lambda$-plane is finite and we use
$\bar\lambda_j$ to denote those poles and we use $\bar N$ to denote their number without taking into account their multiplicities. 
The multiplicity of each of those poles is finite, and we use $m_j$ to denote
the multiplicity of the pole at $\lambda_j$ and use $\bar m_j$ to denote the multiplicity
of the pole at $\bar\lambda_j.$ We remark that the bound-state poles are not necessarily simple. In the literature \cite{KN1978,T2010}, 
it is often unnecessarily
assumed that the bound states are simple because the multiple poles may be difficult to deal with. However, we have an elegant 
method of handling bound states of any number and any multiplicities, and hence there is no reason to artificially assume the 
simplicity of bound states. 

\item[\text{\rm(d)}] As indicated in the previous steps, the bound-state information for \eqref{1.1} contains the sets $\{\lambda_j,m_j\}_{j=1}^N$ and
$\{\bar\lambda_j,\bar m_j\}_{j=1}^{\bar N}.$ Furthermore, for each bound state and multiplicity we must specify a norming constant. As the 
bound-state norming constants, we use
the double-indexed quantities $c_{jk}$ for $1\le j\le N$ and $0\le k\le (m_j-1)$ and
the double-indexed quantities $\bar c_{jk}$ for $1\le j\le \bar N$ and $0\le k\le (\bar m_j-1).$ 
The construction of the bound-state norming constants $c_{jk}$ 
from 
the transmission coefficient $T(\zeta)$ and the Jost solutions 
$\phi(\zeta,x)$ and $\psi(\zeta,x)$ and the
construction of the bound-state norming constants $\bar c_{jk}$ 
from 
the transmission coefficient $\bar T(\zeta)$ and the Jost solutions 
$\bar\phi(\zeta,x)$ and $\bar\psi(\zeta,x)$
are analogous to the constructions presented
for the discrete version of \eqref{1.1}, and we refer the reader to
\cite{AE2021} for the details. Such a construction involves the determination of
the double-indexed ``residues" $t_{jk}$ with
$1\le j\le N$ and $1\le k\le m_j$ and the the double-indexed ``residues" 
$\bar t_{jk}$  with
$1\le j\le \bar N$ and $1\le k\le \bar m_j,$ respectively,
by using the expansions
of the  transmission coefficients at the bound-state poles, which are given by
\begin{equation}
\label{3.1}
T(\zeta)= \ds\frac{t_{jm_j}}{(\lambda-\lambda_j)^{m_j}}+\ds\frac{t_{j(m_j-1)}}{(\lambda-\lambda_j)^{m_j-1}}+
\cdots+\ds\frac{t_{j1}}{(\lambda-\lambda_j)}+O\left(1\right),\qquad \lambda\to\lambda_j,
\end{equation}
\begin{equation}
\label{3.2}
\bar T(\zeta)=\ds\frac{\bar t_{j\bar m_j}}{(\lambda-\bar\lambda_j)^{\bar m_j}}+\ds\frac{\bar t_{j(\bar m_j-1)}}
{(\lambda-\bar\lambda_j)^{\bar m_j-1}}+
\cdots+\ds\frac{\bar t_{j1}}{(\lambda-\bar\lambda_j)}+O\left(1\right),\qquad \lambda\to\bar\lambda_j.
\end{equation}
Next, we construct the the double-indexed dependency constants $\gamma_{jk}$ with
$1\le j\le N$ and $0\le k\le (m_j-1).$
The dependency constants $\gamma_{jk}$ appear in the coefficients
when we express
at $\lambda=\lambda_j$ the value of each $d^k\phi(\zeta,x)/d\lambda^k$ for $0\le k\le (m_j-1)$ in terms
of the set of values $\{d^k\psi(\zeta,x)/d\lambda^k\}_{k=0}^{m_j-1}.$ 
We get
\begin{equation}\label{3.3}
\ds\frac{d^k \phi(\zeta_j,x)}{d\lambda^k}=\ds\sum_{l=0}^{k}\ds\binom{k}{l}\,\gamma_{j(k-l)}\,\ds\frac{d^l
\psi(\zeta_j,x)}{d\lambda^l},\qquad 0\le k \le m_j-1,
\end{equation}
where $\binom{k}{l}$ denotes the binomial coefficient.  
Note that \eqref{3.3} is obtained as follows. From the first equality of \eqref{2.13}, we have
\begin{equation}\label{3.4}
\ds\frac{1}{T(\zeta)}=[\phi(\zeta,x);\psi(\zeta,x)],
\end{equation}
where we recall that the Wronskian is defined as in \eqref{2.12}. Using \eqref{3.1} and the fact that $\zeta$ appears as $\zeta^2$ in $T(\zeta),$
from \eqref{3.4} it follows that the $\lambda$-derivatives of order $k$ for
$0\le k\le (m_j-1)$ vanish when $\lambda=\lambda_j$ or equivalently when $\zeta=\zeta_j.$
We then recursively obtain \eqref{3.3}. For the details of the procedure, we refer the reader to \cite{AE2021}.
Similarly,
the double-indexed dependency constants $\bar\gamma_{jk}$ with $1\le j\le \bar N$ and 
$0\le k\le (\bar m_j-1)$ appear  in the coefficients
when we express
at $\lambda=\bar\lambda_j$ the value of each $d^k\bar\phi(\zeta,x)/d\lambda^k$ for $0\le k\le (\bar m_j-1)$ in terms
of the set of values $\{d^k\bar\psi(\zeta,x)/d\lambda^k\}_{k=0}^{\bar m_j-1}.$ 
We have
\begin{equation}\label{3.5}
\ds\frac{d^k \bar\phi(\bar\zeta_j,x)}{d\lambda^k}=\ds\sum_{l=0}^{k}\ds\binom{k}{l}\,\bar\gamma_{j(k-l)}\,\ds\frac{d^l
\bar\psi(\bar\zeta_j,x)}{d\lambda^l},\qquad 0\le k \le \bar m_j-1.
\end{equation}
We remark that \eqref{3.5} is derived with the help of the Wronskian relation
\begin{equation}\label{3.6}
\ds\frac{1}{\bar T(\zeta)}=[\bar\psi(\zeta,x);\bar\phi(\zeta,x)],
\end{equation}
which is obtained from the second equality of \eqref{2.13}.
Using \eqref{3.2} and the fact that $\zeta$ appears as $\zeta^2$ in $\bar T(\zeta),$
from \eqref{3.6} it follows that the $\lambda$-derivatives of order $k$ for
$0\le k\le (\bar m_j-1)$ vanish when $\lambda=\bar\lambda_j$ or equivalently when $\zeta=\bar\zeta_j.$
We then recursively obtain \eqref{3.5}. 
The norming constants $c_{jk}$ are formed in an explicit manner by using the set of residues $\{t_{jk}\}_{k=1}^{m_j}$ and the set of dependency
constants $\{\gamma_{jk}\}_{k=0}^{m_j-1},$
and this procedure is explained in the proof of Theorem~\ref{theorem4.2} and it is similar to the procedure described in Theorem~15 of \cite{AE2021}.
In a similar manner, the norming constants $\bar c_{jk}$ are formed by using the set of residues $\{\bar t_{jk}\}_{k=1}^{\bar m_j}$ 
and the set of dependency
constants $\{\bar\gamma_{jk}\}_{k=0}^{\bar m_j-1}.$
Thus, we obtain the bound-state information for \eqref{1.1} consisting of the sets
\begin{equation}
\label{3.7}
\left\{\lambda_j,m_j,\{c_{jk}\}_{k=0}^{m_j-1}\right\}_{j=1}^N,\quad
\left\{\bar\lambda_j,\bar m_j,\{\bar c_{jk}\}_{k=0}^{\bar m_j-1}\right\}_{j=1}^{\bar N}.
\end{equation}
In the first two examples in Section~\ref{section6} we illustrate
the relationships connecting the norming constants to the residues and the dependency constants.

\item[\text{\rm(e)}] Let us remark that it is extremely cumbersome to use the bound-state information in the format
specified in \eqref{3.7}
unless that information is organized in an efficient format. In fact, this is the primary reason why it is artificially assumed in the literature
that the bound states are simple. The bound-state information given in \eqref{3.7} can be organized in an efficient and elegant
manner by introducing a pair of matrix triplets $(A,B,C)$ and $(\bar A,\bar B,\bar C)$
in such a way that the specification of the matrix triplet pair is equivalent to the
specification of the bound-state information in \eqref{3.7}.
Furthermore, in the Marchenko method, the bound-state information is easily and in an elegant manner 
incorporated in the nonhomogeneous term and in the integral kernel in the corresponding Marchenko system
when it is incorporated in the form of matrix triplets.
The use of the matrix triplets enables us to deal with any number of bound states and any number of multiplicities in a simple 
and elegant manner,
as if we only have one bound state of multiplicity one. Let us remark that
the use of the matrix triplets is not confined to any particular
linear system, but it can be used on any linear system for which a Marchenko method is available. In fact, this is
one of the reasons why we are interested in establishing the Marchenko method for the linear system given in \eqref{1.1}.

\item[\text{\rm(f)}] Without loss of any generality, the matrix triplets $(A,B,C)$ and $(\bar A,\bar B,\bar C)$
can be chosen as the minimal special triplets described later in this section.  
We refer the reader to \cite{ADV2007,BGK1979} for the description of
the minimality. The minimality amounts to choosing each of the square matrices $A$ and $\bar A$
with the smallest sizes by removing any zero columns or zero rows. By the special triplets, we mean 
choosing the matrices $A$ and $\bar A$ in their Jordan canonical forms and choosing the 
column vectors $B$ and $\bar B$ in the special forms consisting of zeros and ones, as described in \eqref{3.9}, \eqref{3.11}, 
\eqref{3.14}, and \eqref{3.17}.
The choice of the special forms for the matrix triplets is unique up to the permutations of the corresponding Jordan blocks.
We refer the reader to Theorem~3.1
of \cite{ADV2007} for the details and for the proof why there is no loss of generality in using the matrix triplets
in their minimal special forms.

\end{enumerate}

Next, we show how to convert the bound-state information given in \eqref{3.7} into the matrix triplet pair
$(A,B,C)$ and $(\bar A,\bar B,\bar C).$ Since there is no loss of
generality in choosing the matrix triplets in their special forms,
we only deal with those special forms.
For simplicity and clarity, we outline the main steps of the procedure by omitting the
details. We refer the reader to \cite{AE2021} where the details of the procedure are presented for the discrete version of
\eqref{1.1}. The steps presented in \cite{AE2021} are general enough to apply to \eqref{1.1} and other linear systems.
Let us also remark that for linear systems for which the potentials appear in diagonal blocks in the corresponding
coefficient matrix, only one matrix triplet $(A,B,C)$ is needed. On the other hand, for linear systems
for which the potentials appear in off-diagonal blocks in the corresponding
coefficient matrix, a pair of matrix triplets $(A,B,C)$ and $(\bar A,\bar B,\bar C)$ is used.
The potentials $q$ and $r$ appear in the
off-diagonal entries in the coefficient matrix in \eqref{1.1}, and hence we convert the bound-state information 
into the format consisting of the triplets $(A,B,C)$ and $(\bar A,\bar B,\bar C).$ 
For the use of matrix triplets for some other linear systems, we refer the reader to
\cite{ABDV2010,ADV2007,ADV2010,AV2006,B2008,B2017}.

The conversion of the bound-state information from \eqref{3.7} to the matrix triplet pair $(A,B,C)$ and $(\bar A,\bar B,\bar C)$ 
involves the following steps:

\begin{enumerate}

\item[\text{\rm(a)}]  For each bound state at $\lambda=\lambda_j$ with $1\le j\le N,$ we form the matrix subtriplet $(A_j,B_j,C_j)$
as
\begin{equation}\label{3.8}
A_j:=\begin{bmatrix}
\lambda_j&1&0&\cdots&0&0\\
0&\lambda_j&1&\cdots&0&0\\
0&0&\lambda_j&\cdots&0&0\\
\vdots&\vdots&\vdots&\ddots&\vdots&\vdots\\
0&0&0&\cdots&\lambda_j&1\\
0&0&0&\dots&0&\lambda_j
\end{bmatrix},
\end{equation}
\begin{equation}\label{3.9}
B_j:=\begin{bmatrix}
0\\ \vdots \\
0\\
1
\end{bmatrix},\quad C_j:=\begin{bmatrix}
c_{j(m_j-1)}&c_{j(m_j-2)}&\cdots&c_{j1}&c_{j0}
\end{bmatrix},
\end{equation}
where $A_j$ is the $m_j\times m_j$ square matrix in the Jordan canonical form with
$\lambda_j$ appearing in the diagonal entries, $B_j$ is the column vector with $m_j$ components
that are all zero except for the last entry which is $1,$ and $C_j$ is the row vector with $m_j$ components
containing all the norming constants in the indicated order. Note that if the bound state at $\lambda=\lambda_j$ is 
simple, then we have 
\begin{equation*}
A_j=\begin{bmatrix}\lambda_j\end{bmatrix},\quad B_j=\begin{bmatrix}1\end{bmatrix},\quad
C_j=\begin{bmatrix}c_{j0}\end{bmatrix}.
\end{equation*}
Similarly, for each bound state at $\lambda=\bar\lambda_j$ with $1\le j\le \bar N$ we form the matrix subtriplet $(\bar A_j,\bar B_j,\bar C_j)$
as
\begin{equation}\label{3.10}
\bar A_j:=\begin{bmatrix}
\bar\lambda_j&1&0&\cdots&0&0\\
0&\bar\lambda_j&1&\cdots&0&0\\
0&0&\bar\lambda_j&\cdots&0&0\\
\vdots&\vdots&\vdots&\ddots&\vdots&\vdots\\
0&0&0&\cdots&\bar\lambda_j&1\\
0&0&0&\dots&0&\bar\lambda_j
\end{bmatrix},
\end{equation}
\begin{equation}\label{3.11}
\bar B_j:=\begin{bmatrix}
0\\ \vdots \\
0\\
1
\end{bmatrix},\quad \bar C_j:=\begin{bmatrix}
\bar c_{j(\bar m_j-1)}&\bar c_{j(\bar m_j-2)}&\cdots&\bar c_{j1}&\bar c_{j0}
\end{bmatrix},
\end{equation}
where $\bar A_j$ is the $\bar m_j\times\bar m_j$ square matrix in the Jordan canonical form with
$\bar\lambda_j$ appearing in the diagonal entries, $\bar B_j$ is the column vector with $\bar m_j$ components
that are all zero except for the last entry which is $1,$ and $\bar C_j$ is the row vector with $\bar m_j$ components
containing all the norming constants in the indicated order.

\item[\text{\rm(b)}]  Using $A_j$ with $1\le j\le N,$
 we form the $\mathcal N\times\mathcal N$ block-diagonal matrix $A$ as
\begin{equation}\label{3.12}
A:=\begin{bmatrix}
A_1&0&\cdots&0&0\\
0&A_2&\cdots&0&0\\
\vdots&\vdots&\ddots&\vdots&\vdots\\
0&0&\cdots&A_{N-1}&0\\
0&0&\cdots&0&A_N
\end{bmatrix},
\end{equation}
where $\mathcal N$ is defined as
\begin{equation}\label{3.13}
\mathcal N:=\ds\sum_{j=1}^N m_j,
\end{equation}
and it represents the number of bound-state poles in the upper-half
complex $\lambda$-plane by including the multiplicities. We also form the column vector $B$ with $\mathcal N$ components
and the row vector $C$ with $\mathcal N$ components as
\begin{equation}\label{3.14}
B=\begin{bmatrix}
B_1\\
B_2\\
\vdots\\
B_N
\end{bmatrix},\quad 
C:=\begin{bmatrix}
C_1&C_2&\cdots&C_N
\end{bmatrix}.
\end{equation}
Similarly, we define $\bar{\mathcal N}$ as
\begin{equation}\label{3.15}
\bar{\mathcal N}:=\ds\sum_{j=1}^{\bar N} \bar m_j,
\end{equation}
which represents the number of bound-state poles in the lower-half
complex $\lambda$-plane by including the multiplicities. We then
use $\bar A_j$ with $1\le j\le \bar N$
in order to form the $\bar{\mathcal N}\times\bar{\mathcal N}$ block-diagonal matrix $\bar A$ as
\begin{equation}\label{3.16}
\bar A:=\begin{bmatrix}
\bar A_1&0&\cdots&0&0\\
0&\bar A_2&\cdots&0&0\\
\vdots&\vdots&\ddots&\vdots&\vdots\\
0&0&\cdots&\bar A_{N-1}&0\\
0&0&\cdots&0&\bar A_N
\end{bmatrix}.
\end{equation}
We also form 
the column vector $\bar B$ with $\bar{\mathcal N}$ components
and the row vector $\bar C$ with $\bar{\mathcal N}$ components as
\begin{equation}\label{3.17}
\bar B=\begin{bmatrix}
\bar B_1\\
\bar B_2\\
\vdots\\
\bar B_N
\end{bmatrix},\quad 
\bar C:=\begin{bmatrix}
\bar C_1&\bar C_2&\cdots&\bar C_N
\end{bmatrix}.
\end{equation}

\end{enumerate}

\section{The Marchenko method}
\label{section4}

In this section we develop the Marchenko method for \eqref{1.1} by deriving the corresponding Marchenko system 
of linear integral equations
and also by showing how the Jost solutions and the potentials are recovered from the solution to that Marchenko system. 
We present the derivation of the Marchenko system in such a way that the method can be applied to other 
linear systems and to their discrete analogs. For the simplicity of the presentation, we first provide the derivation in the absence of bound states, 
and then we indicate the main modification needed to include the bound-state information in the Marchenko system.

In the following we outline the basic steps in the development of our Marchenko method for \eqref{1.1} in order to show the
similarities and differences with the development of the standard Marchenko method:

\begin{enumerate}

\item[\text{\rm(a)}] We start with the Riemann--Hilbert problem for \eqref{1.1}
by expressing the two Jost solutions $\phi(\zeta,x)$ and $\bar\phi(\zeta,x)$ as a linear combination of the Jost solutions 
$\psi(\zeta,x)$ and $\bar\psi(\zeta,x)$. This eventually yields the Marchenko system for
\eqref{1.1} with $x<y<+\infty$ as an analog of \eqref{1.4}. Note that this is also the step used in the 
derivation of the standard Marchenko method.
In order to derive the Marchenko system for \eqref{1.1} with $-\infty<y<x$ as an analog of
\eqref{1.5}, we need to express the Jost solutions $\psi(\zeta,x)$ and $\bar\psi(\zeta,x)$ as a linear combination of the Jost 
solutions $\phi(\zeta,x)$ and $\bar\phi(\zeta,x).$ However, we will only present the derivation of the former Marchenko system
and hence only deal with the Riemann--Hilbert problem for the former case.
We remark that the coefficients in the Riemann--Hilbert problem associated with the Marchenko system with $x<y<+\infty$ 
are directly related to the scattering coefficients  $T(\zeta),$ $\bar T(\zeta),$ $R(\zeta),$ and $\bar R(\zeta),$ 
and  the coefficients in the Riemann--Hilbert problem associated with the Marchenko system with $-\infty<y<x$
are directly related to the scattering coefficients  $T(\zeta),$ $\bar T(\zeta),$ $L(\zeta),$ and $\bar L(\zeta).$

\item[\text{\rm(b)}] Next, we combine the two column-vector equations arising in 
the formulation of the Riemann--Hilbert problem into a $2\times 2$ matrix-valued system. This step is also 
used in the 
development of the standard Marchenko method. 
	
\item[\text{\rm(c)}] We slightly modify our $2\times 2$ matrix-valued system obtained in the previous step. This modification
is not needed in the development of the standard Marchenko method.
The modification involving the diagonal entries is carried out in order to take into account the large $\zeta$-asymptotics 
of the Jost solutions. The modification involving the off-diagonal entries is carried out in order to 
formulate the $2\times 2$ matrix-valued Riemann--Hilbert problem in the spectral parameter $\lambda$ 
rather than in $\zeta,$ where $\lambda$ and $\zeta$ are related to each other as in \eqref{2.11}.
	
\item[\text{\rm(d)}] With the modification described in the previous step, 
we are able to take the Fourier transform from the $\lambda$-space to the $y$-space. This yields the $2\times 2$ coupled 
Marchenko system. This step is also used in the 
development of the standard Marchenko method. 
	
\item[\text{\rm(e)}] We uncouple the $2\times 2$ matrix-valued Marchenko system and obtain the 
associated uncoupled scalar Marchenko integral equations. This is also
the step used in the 
development of the standard Marchenko method.
	
\item[\text{\rm(f)}] With the help of the inverse Fourier transform,
we show how the Jost solutions to \eqref{1.1} are constructed from the solution to the Marchenko system.
This is also the step used in the development of the standard Marchenko method.
	
\item[\text{\rm(g)}] Finally, we describe how the potentials $q$ and $r$ 
appearing in \eqref{1.1} are recovered from the solution to our Marchenko system.
This step is slightly more involved than the step used in the development of the standard Marchenko method.
However, the formulas for the potentials are explicit in terms of the 
solution to our Marchenko system.

\end{enumerate}

In the next theorem we introduce the $2\times 2$ matrix-valued 
Marchenko integral system for \eqref{1.1} in the absence of bound states.

\begin{theorem}
\label{theorem4.1}
Let the potentials $q$ and $r$ in \eqref{1.1} belong to the Schwartz class,
and assume that there are no bound states. Then,
the corresponding Marchenko system for \eqref{1.1} is given by
\begin{equation}\label{4.1}
\begin{split}
\begin{bmatrix}
0&0\\ 
\noalign{\medskip}
0&0
\end{bmatrix}=&\begin{bmatrix}
\bar K_1(x,y)&K_1(x,y)\\ \noalign{\medskip}\bar K_2(x,y)&K_2(x,y)
\end{bmatrix}+ \begin{bmatrix}
0&\hat{\bar R}(x+y)\\ 
\noalign{\medskip}
\hat R(x+y)&0
\end{bmatrix}\\
\noalign{\medskip}
&+\ds\int_x^\infty dz \begin{bmatrix}
-i K_1(x,z)\,\hat R'(z+y)&\bar K_1(x,z)\,\hat{\bar R}(z+y)\\ 
\noalign{\medskip}
K_2(x,z)\,\hat R(z+y)&i\bar K_2(x,z)\,\hat{\bar R}'(z+y)
\end{bmatrix},\qquad x<y,
\end{split}
\end{equation}
where $\hat R(y)$ and $\hat{\bar R}(y)$ are related to the reflection coefficients $R(\zeta)$ and $\bar R(\zeta)$ for \eqref{1.1} 
via the Fourier transforms given by
\begin{equation}\label{4.2}
\hat R(y):=\ds\frac{1}{2\pi}\ds\int_{-\infty}^\infty  
d\lambda\,\ds\frac{R(\zeta)}{\zeta}\,e^{i\lambda y},\quad \hat{\bar R}(y):=\ds\frac{1}{2\pi}
\ds\int_{-\infty}^\infty  d\lambda\,\ds\frac{\bar R(\zeta)}{\zeta}\,e^{-i\lambda y},
\end{equation}
with $\hat R'(y)$ and $\hat{\bar R}'(y)$ denoting the derivatives of $\hat R(y)$ and $\hat{\bar R}(y),$ respectively, and
$\lambda$ being related to $\zeta$ as in \eqref{2.11}. We also have 
\begin{equation}\label{4.3}
K_1(x,y):= 
\ds\frac{1}{2\pi }\int_{-\infty}^\infty d\lambda \left[\ds\frac{e^{i\mu/2}\,\,\psi_1(\zeta,x)}{\zeta\,E(x)}\right] e^{-i\lambda y},
\end{equation}
\begin{equation}\label{4.4}
K_2(x,y):= 
\ds\frac{1}{2\pi }\int_{-\infty}^\infty d\lambda \left[e^{-i\mu/2}\,E(x)\,\psi_2(\zeta,x)-e^{i\lambda x}\right] e^{-i\lambda y},
\end{equation}
\begin{equation}\label{4.5}
\bar K_1(x,y):= 
\ds\frac{1}{2\pi }\int_{-\infty}^\infty 
d\lambda \left[\ds\frac{e^{i\mu/2}\,\bar{\psi}_1(\zeta,x)}{E(x)}-e^{-i\lambda x}\right] e^{i\lambda y},
\end{equation}
\begin{equation}\label{4.6}
\bar K_2(x,y):= 
\ds\frac{1}{2\pi }\int_{-\infty}^\infty d\lambda
\left[\ds\frac{e^{-i\mu/2}\,E(x)\,\bar{\psi}_2(\zeta,x)}{\zeta}\right] e^{i\lambda y},
\end{equation}
with $E(x)$ and $\mu$ being the quantities defined in \eqref{2.18} and \eqref{2.20}, 
respectively, and $\psi_1(\zeta,x),$ $\psi_2(\zeta,x),$ $\bar{\psi}_1(\zeta,x),$ and $\bar{\psi}_2(\zeta,x)$ are the 
components of the Jost solutions given in \eqref{2.9}.
\end{theorem}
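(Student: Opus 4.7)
The plan is to execute the seven-step outline stated just before the theorem. I start from the Riemann--Hilbert factorization: since the solution space of \eqref{1.1} is two-dimensional, each of the Jost solutions $\phi$ and $\bar\phi$ can be expanded in the pair $\{\psi,\bar\psi\},$ and reading off the coefficients from the $x\to+\infty$ asymptotics \eqref{2.5}--\eqref{2.8} yields the two scattering relations
\begin{equation*}
T(\zeta)\,\phi(\zeta,x)=\bar\psi(\zeta,x)+R(\zeta)\,\psi(\zeta,x),\qquad
\bar T(\zeta)\,\bar\phi(\zeta,x)=\psi(\zeta,x)+\bar R(\zeta)\,\bar\psi(\zeta,x),
\end{equation*}
valid for $\zeta\in\mathbb R.$ Written out componentwise these produce four scalar identities, and each one will be converted into one of the four scalar equations encoded in \eqref{4.1}.

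Next I would normalize each scalar identity so that the two sides have definite analyticity in $\lambda\in\overline{\mathbb C^+}$ and $\lambda\in\overline{\mathbb C^-},$ respectively, with controlled plane-wave behavior at $\lambda\to\infty.$ Concretely, I multiply each identity by the appropriate combination of $e^{\pm i\mu/2},$ $E(x)^{\pm1},$ and $\zeta^{\pm1},$ the $\zeta$-powers being chosen via Theorem~\ref{theorem2.2}(b) so that odd-in-$\zeta$ components become even-in-$\zeta,$ hence genuine functions of $\lambda.$ Using Theorem~\ref{theorem2.4} together with \eqref{2.39}--\eqref{2.40}, the leading $\lambda\to\infty$ behavior of each normalized side is a single plane wave $e^{\pm i\lambda x};$ after subtracting it I obtain precisely the integrands appearing in \eqref{4.3}--\eqref{4.6}. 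These subtracted integrands are analytic in the appropriate closed half-$\lambda$-plane and decay as $1/\lambda$ there, while Theorems~\ref{theorem2.3} and \ref{theorem2.5}(d) guarantee continuity at $\lambda=0,$ so the Fourier integrals \eqref{4.2}--\eqref{4.6} are well defined.

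The heart of the proof is then to apply the Fourier transform $\frac{1}{2\pi}\int d\lambda\,(\cdot)\,e^{\pm i\lambda y}$ to each normalized identity. The analyticity and decay established above, combined with Jordan's-lemma contour closure and the no-bound-state hypothesis, force one side of each identity to integrate to zero for $y>x,$ producing the zero on the left of \eqref{4.1}. On the other side, products of $\lambda$-functions become convolutions by the usual convolution theorem, and the support property $K_1(x,z)=K_2(x,z)=\bar K_1(x,z)=\bar K_2(x,z)=0$ for $z<x$ (again established by contour closure in the half-plane of analyticity) restricts those convolutions to $z\in(x,\infty).$ The reflection coefficients enter through $R/\zeta$ and $\bar R/\zeta,$ which are the genuine $\lambda$-dependent reflection data because $R$ and $\bar R$ are odd in $\zeta$ (Theorem~\ref{theorem2.5}(c)). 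The derivative entries $-i\hat R'$ and $i\hat{\bar R}'$ in the $(1,1)$ and $(2,2)$ positions of the kernel arise from the factorizations
\begin{equation*}
R(\zeta)\,\psi_1(\zeta,x)=\lambda\,\frac{R(\zeta)}{\zeta}\cdot\frac{\psi_1(\zeta,x)}{\zeta},\qquad
\bar R(\zeta)\,\bar\psi_2(\zeta,x)=\lambda\,\frac{\bar R(\zeta)}{\zeta}\cdot\frac{\bar\psi_2(\zeta,x)}{\zeta},
\end{equation*}
since multiplication by $\lambda$ in $\lambda$-space converts under the Fourier transform to $-i\partial_y$ acting on $\hat R$ or $\hat{\bar R}.$ The remaining two scalar identities, after division by $\zeta,$ produce no extra $\lambda$ factor, and hence no derivative, in the $(1,2)$ and $(2,1)$ entries.

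The principal obstacle I anticipate is bookkeeping: parities in $\zeta,$ the two analyticity half-planes in $\lambda,$ the normalizing factors $e^{\pm i\mu/2},$ $E(x)^{\pm1},$ $\zeta^{\pm1},$ and the sign conventions of the Fourier kernels all have to align so that the four Fourier-transformed identities reproduce exactly the matrix structure displayed in \eqref{4.1}. The derivative-of-reflection terms, which are absent from the standard Marchenko equation \eqref{1.4}, are a direct manifestation of the $\zeta$-dependence of the off-diagonal entries of the coefficient matrix in \eqref{1.1}, and carefully tracing them through each of the four scalar equations is the main technical point of the argument.
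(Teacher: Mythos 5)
Your proposal follows the paper's own derivation step for step: the same Riemann--Hilbert relations, the same normalization by $e^{\pm i\mu/2},$ $E(x)^{\pm1},$ and $\zeta^{\pm1}$ with subtraction of the plane waves, the same Fourier transformation with Jordan's lemma giving both the vanishing of the transformed left-hand side (using the no-bound-state hypothesis) and the support property of $K(x,\cdot)$ on $(x,\infty),$ and the same factorization $R\,\psi_1=\lambda\,(R/\zeta)(\psi_1/\zeta)$ producing the derivative entries of the kernel. The argument is correct and is essentially the paper's proof.
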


\begin{proof}
For notational simplicity, we suppress the arguments and write
$\psi$ for $\psi(\zeta,x),$ $\bar\psi$ for $\bar\psi(\zeta,x),$ $\phi$ for $\phi(\zeta,x),$ 
$\bar\phi$ for $\bar\phi(\zeta,x),$ $T$ for $T(\zeta),$
$\bar T$ for $\bar T(\zeta),$
$R$ for $R(\zeta),$ $\bar R$ for $\bar R(\zeta),$ and $E$ for $E(x).$ From \eqref{2.1} and \eqref{2.2} we see that
the columns of the  Jost solutions $\psi$ and $\bar\psi$ to \eqref{1.1} are linearly independent, and hence those four columns form
a fundamental set of column-vector solutions to \eqref{1.1}.
Thus, each of the other
two Jost solutions $\phi$ and $\bar\phi$ can be expressed as linear combinations of
$\psi$ and $\bar\psi.$ With the help of \eqref{2.1}, \eqref{2.2}, \eqref{2.7}, and \eqref{2.8}, for $\zeta\in\mathbb R$ we obtain
\begin{equation}\label{4.7}
\begin{cases}
\phi=\ds\frac{1}{T}\,\bar{ \psi}+\ds\frac{R}{T}\,\psi,
\\ \noalign{\medskip}
\bar\phi=\ds\frac{\bar R}{\bar T}\,\bar{\psi}+\ds\frac{1}{\bar T}\,\psi,
\end{cases}
\end{equation}
or equivalently
\begin{equation}\label{4.8}
\begin{cases}
T\,\phi=\bar{ \psi}+R\,\psi,
\\ \noalign{\medskip}
\bar T\,\bar\phi=\bar R\,\bar{ \psi}+\psi,
\end{cases}
\end{equation}
which forms our Riemann--Hilbert problem consisting of the construction of the Jost solutions from 
the knowledge of $T,$ $\bar T,$ $R,$ and $\bar R.$
Let us now derive our Marchenko system starting from \eqref{4.8}.
We first combine the two column-vector equations in \eqref{4.8} 
and obtain the $2\times 2$ matrix-valued system
\begin{equation}\label{4.9}
\begin{bmatrix}
T\,\phi&\bar T\,\bar\phi
\end{bmatrix}=\begin{bmatrix}
\bar\psi&\psi
\end{bmatrix}+\begin{bmatrix}
R\,\psi&\bar R\,\bar\psi
\end{bmatrix}.
\end{equation}
Using \eqref{2.9} and \eqref{2.10}, we write \eqref{4.9} as
\begin{equation}\label{4.10}
\begin{bmatrix}
T\,\phi_1&\bar T\,\bar\phi_1\\
\noalign{\medskip}
T\,\phi_2&\bar T\,\bar\phi_2
\end{bmatrix}=\begin{bmatrix}
\bar\psi_1&\psi_1
\\ \noalign{\medskip}
\bar\psi_2&\psi_2
\end{bmatrix}+\begin{bmatrix}
R\,\psi_1&\bar R\,\bar\psi_1
\\ \noalign{\medskip}
R\,\psi_2&\bar R\,\bar\psi_2
\end{bmatrix}.
\end{equation}
We first postmultiply \eqref{4.10} with the diagonal
matrix $\text{\rm{diag}}\{
e^{i\mu/2} E^{-1},
e^{-i\mu/2}E\}$
and then divide  by $\zeta$ the off-diagonal entries in the resulting matrix-valued system.
From the resulting
$2\times 2$ matrix-valued equation, we subtract the diagonal matrix
$\text{\rm{diag}}\{
e^{-i\lambda x},
e^{i\lambda x}\}$
from both sides, and we obtain
\begin{equation}\label{4.11}
\begin{split}
&\begin{bmatrix}
e^{i\mu/2}\,E^{-1}\,T\,\phi_1-e^{-i\lambda x}&\ds\frac{1}{\zeta}e^{i\mu/2}\,E^{-1}\bar T\,\bar{ \phi}_1\\
\noalign{\medskip}
\ds\frac{1}{\zeta}e^{-i\mu/2}\,E\,T\,\phi_2&e^{-i\mu/2}\,E\,\bar T\,\bar{ \phi}_2-e^{i\lambda x}
\end{bmatrix}
\\
&\phantom{xx}
=
\begin{bmatrix}
e^{i\mu/2}\,E^{-1}\,\bar{\psi}_1-e^{-i\lambda x}&\ds\frac{1}{\zeta}e^{i\mu/2}\,E^{-1}\psi_1
\\ \noalign{\medskip}
\ds\frac{1}{\zeta}e^{-i\mu/2}\,E\,\bar{\psi}_2&e^{-i\mu/2}\,E\,\psi_2-e^{i\lambda x}
\end{bmatrix}
+\begin{bmatrix}
e^{i\mu/2}\,E^{-1}\,R\,\psi_1&\ds\frac{1}{\zeta}e^{i\mu/2}\,E^{-1}\,\bar R\,\bar{\psi}_1
\\ \noalign{\medskip}
\ds\frac{1}{\zeta}e^{-i\mu/2}\,E\,R\,\psi_2&e^{-i\mu/2}\,E\,\bar R\,\bar{\psi}_2
\end{bmatrix}.
\end{split}
\end{equation}
We now take the Fourier transform of \eqref{4.11} 
with $\int_{-\infty}^\infty d\lambda\,e^{i\lambda y}/2\pi$  in the first columns and 
with $\int_{-\infty}^\infty d\lambda\,e^{-i\lambda y}/2\pi$ in the second columns. This yields
the $2\times 2$ matrix-valued equation
\begin{equation}
\label{4.12}
\text{\rm{LHS}}=K(x,y)+\text{\rm{RHS}},
\end{equation}
where we have defined
\begin{equation}
\label{4.13}
K(x,y):=\begin{bmatrix}
\bar K_1(x,y)&K_1(x,y)
\\
\noalign{\medskip}
\bar K_2(x,y)&K_2(x,y)
\end{bmatrix},
\end{equation}
with the entries $K_1(x,y),$ $K_2(x,y),$ $K_1(x,y),$ and $K_2(x,y)$ are as in
\eqref{4.3}--\eqref{4.6}, respectively, and
\begin{equation}
\label{4.14}
\text{\rm{LHS}}:=\begin{bmatrix}
\text{\rm{LHS}}_{11}&\text{\rm{LHS}}_{12}
\\
\noalign{\medskip}
\text{\rm{LHS}}_{21}&\text{\rm{LHS}}_{22}
\end{bmatrix},
\end{equation}
\begin{equation}
\label{4.15}
\text{\rm{RHS}}:=\begin{bmatrix}
\text{\rm{RHS}}_{11}&\text{\rm{RHS}}_{12}
\\
\noalign{\medskip}
\text{\rm{RHS}}_{21}&\text{\rm{RHS}}_{22}
\end{bmatrix},
\end{equation}
with the matrix entries defined as
\begin{equation}
\label{4.16}
\text{\rm{LHS}}_{11}:=\ds\int_{-\infty}^\infty \frac{d\lambda}{2\pi}\left[e^{i\mu/2}E^{-1}T\phi_1-e^{-i\lambda x}\right]e^{i\lambda y},
\end{equation}
\begin{equation}
\label{4.17}
\text{\rm{LHS}}_{12}:=
\ds\int_{-\infty}^\infty \frac{d\lambda}{2\pi}\,e^{i\mu/2}\,E^{-1}\bar T\,\frac{\bar\phi_1}{\zeta}\,e^{-i\lambda y},
\end{equation}
\begin{equation}
\label{4.18}
\text{\rm{LHS}}_{21}:=\ds\int_{-\infty}^\infty \ds\frac{d\lambda}{2\pi}\,e^{-i\mu/2}\,E\,T\,\ds\frac{\phi_2}{\zeta}\,e^{i\lambda y},
\end{equation}
\begin{equation}
\label{4.19}
\text{\rm{LHS}}_{22}:=
\ds\int_{-\infty}^\infty \frac{d\lambda}{2\pi}\left[e^{-i\mu/2}\,E\,\bar T\,\bar{ \phi}_2-e^{i\lambda x}\right]e^{-i\lambda y},
\end{equation}
\begin{equation}
\label{4.20}
\text{\rm{RHS}}_{11}:=
\ds\int_{-\infty}^\infty \frac{d\lambda}{2\pi}\,e^{i\mu/2}\,E^{-1}\,R\,\psi_1\,
e^{i\lambda y},
\end{equation}
\begin{equation}
\label{4.21}
\text{\rm{RHS}}_{12}:=
\ds\int_{-\infty}^\infty \ds\frac{d\lambda}{2\pi}\,e^{i\mu/2}\,E^{-1}\,\ds\frac{\bar R}{\zeta}\,\bar\psi_1\,e^{-i\lambda y},
\end{equation}
\begin{equation}
\label{4.22}
\text{\rm{RHS}}_{21}:=
\ds\int_{-\infty}^\infty \ds\frac{d\lambda}{2\pi}\,e^{-i\mu/2}\,E\,\ds\frac{R}{\zeta}\,\psi_2\,e^{i\lambda y},
\end{equation}
\begin{equation}
\label{4.23}
\text{\rm{RHS}}_{22}:=\ds\int_{-\infty}^\infty \frac{d\lambda}{2\pi}\,e^{-i\mu/2}\,E\,\bar R\,\bar\psi_2\,e^{-i\lambda y}.
\end{equation}
Using the continuity properties of the Jost solutions stated in Theorem~\ref{theorem2.2},
the continuity and asymptotic properties of the  scattering coefficients presented 
in Theorem~\ref{theorem2.5}, and the small and large $\zeta$-asymptotics of the
Jost solutions stated in Theorems~\ref{theorem2.3} and \ref{theorem2.4}, respectively, 
we see that each integrand in \eqref{4.3}--\eqref{4.6} and \eqref{4.16}--\eqref{4.23} is continuous in $\lambda\in\mathbb R$ and $O(1/\lambda)$ as 
$\lambda\to\pm\infty.$ Thus, the $L^2$-Fourier transforms in \eqref{4.3}--\eqref{4.6} and \eqref{4.16}--\eqref{4.23} are all
well defined. Furthermore, in the absence of bound states, for $y>x$ the integrands in \eqref{4.3} and \eqref{4.4} are
analytic in $\lambda\in\mathbb C^+$ and uniformly $o(1)$ as $\lambda\to\infty$ in $\overline{\mathbb C^+}.$ Similarly,
in the absence of bound states, for $y>x$ the integrands in \eqref{4.5} and \eqref{4.6} are
analytic in $\lambda\in\mathbb C^-$ and uniformly $o(1)$ as $\lambda\to\infty$ in $\overline{\mathbb C^-}.$
Thus, from Jordan's lemma it follows that the four entries of the $2\times 2$ matrix $K(x,y)$ defined in \eqref{4.13}
are each equal to zero when $x>y.$ Hence, using the inverse Fourier transform, from \eqref{4.3}--\eqref{4.6} we get
\begin{equation}\label{4.24}
e^{i\mu/2}\,\ds\frac{\psi_1(\zeta,x)}{\zeta\,E(x)}=\ds\int_x^\infty dz\,K_1(x,z)\,e^{i\lambda z},
\end{equation}
\begin{equation}\label{4.25}
e^{-i\mu/2}\,E(x)\,\psi_2(\zeta,x)=e^{i\lambda x}+\ds\int_x^\infty dz\,K_2(x,z)\,e^{i\lambda z},
\end{equation}
\begin{equation}\label{4.26}
e^{i\mu/2}\,\ds\frac{\bar\psi_1(\zeta,x)}{E(x)}=e^{-i\lambda x}+\ds\int_x^\infty dz\,\bar K_1(x,z)\,e^{-i\lambda z},
\end{equation}
\begin{equation}\label{4.27}
e^{-i\mu/2}\,E(x)\,\ds\frac{\bar\psi_2(\zeta,x)}{\zeta}=\ds\int_x^\infty dz\,\bar K_2(x,z)\,e^{-i\lambda z}.
\end{equation}
Let us now show that each of the four entries of $\text{\rm{RHS}}$ defined in \eqref{4.15} is
a convolution. By using the inverse Fourier transform, from \eqref{4.2}
we have
\begin{equation}\label{4.28}
\ds\frac{R(\zeta)}{\zeta}=\ds\int_{-\infty}^\infty ds\,\hat R(s)\,e^{-i\lambda s},
\quad 
\frac{\bar R(\zeta)}{\zeta}=\ds\int_{-\infty}^\infty ds\,\hat{\bar R}(s)\,e^{i\lambda s}.
\end{equation}
Also, 
by taking the derivatives, from \eqref{4.2} we obtain
\begin{equation}\label{4.29}
\hat R'(y)=\ds\frac{i}{2\pi }\ds\int_{-\infty}^\infty  
d\lambda\,\ds\frac{R(\zeta)}{\zeta}\lambda\,e^{i\lambda y},
\quad \hat{\bar R}'(y)=-\ds\frac{i}{2\pi }\ds\int_{-\infty}^\infty  
d\lambda\,\ds\frac{\bar R(\zeta)}{\zeta}\lambda\,e^{-i\lambda y}.
\end{equation}
Using the inverse Fourier transform, from \eqref{4.29} we have
\begin{equation}\label{4.30}
\ds\frac{R(\zeta)}{\zeta}\,\lambda=-i\ds\int_{-\infty}^\infty ds\,\hat R'(s)\,e^{-i\lambda s},
\quad 
\frac{\bar R(\zeta)}{\zeta}\,\lambda=i\ds\int_{-\infty}^\infty ds\,\hat{\bar R}'(s)\,e^{i\lambda s}.
\end{equation}
Note that \eqref{4.20} is equivalent to
\begin{equation}
\label{4.31}
\text{\rm{RHS}}_{11}=
\ds\int_{-\infty}^\infty \frac{d\lambda}{2\pi}\,e^{i\lambda y}\left(e^{i\mu/2}\,E^{-1}\ds\frac{\psi_1}
{\zeta}\right)\left(\frac{R}{\zeta}\,\lambda\right).
\end{equation}
Using \eqref{4.24} and the first equality of \eqref{4.30}
on the right-hand side of \eqref{4.31}, we get the convolution
\begin{equation}
\label{4.32}
\text{\rm{RHS}}_{11}=-i\ds\int_x^\infty  dz\,K_1(x,z)\,\hat R'(z+y).
\end{equation}
Proceeding in a similar manner, we write \eqref{4.23} as
\begin{equation}
\label{4.33}
\text{\rm{RHS}}_{22}=\ds\int_{-\infty}^\infty \frac{d\lambda}{2\pi}
\,e^{-i\lambda y}\left(e^{-i\mu/2}\,E\,\ds\frac{\bar{\psi}_2}{\zeta}\right)\left(\frac{\bar R}{\zeta}\,\lambda\right).
\end{equation}
Using \eqref{4.27} and the second equality of \eqref{4.30}
on the right-hand side of \eqref{4.33}, we obtain the convolution 
\begin{equation}\label{4.34}
\text{\rm{RHS}}_{22}=\ds\int_x^\infty  dz\,\bar K_2(x,z)\,\hat{\bar R}'(z+y).
\end{equation}
In a similar manner, by using \eqref{4.25}, \eqref{4.26}, and \eqref{4.28}, we write
\eqref{4.21} and \eqref{4.22}, respectively, as
\begin{equation}\label{4.35}
\text{\rm{RHS}}_{12}
=\hat{\bar R}(x+y)+\ds\int_x^\infty  dz\,\bar K_1(x,z)\,\hat{\bar R}(z+y).
\end{equation}
\begin{equation}\label{4.36}
\text{\rm{RHS}}_{21}=\hat R(x+y)+\ds\int_x^\infty  dz\,K_2(x,z)\,\hat R(z+y).
\end{equation}
Hence, using \eqref{4.32}, \eqref{4.35}, \eqref{4.36}, and \eqref{4.34}
in \eqref{4.12}, we see that $\text{\rm{RHS}}$ is
equal to the sum of the second and third terms on the right-hand side of \eqref{4.1}.
Thus, in order to complete the derivation of \eqref{4.1}, it is sufficient
to show that $\text{\rm{LHS}}$ is the $2\times 2$ zero matrix when $x<y$ in the absence of bound states.
This is proved as follows. When $x<y,$ with the help of
Theorems~\ref{theorem2.2}--\ref{theorem2.5}, we observe that the integrands in \eqref{4.16} and \eqref{4.18} are analytic
in $\lambda\in\mathbb C^+,$ continuous in $\lambda\in\overline{\mathbb C^+},$
and uniformly $O(1/\lambda)$ as $\lambda\to\infty$ in $\overline{\mathbb C^+}.$
Hence, when $x<y,$ using Jordan's lemma and the residue theorem we conclude that
$\text{\rm{LHS}}_{11}$ and $\text{\rm{LHS}}_{21}$ are both zero.
Similarly, when $x<y,$ with the help of
Theorems~\ref{theorem2.2}--\ref{theorem2.5}, we observe that the integrands in \eqref{4.17} and \eqref{4.19} are analytic
in $\lambda\in\mathbb C^-,$ continuous in $\lambda\in\overline{\mathbb C^-},$
and uniformly $O(1/\lambda)$ as $\lambda\to\infty$ in $\overline{\mathbb C^-}.$
Hence, when $x<y,$ using Jordan's lemma and the residue theorem we conclude that
$\text{\rm{LHS}}_{12}$ and $\text{\rm{LHS}}_{22}$ are both zero. Thus, the proof is complete.
\end{proof}

The Marchenko integral system we have established in \eqref{4.1} is valid provided \eqref{1.1} has no bound states.
When the bound states are present,
the only modification needed in the proof of Theorem~\ref{theorem4.1} is that
the quantity $\text{\rm{LHS}}$ appearing in \eqref{4.12} and \eqref{4.14} is no longer equal to the zero matrix
due to the fact that we must take into
account the bound-state poles of the transmission coefficients in evaluating
the integrals \eqref{4.16}--\eqref{4.19}. It turns out that, using the matrix triplet pair $(A,B,C)$  and $(\bar A,\bar B,\bar C)$ 
appearing in \eqref{3.12}, \eqref{3.14}, \eqref{3.16}, \eqref{3.17},
we can express the effect 
of the bound states in the Marchenko system in a simple and elegant manner.
This amounts to replacing $\hat R(y)$ and $\hat{\bar R}(y)$ appearing in the Marchenko system
\eqref{4.1} with $\Omega(y)$ and $\bar\Omega(y),$ respectively, where we have defined
\begin{equation}\label{4.37}
\Omega(y):=\hat R(y)+C\,e^{iAy}\,B,\quad \bar\Omega(y):=\hat{\bar R}(y)+\bar C\,e^{-i\bar A y}\,\bar B.
\end{equation}
By taking the derivatives, from \eqref{4.37} we get
\begin{equation}\label{4.38}
\Omega'(y)=\hat R'(y)+i \,C A\, e^{iAy} B,\quad \bar\Omega'(y)=\hat{\bar R}'(y)-i\, \bar C \bar A \,e^{-i\bar A y} \bar B,
\end{equation}
and hence in \eqref{4.1} we also replace $\hat R'(y)$ and $\hat{\bar R}'(y)$
with $\Omega'(y)$ and $\bar\Omega'(y),$ respectively.

In fact, in the Marchenko equations for any linear system, the substitution 
 \begin{equation}\label{4.39}
\hat R(y)\mapsto \hat R(y)+C\,e^{iAy}\,B,
\quad \hat{\bar R}(y)\mapsto \hat{\bar R}(y)+\bar C\,e^{-i\bar A y}\,\bar B,
\end{equation}
is all that is needed in order to take into consideration the effect of any 
number of bound states with any multiplicities. Certainly, for linear systems where the potentials appear in the
diagonal blocks in the coefficient matrix
rather than in the off-diagonal blocks, we only use one matrix triplet $(A,B,C),$
and in that case \eqref{4.39} still holds with the understanding that the second matrix triplet $(\bar A,\bar B,\bar C)$
is absent. We remark that \eqref{4.39} is elegant for several reasons. 
When there is only one simple bound state, the eigenvalue of the matrix $A$ becomes the same as the matrix itself. In that sense,
there is an apparent correspondence between the factor $e^{i\lambda y}$ in \eqref{4.2} and $e^{i Ay}$ in
\eqref{4.39} induced by $\lambda\leftrightarrow A.$ The same is also true for the correspondence
between the factor $e^{-i\lambda y}$ in \eqref{4.2} and $e^{-i \bar Ay}$ in
\eqref{4.39} induced by $\lambda\leftrightarrow\bar A.$ 
The information containing any number of bound states with any multiplicities
and with the corresponding bound-state norming constants is all imbedded in \eqref{4.39} through the structure of
the two matrix triplets there.

In the next theorem we present the Marchenko integral system for \eqref{1.1} in the presence of bound states.

\begin{theorem}
\label{theorem4.2}
Let the potentials $q$ and $r$ in \eqref{1.1} belong to the Schwartz class.  
In the presence of bound states, the
corresponding Marchenko system for \eqref{1.1} is obtained from \eqref{4.1}
by using the substitution \eqref{4.39}, where $(A,B,C)$  and $(\bar A,\bar B,\bar C)$
are the pair of matrix triplets  
appearing in \eqref{3.12}, \eqref{3.14}, \eqref{3.16}, \eqref{3.17}. Hence, the Marchenko system for
\eqref{1.1} is
given by
\begin{equation}\label{4.40}
\begin{split}
\begin{bmatrix}
0&0\\ \noalign{\medskip}0&0
\end{bmatrix}=&\begin{bmatrix}
\bar K_1(x,y)&K_1(x,y)\\ \noalign{\medskip}\bar K_2(x,y)&K_2(x,y)
\end{bmatrix}+ \begin{bmatrix}
0&\bar\Omega(x+y)\\ \noalign{\medskip}\Omega(x+y)&0
\end{bmatrix}\\
\noalign{\medskip}
&+\ds\int_x^\infty dz\begin{bmatrix}
-iK_1(x,z)\,\Omega'(z+y)&\bar K_1(x,z)\,\bar\Omega(z+y)\\ \noalign{\medskip}
K_2(x,z)\,\Omega(z+y)&i\bar K_2(x,z)\,\bar\Omega'(z+y)
\end{bmatrix},\qquad x<y,
\end{split}
\end{equation}
where $\Omega(y)$ and $\bar\Omega(y)$ are the quantities
defined in \eqref{4.37}; $\Omega'(y)$ and
$\bar\Omega'(y)$ are the derivatives appearing in
\eqref{4.38}; and $K_1(x,y),$ $K_2(x,y),$ $\bar K_1(x,y),$
and $\bar K_2(x,y)$ are the quantities defined in
\eqref{4.3}--\eqref{4.6}, respectively.
\end{theorem}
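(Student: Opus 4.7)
The plan is to retrace the proof of Theorem~\ref{theorem4.1} verbatim up to the identity (4.12), namely $\text{LHS}=K(x,y)+\text{RHS}$, and then to identify the modifications required once the transmission coefficients $T$ and $\bar T$ acquire poles in the open half-planes. Because the Fourier transforms in \eqref{4.2} defining $\hat R$ and $\hat{\bar R}$ involve only the reflection coefficients, which by Theorem~\ref{theorem2.5}(c) remain continuous on $\mathbb R$ and decay sufficiently at infinity regardless of bound states, the four convolution identities \eqref{4.32}, \eqref{4.34}, \eqref{4.35}, \eqref{4.36} for the entries of $\text{RHS}$ go through unchanged. Likewise, the expressions \eqref{4.24}--\eqref{4.27} for the Jost solutions as inverse Fourier transforms of the $K_i$'s are unaffected, since they merely rephrase the definitions \eqref{4.3}--\eqref{4.6}. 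Thus the entire effect of the bound states lives in the new, nonzero value of $\text{LHS}$.

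The first technical step is to evaluate $\text{LHS}_{11}$ and $\text{LHS}_{21}$ by closing the contour in $\overline{\mathbb C^+}$, and $\text{LHS}_{12}$ and $\text{LHS}_{22}$ by closing in $\overline{\mathbb C^-}$. By Theorems~\ref{theorem2.2}--\ref{theorem2.5} the integrands remain $O(1/\lambda)$ at infinity in the corresponding half-planes, and they are meromorphic with poles precisely at the bound-state locations $\lambda_j$ (of order $m_j$) or $\bar\lambda_j$ (of order $\bar m_j$). By Jordan's lemma and the residue theorem, $\text{LHS}$ equals the sum of $2\pi i$ times the residues at these poles. Near each $\lambda_j$ I would expand $T(\zeta)$ using \eqref{3.1}, expand $\phi(\zeta,x)$ as a Taylor series in $\lambda$ about $\lambda_j$, and then apply the dependency relation \eqref{3.3} to rewrite every Taylor coefficient of $\phi$ at $\lambda_j$ as a linear combination of the Taylor coefficients of $\psi$ at $\lambda_j$. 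An entirely analogous step based on \eqref{3.2} and \eqref{3.5} handles $\bar T\,\bar\phi$ near each $\bar\lambda_j$.

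After these substitutions, each residue contribution becomes a finite linear combination of the quantities $d^l\psi/d\lambda^l$ at $\lambda=\lambda_j$ (and the barred analogues), with coefficients that are explicit polynomial combinations of the residues $t_{jk}$ and the dependency constants $\gamma_{jk}$. The construction of the norming constants $c_{jk}$ in Section~\ref{section3}, carried out in detail for the discrete analogue in \cite{AE2021}, is designed precisely so that these polynomial combinations assemble, block-by-block in the Jordan structure, into entries of the matrix product $C\,e^{iAy}\,B$ (and similarly $\bar C\,e^{-i\bar A y}\,\bar B$ on the lower half-plane side). Invoking the inverse Fourier identities \eqref{4.24}--\eqref{4.27} to replace each $d^l\psi/d\lambda^l$ by an integral of $K_i(x,z)$ against $z^l e^{i\lambda_j z}$, and summing over $j$ and $k$, converts the polynomial-times-exponential kernels into $e^{iA(z+y)}$ in Jordan form. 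The upshot is that $\text{LHS}$ equals exactly the nonhomogeneous term $C\,e^{iA(x+y)}\,B$ (and its barred counterpart) plus convolution integrals of the $K_i$'s against $C\,e^{iA(z+y)}\,B$ (and its derivative $iCA\,e^{iA(z+y)}B$), with the appropriate placements in the $2\times 2$ matrix dictated by which off-diagonal or diagonal entry produced the residue.

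Moving these $\text{LHS}$ contributions to the right-hand side of \eqref{4.12} and regrouping, one sees that they combine with the unchanged $\text{RHS}$ terms to produce precisely the substitution \eqref{4.39}, i.e.\ to replace $\hat R$ and $\hat{\bar R}$ in \eqref{4.1} by $\Omega$ and $\bar\Omega$ from \eqref{4.37}, and correspondingly to replace $\hat R'$ and $\hat{\bar R}'$ by $\Omega'$ and $\bar\Omega'$ from \eqref{4.38}. This is the statement \eqref{4.40}. The main obstacle is the combinatorial identity that turns the double sum $\sum_{j,k}$ of Taylor-coefficient products weighted by $t_{jk}$ and $\gamma_{jk}$ into the compact matrix expression $C\,e^{iAy}\,B$; this is exactly the block-by-block computation carried out in Theorem~15 of \cite{AE2021}, and it is also the identity that defines the norming constants $c_{jk}$ used to build the rows $C_j$ in \eqref{3.9} and $\bar C_j$ in \eqref{3.11}. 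Once this identification is in hand, the remainder of the argument is a transcription of the bound-state-free proof.
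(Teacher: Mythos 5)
Your proposal follows essentially the same route as the paper's proof: the right-hand side convolution terms are unchanged, the now-nonzero $\text{\rm{LHS}}$ is evaluated by the residue theorem at the bound-state poles of $T(\zeta)$ and $\bar T(\zeta)$, the dependency relations \eqref{3.3} and \eqref{3.5} convert the $\lambda$-derivatives of $\phi$ and $\bar\phi$ into those of $\psi$ and $\bar\psi$, and the Fourier representations \eqref{4.24}--\eqref{4.27} together with the norming-constant construction (deferred, as in the paper, to Theorem~15 of \cite{AE2021}) assemble the result into the substitution \eqref{4.39}. The argument is correct and matches the paper's, with somewhat more of the intermediate machinery spelled out.
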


\begin{proof}
As indicated in the proof of Theorem~\ref{4.1}, the quantity $\text{\rm{LHS}}$ in \eqref{4.14} is no longer the
$2\times 2$ zero matrix when the bound states are present. When $x<y,$ the integrands in
\eqref{4.16} and \eqref{4.18} are continuous 
in $\lambda\in\mathbb R,$ are $O(1/\lambda)$ as $\lambda\to\infty$
in $\overline{\mathbb C^+},$ and 
are meromophic in $\lambda\in\mathbb C^+$ with the
poles at $\lambda=\lambda_j$ with multiplicity $m_j$ for $1\le j\le N,$ where those poles are
the bound-state poles of $T(\zeta).$ Hence, when $x<y$ 
those integrals can be evaluated by using the residue theorem. The resulting
expressions contain the residues $t_{jk}$ appearing in \eqref{3.1}
and $d^k\phi(\zeta_j,x)/d\lambda^k$ for $1\le j\le N$ and $0\le k\le (m_j-1).$
Using \eqref{3.3} in the resulting expressions, we express those integrals
in terms of the residues $t_{jk}$ and the dependency constants $\gamma_{jk}$
appearing in \eqref{3.3}. In a similar manner, when $x<y$ the integrands in
\eqref{4.17} and \eqref{4.19} are continuous 
in $\lambda\in\mathbb R,$ are $O(1/\lambda)$ as $\lambda\to\infty$
in $\overline{\mathbb C^-},$ and 
are meromophic in $\lambda\in\mathbb C^-$ with the
poles at $\lambda=\bar\lambda_j$ with multiplicity $\bar m_j$ for $1\le j\le \bar N,$ where those poles are
the bound-state poles of $\bar T(\zeta).$ Thus, when $x<y$ 
those integrals can be evaluated by using the residue theorem. The resulting
expressions contain the residues $\bar t_{jk}$ appearing in \eqref{3.2}
and $d^k\bar\phi(\bar\zeta_j,x)/d\lambda^k$ for $1\le j\le \bar N$ and $0\le k\le (\bar m_j-1).$
Using \eqref{3.5} in the resulting expressions, we express those integrals
in terms of the residues $\bar t_{jk}$ and the dependency constants $\bar\gamma_{jk}$
appearing in \eqref{3.5}. We omit the details because the procedure is similar to
that given in the proof of Theorem~15 of \cite{AE2021}.
The only effect of the contribution from $\text{\rm{LHS}}$ to \eqref{4.12}
amounts to the substitution specified in \eqref{4.39}. Hence, 
with the help of \eqref{4.1}, \eqref{4.37}, and \eqref{4.38} we obtain
\eqref{4.40}, where the norming constants $c_{jk}$ are explicitly expressed
in terms of $t_{jk},$ $\gamma_{jk},$ and $\zeta_j,$
and 
the norming constants 
$\bar c_{jk}$
are explicitly expressed in terms of $\bar t_{jk},$ $\bar\gamma_{jk},$ and $\bar\zeta_j.$
\end{proof}

Let us remark that the $2\times 2$ matrix-valued coupled Marchenko system presented in \eqref{4.40} can readily be uncoupled, and it is
equivalent to the respective uncoupled scalar Marchenko integral equations for $K_1(x,y)$ and $\bar K_2(x,y)$
given by
\begin{equation}\label{4.41}
\begin{cases}
K_1(x,y)+\bar\Omega(x+y)+i\ds\int_x^\infty dz\int_x^\infty 
ds\,K_1(x,z)\,\Omega'(z+s)\,\bar\Omega(s+y)=0,
\\
\noalign{\medskip}
\bar K_2(x,y)+\Omega(x+y)-i\ds\int_x^\infty dz\int_x^\infty 
ds\,\bar K_2(x,z)\,\bar\Omega'(z+s)\,\Omega(s+y)=0,
\end{cases}
\end{equation}
where $x<y,$ with the auxiliary equations given by 
\begin{equation}\label{4.42}
\begin{cases}
\bar K_1(x,y)=i\ds\int_x^\infty dz\,K_1(x,z)\,\Omega'(z+y),\qquad x<y,
\\ \noalign{\medskip}
K_2(x,y)=-i\ds\int_x^\infty dz\,\bar K_2(x,z)\,\bar\Omega'(z+y),\qquad x<y.
\end{cases}
\end{equation}

Having established the Marchenko system for \eqref{1.1}, our goal now
is to recover the potentials $q$ and $r$ in \eqref{1.1} from the solution
$K(x,y)$ to the Marchenko system \eqref{4.40} or from the 
equivalent system of uncoupled equations given in \eqref{4.41} and \eqref{4.42}. In preparation for this, 
in the next theorem we evaluate $K(x,x)$ and $\bar K(x,x)$ from $K(x,y)$ and $\bar K(x,y)$ by letting $y \to x^+.$

\begin{proposition}\label{proposition4.3}
Assume that the potentials $q$ and $r$  appearing in \eqref{1.1} belong to the Schwartz class.
Let $K(x,y)$ be the solution to the Marchenko system
\eqref{4.40}, with the components $K_1(x,y),$ $K_2(x,y),$ $\bar K_1(x,y),$ $\bar K_2(x,y)$ 
as in \eqref{4.13}. In the limit $y\to x^+$ we have
\begin{equation}\label{4.43}
K_1(x,x)=-\ds\frac{e^{i\mu}}{2}\ds\frac{q(x)}{E(x)^2},
\end{equation}
\begin{equation}\label{4.44}
K_2(x,x)=-\ds\frac{iq(x)\,r(x)}{4}+\ds\frac{1}{2}\int_x^\infty dy\,\sigma(y),
\end{equation}
\begin{equation}\label{4.45}
\bar K_1(x,x)=\ds\frac{1}{2}\int_x^\infty dy\,\sigma(y),
\end{equation}
\begin{equation}\label{4.46}
\bar K_2(x,x)=-\ds\frac{e^{-i\mu}}{2}\,r(x)\,E(x)^2,
\end{equation}
where $E(x)$, $\mu,$ and $\sigma(x)$ are the quantities defined in \eqref{2.18}, \eqref{2.20}, and \eqref{2.37}, respectively.
\end{proposition}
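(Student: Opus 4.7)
The plan is to read off the limiting values $K_i(x,x^+)$ and $\bar K_i(x,x^+)$ by comparing two large-$\lambda$ expansions of the same quantity: the inverse Fourier representations \eqref{4.24}--\eqref{4.27} of the Jost-solution combinations, and the pointwise asymptotics supplied by Theorem~\ref{theorem2.4}.

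The mechanism is integration by parts. For any smooth, rapidly decaying $f(z)$ on $(x,\infty)$, as $\lambda\to\infty$ in $\overline{\mathbb C^+}$ one has
\[
\int_x^\infty f(z)\,e^{i\lambda z}\,dz=-\frac{f(x^+)}{i\lambda}\,e^{i\lambda x}+O\!\left(\frac{1}{\lambda^{2}}\right),
\]
and the analogous identity with $-i\lambda$ replacing $i\lambda$ as $\lambda\to\infty$ in $\overline{\mathbb C^-}$. The boundary terms at $z=\infty$ vanish because, by Theorems~\ref{theorem2.3}--\ref{theorem2.5} and the Schwartz-class hypothesis on $q,r$, the integrands defining $K_1,K_2,\bar K_1,\bar K_2$ in \eqref{4.3}--\eqref{4.6} decay arbitrarily fast in $\lambda$ after one subtracts off the explicit $e^{\pm i\lambda x}$ contributions, making the kernels themselves Schwartz in $y$ on $(x,\infty)$.

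With this tool I would then treat the four relations in parallel. For \eqref{4.43}, combine \eqref{4.24} with the $\psi_1$-asymptotic in \eqref{2.36}: using $\zeta=\sqrt{\lambda}$ and multiplying out the prefactors, the right side becomes
\[
e^{i\lambda x}\!\left[\frac{e^{i\mu}\,q(x)\,E(x)^{-2}}{2i\lambda}+O\!\left(\frac{1}{\lambda^{2}}\right)\right],
\]
and matching the coefficient of $e^{i\lambda x}/\lambda$ against the integration-by-parts expansion yields \eqref{4.43}. The same scheme applied to \eqref{4.25} with the $\psi_2$-asymptotic produces \eqref{4.44} (with both the $qr/4$ and the $-\tfrac{1}{2i\lambda}\int_x^\infty\sigma$ contributions combining into the claimed expression); the pair \eqref{4.26} and the $\bar\psi_1$-asymptotic from \eqref{2.38} yields \eqref{4.45}; and \eqref{4.27} together with the $\bar\psi_2$-asymptotic yields \eqref{4.46}.

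The computation itself is routine coefficient matching; the only point that must be treated with care is the vanishing of the boundary term at $z=+\infty$ in the integration by parts, together with the legitimacy of the $O(1/\lambda^{2})$ remainder. Both are handled by Theorem~\ref{theorem2.4} combined with the Schwartz-class decay of $q,r$, which permits iterating the asymptotic expansion to any order and therefore guarantees the required smoothness and rapid $y$-decay of each Marchenko kernel.
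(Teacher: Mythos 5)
Your proposal is correct and follows essentially the same route as the paper's proof: integrate by parts in the Fourier representations \eqref{4.24}--\eqref{4.27}, discard the boundary term at infinity and the remainder via smoothness and decay of the kernels (the paper invokes the Riemann--Lebesgue lemma for real $\lambda\to\pm\infty$), and match the coefficient of $e^{\pm i\lambda x}/\lambda$ against the large-$\lambda$ asymptotics of Theorem~\ref{theorem2.4}. Your intermediate expression for the right-hand side in the $K_1$ case agrees with the paper's \eqref{4.50}, so no further comment is needed.
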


\begin{proof}
Let us recall that $\zeta$ and $\lambda$ are related to each other as in \eqref{2.11}.
We obtain the proof by establishing the large $\lambda$-asymptotics of the Jost solutions
$\psi(\zeta,x)$ and $\bar\psi(\zeta,x)$ 
expressed in terms of the Fourier transforms given in \eqref{4.24}--\eqref{4.27} and by
comparing the results with the corresponding asymptotics expressions given in Theorem~\ref{theorem2.4}.
For example, in order to establish \eqref{4.43}, we write \eqref{4.24} as 
\begin{equation}\label{4.47}
\ds\frac{e^{i\mu/2}\,\psi_1(\zeta,x)}{\zeta\,E(x)}= 
\ds\int_x^\infty dy\left[K_1(x,y)\,\frac{d}{dy}\,\frac{e^{i\lambda y}}{i\lambda}\right],
\end{equation}
and using integration by parts, from \eqref{4.47}
we obtain
\begin{equation}
\label{4.48}
\ds\frac{e^{i\mu/2}\,\psi_1(\zeta,x)}{\zeta\,E(x)}=
K_1(x,y)\,\frac{e^{i\lambda y}}{i\lambda}\Bigg|_{y=x}^{y=\infty}-\int_x^\infty dy\,\frac{e^{i\lambda y}}
{i\lambda}\frac{\partial\,K_1(x,y)}{\partial y}.
\end{equation}
Since the potentials in \eqref{1.1} belong to the Schwartz class, the 
corresponding Jost solutions and their Fourier transforms are sufficiently smooth.
By letting 
$\lambda\to\pm\infty$ in \eqref{4.48} and using the Riemann--Lebesgue lemma, from \eqref{4.48} we get
 \begin{equation}\label{4.49}
 \ds\frac{e^{i\mu/2}\,\psi_1(\zeta,x)}{\zeta\,E(x)}=-\ds\frac{K_1(x,x)\,e^{i\lambda x}}{i\lambda}+o\left(\frac{1}{\lambda}\right).
 \end{equation}
The large $\zeta$-asymptotics of $\psi_1(\zeta,x)$ is given in the first component 
of \eqref{2.36}, and we use it on the left-hand side of \eqref{4.49} and obtain
\begin{equation}\label{4.50}
e^{i\mu+i\lambda x}\left[\ds\frac{q(x)}{2i\lambda\,E(x)^2} +O\left(\ds\frac{1}{\lambda^2}\right)\right]=
-\ds\frac{K_1(x,x)\,e^{i\lambda x}}{i\lambda}+o\left(\frac{1}{\lambda}\right),
\end{equation}
By comparing the first-order terms on both sides of \eqref{4.50}, we get \eqref{4.43}.
We then establish \eqref{4.44}--\eqref{4.46} by proceeding in a similar manner, i.e.
by using integration by parts in \eqref{4.25}--\eqref{4.27},
obtain the large $\lambda$-asymptotics in the resulting expressions
with the help of the Riemann--Lebesgue lemma, then by using
the large $\zeta$-asymptotics from \eqref{2.36} and \eqref{2.38} in the resulting
equalities, and finally by comparing the first-order terms
in the corresponding asymptotic expressions. 
\end{proof}

In the next theorem we show how to recover the relevant quantities for \eqref{1.1}, including
the potentials and the Jost solutions, from the solution to the corresponding
Marchenko system \eqref{4.40}.

\begin{theorem}
\label{theorem4.4}
Let the potentials $q$ and $r$ in \eqref{1.1} belong to the Schwartz class. The relevant quantities
are recovered from the solution to the
Marchenko system \eqref{4.40} or equivalently from the uncoupled counterpart given in \eqref{4.41} and \eqref{4.42}
as follows:

\begin{enumerate}

\item[\text{\rm(a)}] The scalar quantity
$E(x)$ given in \eqref{2.18} is recovered from
the solution to the Marchenko system as
\begin{equation}\label{4.51}
E(x)=\exp\left(2\ds\int_{-\infty}^{x}dz\,Q(z)\right),
\end{equation}
where $Q(x)$ is the auxiliary scalar quantity constructed from $\bar K_1(x,y)$ and $K_2(x,y)$ as
\begin{equation}\label{4.52}
Q(x):=\bar K_1(x,x)-K_2(x,x).
\end{equation}

\item[\text{\rm(b)}] The complex-valued scalar constant $\mu$ given in \eqref{2.20} is
obtained from the solution to the Marchenko system as
\begin{equation}\label{4.53}
\mu=-4i\ds\int_{-\infty}^\infty dz\,Q(z).
\end{equation}

\item[\text{\rm(c)}] The potentials $q$ and $r$ are recovered from the solution to the Marchenko system as
\begin{equation}\label{4.54}
q(x)=-2K_1(x,x)\exp\left(-4\ds\int_x^\infty dz\,Q(z)\right),
\end{equation}
\begin{equation}\label{4.55}
r(x)=-2\bar K_2(x,x)\exp\left(4\ds\int_x^\infty dz\,Q(z)\right).
\end{equation}

\item[\text{\rm(d)}] The Jost solutions $\psi(\zeta,x)$ and $\bar\psi(\zeta,x)$ 
to \eqref{1.1} are recovered 
from the solution to the Marchenko system as
\begin{equation}\label{4.56}
\psi_1(\zeta,x)= \zeta\left(\ds\int_x^\infty dy\,K_1(x,y)\,e^{i\zeta^2y}\right)
\exp\left(-2\ds\int_x^\infty dz\,Q(z)
\right),
\end{equation}
\begin{equation}\label{4.57}
\psi_2(\zeta,x)=\left(e^{i\zeta^2x}+\ds\int_x^\infty dy\,K_2(x,y)\,e^{i\zeta^2 y}\right) \exp\left(2\ds\int_x^\infty dz\,Q(z)
\right),
\end{equation}
\begin{equation}\label{4.58}
\bar{\psi}_1(\zeta,x)=\left (e^{-i\zeta^2x}+\ds\int_x^\infty dy\,\bar K_1(x,y)\,e^{-i\zeta^2 y}\right)
\exp\left(-2\ds\int_x^\infty dz\,Q(z)
\right),
\end{equation}
\begin{equation}\label{4.59}
\bar{\psi}_2(\zeta,x)= \zeta\left(\ds\int_x^\infty dy\,\bar K_2(x,y)\,e^{-i\zeta^2y}\right)
\exp\left(2\ds\int_x^\infty dz\,Q(z)\right),
\end{equation}
where $\psi_1(\zeta,x),$ $\psi_2(\zeta,x),$ $\bar{\psi}_1(\zeta,x),$ and $\bar{\psi}_2(\zeta,x)$ 
are the components of the Jost solutions defined in \eqref{2.9}.

\end{enumerate}
\end{theorem}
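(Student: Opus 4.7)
The plan is to read off everything from Proposition~\ref{proposition4.3} together with the Fourier inversion identities \eqref{4.24}--\eqref{4.27} that were already established in the proof of Theorem~\ref{theorem4.1}. The single algebraic observation that drives all of part (a)--(c) is that the auxiliary quantity $Q(x)$ defined in \eqref{4.52} is nothing but $\tfrac{i}{4}q(x)\,r(x)$: subtracting \eqref{4.44} from \eqref{4.45} collapses the two tail integrals $\tfrac{1}{2}\int_x^\infty \sigma(y)\,dy$ against each other and leaves only the term $\tfrac{i}{4}q(x)\,r(x)$.

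First I would prove (a). Differentiating \eqref{2.18} gives $(\ln E(x))' = \tfrac{i}{2}q(x)\,r(x) = 2Q(x)$, and since $E(-\infty)=1$ by \eqref{2.19}, integrating from $-\infty$ to $x$ yields \eqref{4.51}. Part (b) then follows by evaluating \eqref{4.51} at $x=+\infty$ and using the second identity of \eqref{2.19}, namely $E(+\infty)=e^{i\mu/2}$; taking logarithms gives $i\mu/2 = 2\int_{-\infty}^\infty Q(z)\,dz$, which is \eqref{4.53}. For part (c), I would solve \eqref{4.43} for $q(x)$ and \eqref{4.46} for $r(x)$, obtaining $q(x) = -2K_1(x,x)\,E(x)^2 e^{-i\mu}$ and $r(x) = -2\bar K_2(x,x)\,E(x)^{-2}e^{i\mu}$. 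Combining \eqref{4.51} with $e^{i\mu}=E(+\infty)^2$ converts each prefactor into the exponential of a tail integral of $Q$:
\begin{equation*}
\frac{E(x)^2}{e^{i\mu}} = \exp\!\left(-4\int_x^\infty Q(z)\,dz\right), \qquad
\frac{e^{i\mu}}{E(x)^2} = \exp\!\left(4\int_x^\infty Q(z)\,dz\right),
\end{equation*}
which produce \eqref{4.54} and \eqref{4.55} immediately.

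For part (d), the Fourier inversion identities \eqref{4.24}--\eqref{4.27} already express $\psi_1/\zeta$, $\psi_2$, $\bar\psi_1$, $\bar\psi_2/\zeta$ as Fourier transforms of the Marchenko kernel entries, up to multiplication by $e^{\pm i\mu/2}E(x)^{\pm 1}$. I would solve each of those identities for the relevant Jost component and then rewrite the scalar prefactor using the same two boxed exponential identities above, together with $\lambda=\zeta^2$ from \eqref{2.11}. For instance, \eqref{4.24} rearranges to
\begin{equation*}
\psi_1(\zeta,x) = \zeta\,e^{-i\mu/2}E(x)\int_x^\infty K_1(x,y)\,e^{i\zeta^2 y}\,dy,
\end{equation*}
and since $e^{-i\mu/2}E(x) = \exp\!\bigl(-2\int_x^\infty Q(z)\,dz\bigr)$ by the same tail-integral calculation, \eqref{4.56} follows. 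Formulas \eqref{4.57}--\eqref{4.59} are obtained in an identical manner from \eqref{4.25}--\eqref{4.27}.

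The only mildly tricky step is the bookkeeping with the $e^{\pm i\mu/2}E(x)^{\pm 1}$ factors, where one has to consistently use $E(+\infty)=e^{i\mu/2}$ to convert an integral of $Q$ from $-\infty$ to $x$ into an integral from $x$ to $+\infty$ (with the opposite sign). There is no analytic obstacle beyond this — all ingredients are already in Proposition~\ref{proposition4.3}, equations \eqref{4.24}--\eqref{4.27}, and the defining identities \eqref{2.18}--\eqref{2.20}, so the argument is essentially a chain of substitutions.
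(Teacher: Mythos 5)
Your proposal is correct and follows essentially the same route as the paper's proof: both hinge on the identity $Q(x)=\tfrac{i}{4}q(x)\,r(x)$ obtained from \eqref{4.44} and \eqref{4.45}, and then recover $E(x),$ $\mu,$ the potentials, and the Jost solutions by substituting into \eqref{2.18}, \eqref{2.20}, \eqref{4.43}, \eqref{4.46}, and \eqref{4.24}--\eqref{4.27}. The only difference is that you spell out the tail-integral conversion $e^{-i\mu/2}E(x)=\exp\bigl(-2\int_x^\infty Q\bigr)$ explicitly, which the paper leaves implicit.
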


\begin{proof}
From \eqref{4.44} and \eqref{4.45}, we see that the auxiliary scalar quantity $Q(x)$ defined in \eqref{4.52}
is related to the potentials $q$ and $r$ as
\begin{equation}
\label{4.60}
Q(x)=\ds\frac{i\,q(x)\,r(x)}{4}.
\end{equation}
Hence, from \eqref{2.18} and \eqref{4.60} we see that $E(x)$ is recovered as in \eqref{4.51}, which completes the proof of (a).
Similarly, from \eqref{2.20} and \eqref{4.60} we observe that $\mu$ is recovered as in \eqref{4.53}, 
and therefore the proof of (b) is also completed.
Let us now prove (c). Having obtained $E(x)$ and $\mu,$ we see that we can recover
$q(x)$ with the help of \eqref{4.43}. Thus, using \eqref{4.51} and \eqref{4.53}
in \eqref{4.43} we recover $q(x)$ as in \eqref{4.54}. Similarly, having 
$E(x)$ and $\mu$ already recovered, we see that we can obtain
$r(x)$ from \eqref{4.46}. Therefore, using \eqref{4.51} and \eqref{4.53}
in \eqref{4.46} we recover $r(x)$ as in  \eqref{4.55}. 
Let us now prove (d). Having $E(x)$ and $\mu$ at hand, we use \eqref{2.11}, \eqref{4.51}, and \eqref{4.53} in 
\eqref{4.24}--\eqref{4.27}, respectively, and get \eqref{4.56}--\eqref{4.59}.
Hence, the whole proof is complete.
\end{proof}

As in any inverse problem, the inverse problem for \eqref{1.1} has four aspects: the existence, uniqueness,
reconstruction, and characterization. The existence deals with the question whether
there exists at least one pair of potentials $q(x)$ and $r(x)$ in some class corresponding to a given set of
scattering data in a particular class. Once the existence problem is
solved, the uniqueness deals with the question whether there is only one pair of
potentials for that scattering data set or there are more such pairs. The reconstruction is concerned with the
recovery of the potentials from the scattering data set. Finally, the characterization
deals with the specification of the class of potentials and the class of scattering data sets
so that there is a one-to-one correspondence between the elements of
the class of potentials and the class of scattering data sets. It is clear that in this paper we only
deal with the reconstruction aspect of the inverse problem for \eqref{1.1}.
The remaining three aspects are challenging and need to be investigated. Since
the linear differential operator related to \eqref{1.1} is not selfadjoint, the analysis of the inverse problem
for \eqref{1.1} is naturally complicated. We anticipate that the development of the Marchenko method
in this paper will provide a motivation for the scientific community to analyze the other three
aspects of the corresponding inverse problem.

\section{Solution formulas with reflectionless scattering data}
\label{section5}

In this section we provide the solution to the Marchenko system \eqref{4.40} when the reflection coefficients
in the input scattering data
set are zero. Using the results of Section~\ref{section4}, we then obtain
the corresponding potentials and Jost solutions explicitly expressed in terms of the matrix
triplets $(A,B,C)$ and $(\bar A,\bar B,\bar C)$ 
with the triplet sizes $\mathcal N$ and $\bar{\mathcal N},$ respectively. 
We recall that $\mathcal N$ and $\bar{\mathcal N}$ are the integers appearing in
\eqref{3.13} and \eqref{3.15}, respectively.
Thus, with $R(\zeta)\equiv 0$ and $\bar R(\zeta)\equiv 0,$ from \eqref{4.37} 
and \eqref{4.38} we get
\begin{equation}\label{5.1}
\Omega(y)=C\,e^{iAy}\,B,\quad \bar\Omega(y)=\bar C\,e^{-i\bar A y}\,\bar B,
\end{equation}
\begin{equation}\label{5.2}
\Omega'(y)=i\,C A \,e^{iAy}\,B,\quad \bar\Omega'(y)=-i\bar C\bar A \,e^{-i\bar A y}\,\bar B.
\end{equation}
With the input from \eqref{5.1} and \eqref{5.2}, the Marchenko system \eqref{4.40} or the equivalent
uncoupled Marchenko system given in \eqref{4.41} and \eqref{4.42} is explicitly solvable by the
methods of linear algebra because the corresponding integral kernels are separable.
Consequently, we obtain the closed-form formulas for the potentials and Jost solutions for \eqref{1.1} 
corresponding to all reflectionless scattering data, where the formulas are explicitly expressed
in terms of the two matrix triplets.
We present the relevant formulas when the 
matrix triplet 
sizes $\mathcal N$ and $\bar{\mathcal N}$ 
are arbitrary.
We then prove that, if the potentials $q$ and $r$ in \eqref{1.1} belong to the
Schwartz class, in the reflectionless case we must have 
$\mathcal N=\bar{\mathcal N}.$

In the next theorem we present the solution to the Marchenko system with the input from \eqref{5.1} and \eqref{5.2},
which are uniquely determined by the matrix triplets 
$(A,B,C)$ and $(\bar A,\bar B,\bar C).$

\begin{theorem}
\label{theorem5.1}
When the scattering data set in \eqref{5.1} is used as input,
the Marchenko system \eqref{4.40} corresponding to \eqref{1.1} has the solution expressed in closed form given by
\begin{equation}\label{5.3}
K_1(x,y)
=-\bar C\,e^{-i\bar A x}\,\bar\Gamma(x)^{-1}\,e^{-i\bar A y}\,\bar B,
\end{equation}
\begin{equation}\label{5.4}
K_2(x,y)
=C\,e^{iAx}\,\Gamma(x)^{-1}\,e^{iAx}\,M\,
\bar A\,e^{-i\bar A (x+y)}\,\bar B,
\end{equation}
\begin{equation}\label{5.5}
\bar K_1(x,y)
=\bar C\,e^{-i\bar A x}\,\bar\Gamma(x)^{-1}\,e^{-i\bar A x}\,
\bar M\,A\,e^{iA(x+y)}\,B,
\end{equation}
\begin{equation}\label{5.6}
\bar K_2(x,y)
=-C\,e^{iAx}\,\Gamma(x)^{-1}\,e^{iAy}\,B,
\end{equation}
where $(A,B,C)$ and $(\bar A,\bar B,\bar C)$
are the two matrix triplets appearing in \eqref{5.1}, and
 $\Gamma(x),$ $\bar\Gamma(x),$ $M,$ and $\bar M$ are the matrices defined in terms of
the two matrix triplets as
\begin{equation}\label{5.7}
\Gamma(x):=I-e^{iAx}\,M\,\bar A\,e^{-2i\bar A x}\,\bar M\,e^{iAx},
\end{equation}
\begin{equation}\label{5.8}
\bar\Gamma(x):=I-e^{-i\bar A x}\,\bar M\,A\,e^{2iAx}\,M\,e^{-i\bar A x},
\end{equation}
\begin{equation}\label{5.9}
M:=\int_{0}^\infty dz\,e^{iAz}\,B\,\bar C\,e^{-i\bar A z},\quad 
\bar M:=\int_{0}^\infty dz\,e^{-i\bar A z}\,\bar B\,C\,e^{i A z},
\end{equation}
with $I$ denoting an identity matrix whose size is not necessarily the
same 
in different appearances.
\end{theorem}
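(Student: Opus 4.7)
The plan is to exploit the rank-one-in-$y$ structure of the kernels in the reflectionless case. From \eqref{5.1}, $\Omega(z+y)=Ce^{iAz}e^{iAy}B$ and $\bar\Omega(z+y)=\bar Ce^{-i\bar A z}e^{-i\bar A y}\bar B$ split cleanly into factors depending on $z$ and on $y$. I would therefore search for solutions of the uncoupled Marchenko equations \eqref{4.41} of the form
\begin{equation*}
K_1(x,y)=\bar h(x)\,e^{-i\bar A y}\,\bar B,\qquad \bar K_2(x,y)=h(x)\,e^{iAy}\,B,
\end{equation*}
with unknown row vectors $\bar h(x)$ of length $\bar{\mathcal N}$ and $h(x)$ of length $\mathcal N$.

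Substituting the first ansatz into the first equation of \eqref{4.41}, and using that each of the three factors in the integrand is a scalar so that the order may be rearranged, the double integral splits into a product of two integrals of the type appearing in \eqref{5.9}, translated from $0$ to $x$ via the identities
\begin{equation*}
\int_x^\infty e^{iAz}B\,\bar C\,e^{-i\bar A z}\,dz=e^{iAx}\,M\,e^{-i\bar A x},\qquad \int_x^\infty e^{-i\bar A z}\bar B\,C\,e^{iAz}\,dz=e^{-i\bar A x}\,\bar M\,e^{iAx},
\end{equation*}
obtained from \eqref{5.9} by the change of variable $z\mapsto z+x$. Repeatedly invoking the commutation of $A$ with $e^{iAx}$ and of $\bar A$ with $e^{-i\bar A x}$, in particular to rewrite $e^{iAx}Ae^{iAx}=Ae^{2iAx}$, the collapsed double integral is $-\bar h(x)\bigl[I-\bar\Gamma(x)\bigr]e^{-i\bar A y}\bar B$, where $\bar\Gamma(x)$ is exactly the expression in \eqref{5.8}. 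After cancellation of the inhomogeneous term against $\bar h(x)e^{-i\bar A y}\bar B$, the equation reduces to
\begin{equation*}
\bigl[\bar h(x)\,\bar\Gamma(x)+\bar C\,e^{-i\bar A x}\bigr]\,e^{-i\bar A y}\bar B=0,\qquad y>x.
\end{equation*}
Differentiating in $y$ repeatedly at $y=x^+$ and using that $(\bar A,\bar B)$ is controllable (which follows from the minimal special form \eqref{3.10}-\eqref{3.11} because each Jordan block $(\bar A_j,\bar B_j)$ is controllable and the triplet is block diagonal), I obtain $\bar h(x)\bar\Gamma(x)=-\bar C e^{-i\bar A x}$, yielding \eqref{5.3}. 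An entirely symmetric argument applied to the second equation of \eqref{4.41} produces \eqref{5.6}.

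The remaining formulas \eqref{5.5} and \eqref{5.4} for $\bar K_1$ and $K_2$ follow by direct substitution of \eqref{5.3} and \eqref{5.6} into the auxiliary equations \eqref{4.42}. For $\bar K_1$, the translation identity for $\bar M$ applies and the identity $Ce^{iAz}A=CAe^{iAz}$ pulls the $A$ factor to the right of $\bar M$, producing the $\bar M A$ block in \eqref{5.5}; the $K_2$ computation is symmetric, with the overall sign in \eqref{5.4} reproduced by tracking the factor $-i$ in $\bar\Omega'=-i\bar C\bar Ae^{-i\bar A y}\bar B$ from \eqref{5.2}.

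The main obstacle is bookkeeping rather than any deep step. Because the triplet sizes $\mathcal N$ and $\bar{\mathcal N}$ may differ, outer products such as $\bar B\,C$ that appear after reordering the scalar factors must be read carefully as $\bar{\mathcal N}\times\mathcal N$ matrices, not as conventional matrix products. The manipulations are legitimate because each term in the triple integrand is a $1\times 1$ scalar and hence commutes freely, but one must track dimensions attentively when reassembling the result so that the matrix compositions line up with the definitions \eqref{5.7}-\eqref{5.9} of $\Gamma$, $\bar\Gamma$, $M$ and $\bar M$.
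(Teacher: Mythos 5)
Your proposal is correct and follows essentially the same route as the paper: the separated ansatz for $K_1$ and $\bar K_2$ in the uncoupled system \eqref{4.41}, the translation identities that collapse the double integrals into $I-\bar\Gamma(x)$ and $I-\Gamma(x)$, and then \eqref{5.5} and \eqref{5.4} by direct substitution into \eqref{4.42}. Your only addition is the controllability argument used to pass from the vanishing of $\bigl[\bar h(x)\,\bar\Gamma(x)+\bar C e^{-i\bar A x}\bigr]e^{-i\bar A y}\bar B$ for all $y>x$ to the vanishing of the bracket itself, a uniqueness refinement the paper bypasses by simply positing the separated form and verifying it.
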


\begin{proof}
Since the Marchenko system \eqref{4.40} is equivalent to the uncoupled
system given in \eqref{4.41} and \eqref{4.42}, we use \eqref{5.1} and \eqref{5.2} as input to
that uncoupled sysytem. The first line of \eqref{4.41} yields
\begin{equation*}
K_1(x,y)
+\bar C\,e^{-i\bar Ax-i\bar Ay}\,\bar B
+i\ds\int_x^\infty dz\int_x^\infty 
ds\,K_1(x,z)\,i\,C\,A\,e^{iAz+iAs}\,B\,\bar C\,e^{-i\bar As-i\bar A y}\,\bar B=0,
\end{equation*}
whose solution has the form
\begin{equation}\label{5.10}
K_1(x,y)=H_1(x)\,e^{-i\bar A y}\,\bar B,
\end{equation}
with $H_1(x)$ satisfying
\begin{equation}\label{5.11}
H_1(x)\left[I-\ds\int_x^\infty dz\int_x^\infty 
ds\,e^{-i\bar A z}\,\bar B\,C\,e^{iAz}\,A\,e^{iAs}\,B\bar C\,e^{-i\bar A s}\right]
=-\bar C\,e^{-i\bar A x}.
\end{equation}
The matrix in the brackets in \eqref{5.11} is equal to $\bar\Gamma(x)$ defined 
in \eqref{5.8}, and this can be seen by observing that
\begin{equation}\label{5.12}
\ds\int_x^\infty dz
\,e^{-i\bar A z}\,\bar B\,C\,e^{iAz}=e^{-i\bar A x}\,\bar M\, e^{iAx},
\end{equation}
\begin{equation}\label{5.13}
\ds\int_x^\infty ds
\,e^{i A s}\,B\,\bar C\,e^{-i \bar A s}=e^{iAx} \,M\, e^{-i\bar A x},
\end{equation}
where $M$ and $\bar M$ are the constant matrices defined
in \eqref{5.9}.
When the eigenvalues of $A$ are located in $\mathbb C^+$ and
the eigenvalues of $\bar A$ are in $\mathbb C^-,$ we see that
the two integrals in \eqref{5.9} are well defined. From \eqref{5.9} we also see that
the matrices $M$ and $\bar M$ can alternatively
be obtained from the matrix triplets 
$(A,B,C)$ and $(\bar A,\bar B,\bar C)$
by solving the respective linear systems given by
\begin{equation*}
i M\bar A-iA M=B \bar C, \quad i\bar A \bar M-i\bar M  A=\bar B C.
\end{equation*}
From \eqref{5.11} we have
\begin{equation}\label{5.14}
H_1(x)=-\bar C\,e^{-i\bar A x} \,\bar \Gamma(x)^{-1}.\end{equation}
Hence, using \eqref{5.14} in \eqref{5.10} we get \eqref{5.3}.
We obtain \eqref{5.4} in a similar manner, by using 
\eqref{5.1} and \eqref{5.2} as input in the second line of \eqref{4.41}.
We then have
\begin{equation*}
\bar K_2(x,y)
+C\,e^{iAx+iAy}\,B
-\ds\int_x^\infty dz\int_x^\infty 
ds\,\bar K_2(x,z)\,\bar C\,\bar A\,e^{-i\bar A z-i\bar A s}\,\bar B\,C\,e^{iAs+iAy}\,B=0,
\end{equation*}
whose solution has the form
\begin{equation}\label{5.15}
\bar K_2(x,y)=H_2(x)\,e^{iAy}\,B,
\end{equation}
with $H_2(x)$ satisfying
\begin{equation}\label{5.16}
H_2(x)\left[I-\ds\int_x^\infty dz\int_x^\infty 
ds\,e^{iAz}\,B\,\bar C\,e^{-i\bar A z}\,\bar A\,e^{-i\bar A s}\,\bar B\,C\,e^{iAs}\right]
=-C\,e^{iAx}.
\end{equation}
With the help of \eqref{5.12} and \eqref{5.13} we 
observe that the matrix in the brackets in \eqref{5.16} is equal to the matrix $\Gamma(x)$ defined in
\eqref{5.7}, and hence \eqref{5.16} yields
\begin{equation}\label{5.17}
H_2(x)=-C\,e^{iAx}\,\Gamma(x)^{-1}.
\end{equation}
Using \eqref{5.17} in \eqref{5.15} we obtain \eqref{5.4}.
Finally, using \eqref{5.3} and \eqref{5.4} as input
to \eqref{4.42}, with the help of
\eqref{5.9} we get \eqref{5.5} and \eqref{5.6}.
\end{proof}

In the next theorem we present the explicit expressions for the key quantity $E(x)$ in \eqref{2.18} and the potentials $q$ and $r$ in \eqref{1.1}
corresponding to the reflectionless scattering data set described by the 
pair of matrix triplets $(A,B,C)$ and $(\bar A,\bar B,\bar C).$

\begin{theorem}
\label{theorem5.2}
The scalar quantity $E(x)$ defined in \eqref{2.18} 
corresponding to the reflectionless scattering data in \eqref{5.1} is expressed explicitly 
in terms of the matrix triplets $(A,B,C)$ and $(\bar A,\bar B,\bar C)$ as
\begin{equation}\label{5.18}
E(x)=\exp\left(2\ds\int_{-\infty}^x dz\left[g_1(z)-g_2(z)\right]\right),
\end{equation}
and the potentials $q$ and $r$ in \eqref{1.1}
corresponding to the same reflectionless scattering data set are expressed explicitly 
in terms of the matrix triplets as
\begin{equation}\label{5.19}
q(x)=2\,\bar C\,e^{-i\bar A x}\,\bar\Gamma(x)^{-1}\,e^{-i\bar A x}\,\bar B\,e^{-4\,G(x)},
\end{equation}
\begin{equation}\label{5.20}
r(x)=2\,C\,e^{i A x}\,\Gamma(x)^{-1}\,e^{i A x}\,B\,e^{4\,G(x)},
\end{equation}
where we have defined
\begin{equation}
\label{5.21}
G(x):=\ds\int_x^\infty dz\left[g_1(z)-g_2(z)\right],
\end{equation}
\begin{equation}
\label{5.22}
g_1(z):=\bar C\,e^{-i\bar A z}\,\bar\Gamma(z)^{-1}\,e^{-i\bar A z}\,\bar M\,A\,e^{2iAz}\,B,
\end{equation}
\begin{equation}
\label{5.23}
g_2(z):=C\,e^{iAz}\,\Gamma(z)	^{-1}\,e^{iAz}\,M\,\bar A\,e^{-2i\bar A z}\,\bar B,
\end{equation}
with $\Gamma(x),$ $\bar\Gamma(x),$ $M,$ and $\bar M$  being the matrices
 appearing in \eqref{5.7}, \eqref{5.8}, and \eqref{5.9}.
\end{theorem}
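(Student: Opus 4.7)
The plan is to combine the closed-form Marchenko solution of Theorem~\ref{theorem5.1} with the recovery procedure of Theorem~\ref{theorem4.4}. No new integral equation needs to be solved; the argument is essentially bookkeeping of matrix exponentials restricted to the diagonal $y=x$.

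First, I would evaluate each of the four kernel components \eqref{5.3}--\eqref{5.6} at $y=x$. Setting $y=x$ in \eqref{5.3} and \eqref{5.6} gives
\begin{equation*}
K_1(x,x)=-\bar C\,e^{-i\bar A x}\bar\Gamma(x)^{-1}e^{-i\bar A x}\bar B,\qquad
\bar K_2(x,x)=-C\,e^{iAx}\Gamma(x)^{-1}e^{iAx}B.
\end{equation*}
Setting $y=x$ in \eqref{5.4} and \eqref{5.5} collapses the factor $e^{\pm iA(x+y)}$ into $e^{\pm 2iAx}$, and a term-by-term comparison with \eqref{5.22} and \eqref{5.23} identifies these restrictions with the scalar quantities $g_1$ and $g_2$; that is, $\bar K_1(x,x)=g_1(x)$ and $K_2(x,x)=g_2(x)$.

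Second, I would insert these diagonal values into the auxiliary scalar $Q(x):=\bar K_1(x,x)-K_2(x,x)$ of \eqref{4.52}, obtaining $Q(x)=g_1(x)-g_2(x)$. Substituting this into \eqref{4.51} yields \eqref{5.18} immediately, and recognizing that $G(x)$ in \eqref{5.21} is precisely $\int_x^\infty Q(z)\,dz$ prepares the way for the potential formulas.

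Third, I would apply the recovery formulas \eqref{4.54} and \eqref{4.55}. Substitution of the expressions for $K_1(x,x)$ and $\bar K_2(x,x)$ above, together with $\int_x^\infty Q(z)\,dz=G(x)$, arranges the signs so that the factor $-2$ cancels the minus sign in the kernel values, producing \eqref{5.19} and \eqref{5.20}.

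The argument carries essentially no obstacle; the only points requiring care are sign tracking, the correct matching of $\bar K_1(x,x)$ and $K_2(x,x)$ with the definitions \eqref{5.22}--\eqref{5.23} (where one must notice that $e^{iA(x+x)}=e^{2iAx}$ and similarly for $\bar A$), and the tacit spectral hypothesis that the eigenvalues of $A$ and $\bar A$ lie in $\mathbb C^+$ and $\mathbb C^-$ respectively, which guarantees convergence of the integrals in \eqref{5.9} and invertibility of $\Gamma(x)$ and $\bar\Gamma(x)$ (already used implicitly in the derivation of Theorem~\ref{theorem5.1}).
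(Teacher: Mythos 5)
Your proposal is correct and follows essentially the same route as the paper: evaluate the closed-form kernels \eqref{5.3}--\eqref{5.6} on the diagonal $y=x$, identify $\bar K_1(x,x)=g_1(x)$ and $K_2(x,x)=g_2(x)$ so that $Q(x)=g_1(x)-g_2(x)$, and then substitute into \eqref{4.51}, \eqref{4.54}, and \eqref{4.55}. The sign bookkeeping and the matching with \eqref{5.22}--\eqref{5.23} all check out.
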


\begin{proof} From \eqref{4.52}, \eqref{5.4}, and \eqref{5.5} we observe that
\begin{equation}
\label{5.24}
g_1(x)=\bar K_1(x,x), \quad g_2(x)=K_2(x,x), \quad g_1(x)-g_2(x)=Q(x).
\end{equation}
Then, we get \eqref{5.18} by using the last equality of \eqref{5.24} in \eqref{4.51}. We obtain \eqref{5.19} by 
using \eqref{5.3}, \eqref{5.4}, and \eqref{5.5} in 
\eqref{4.52} and \eqref{4.54} and simplifying the resulting expression.
Similarly, we get \eqref{5.20} by using \eqref{5.4}, \eqref{5.5}, and \eqref{5.6} in 
\eqref{4.52} and \eqref{4.55} and simplifying the resulting expression.
\end{proof}

In the following theorem, the Jost solutions $\psi(\zeta,x)$ and $\bar\psi(\zeta,x)$
corresponding to the reflectionless potentials are expressed explicitly in terms of the 
pair of matrix triplets $(A,B,C)$ and $(\bar A,\bar B,\bar C).$

\begin{theorem}
\label{theorem5.3}
The Jost solutions to \eqref{1.1} appearing in \eqref{2.1} and \eqref{2.2},
respectively, corresponding to the reflectionless
potentials $q$ and $r$ given in \eqref{5.19} and \eqref{5.20}, are
explicitly expressed in terms of the
pair of matrix triplets $(A,B,C)$ and $(\bar A,\bar B,\bar C)$ as
\begin{equation}\label{5.25}
\psi_1(\zeta,x)= \zeta e^{i \zeta^2 x}
\,g_3(\zeta,x)\, e^{-2\, G(x)},
\end{equation}
\begin{equation}\label{5.26}
\psi_2(\zeta,x)= e^{i \zeta^2 x}
\,g_4(\zeta,x)\, e^{2\, G(x)},
\end{equation}
\begin{equation}\label{5.27}
\bar\psi_1(\zeta,x)= e^{-i \zeta^2 x}
\,g_5(\zeta,x)\, e^{-2\, G(x)},
\end{equation}
\begin{equation}\label{5.28}
\bar\psi_2(\zeta,x)= \zeta e^{-i \zeta^2 x}
\,g_6(\zeta,x)\, e^{2\, G(x)},
\end{equation}
where $G(x)$ is the quantity defined via \eqref{5.21}--\eqref{5.23},
and the quantities $g_3(\zeta,x),$
$g_4(\zeta,x),$ $g_5(\zeta,x),$ $g_6(\zeta,x)$
are defined as
\begin{equation*}
g_3(\zeta,x):=
i\,\bar C\,e^{-i\bar A x}\,\bar\Gamma(x)^{-1}\,e^{-i\bar A x} (\bar A -\zeta^2 I)^{-1}\bar B,
\end{equation*}
\begin{equation}\label{5.29}
g_4(\zeta,x):=1-i\,C\,e^{iAx}\,\Gamma(x)^{-1}\,e^{iAx}\,M\,\bar A\,e^{-2i\bar A x}\,(\bar A -\zeta^2 I)^{-1}\,\bar B,
\end{equation}
\begin{equation}\label{5.30}
g_5(\zeta,x):=1+
i\,\bar C\,e^{-i\bar A x}\,\bar\Gamma(x)^{-1}\,e^{-i\bar A x}\,\bar M\,A\,e^{2iAx}\,(A-\zeta^2 I)^{-1}\,B,
\end{equation}
\begin{equation*}
g_6(\zeta,x):=
-i\,C\,e^{iAx}\,\Gamma(x)^{-1}\,e^{iAx}\,(A-\zeta^2 I)^{-1}\,B,
\end{equation*}
with $\Gamma(x),$ $\bar\Gamma(x),$ $M,$ and $\bar M$  being the matrices
 appearing in \eqref{5.7}, \eqref{5.8}, and \eqref{5.9}.
\end{theorem}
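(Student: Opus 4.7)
The plan is to substitute the closed-form expressions for $K_1(x,y),$ $K_2(x,y),$ $\bar K_1(x,y),$ $\bar K_2(x,y)$ obtained in Theorem~\ref{theorem5.1} into the reconstruction formulas \eqref{4.56}--\eqref{4.59} of Theorem~\ref{theorem4.4}, carry out the resulting one-dimensional matrix-exponential integrals in $y$ over $[x,\infty),$ and identify the answers with $g_3,g_4,g_5,g_6.$

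First I would handle the scalar exponential prefactors: by \eqref{5.24} we have $Q(z)=g_1(z)-g_2(z),$ and hence the integral $\int_x^\infty dz\,Q(z)$ appearing in \eqref{4.56}--\eqref{4.59} is exactly the quantity $G(x)$ defined in \eqref{5.21}. Therefore the outer factors in \eqref{4.56}--\eqref{4.59} immediately become $e^{\pm 2G(x)},$ matching those in \eqref{5.25}--\eqref{5.28}.

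Next, the main computation is the evaluation of four integrals of the form $\int_x^\infty dy\,K_j(x,y)\,e^{\pm i\zeta^2 y}.$ Using \eqref{5.3}--\eqref{5.6}, each of these becomes a matrix sandwich with a single $y$-dependent matrix exponential between constant factors; for instance,
\begin{equation*}
\int_x^\infty dy\,K_1(x,y)\,e^{i\zeta^2 y}
=-\bar C\,e^{-i\bar A x}\,\bar\Gamma(x)^{-1}\left(\int_x^\infty dy\,e^{i(\zeta^2 I-\bar A)y}\right)\bar B.
\end{equation*}
Because the eigenvalues of $\bar A$ lie in $\mathbb C^-$ (by construction of the bound-state triplet) and $\zeta^2 I - \bar A$ is invertible on $\mathbb R,$ the matrix $i(\zeta^2 I-\bar A)$ has eigenvalues with negative real part, so the integral converges and equals
\begin{equation*}
-[i(\zeta^2 I-\bar A)]^{-1}\,e^{i(\zeta^2 I-\bar A)x}
=-i(\bar A-\zeta^2 I)^{-1}\,e^{-i\bar A x}\,e^{i\zeta^2 x}.
\end{equation*}
Substituting this back and pulling out $\zeta e^{i\zeta^2 x}$ yields \eqref{5.25} with the stated $g_3.$ The same mechanism, using that the eigenvalues of $A$ lie in $\mathbb C^+,$ disposes of the integral for $\bar K_2(x,y)\,e^{-i\zeta^2 y}$ and gives \eqref{5.28} with $g_6.$ For $K_2(x,y)$ and $\bar K_1(x,y)$ the integrals contain an extra factor $e^{\mp i\bar A x}\,\bar M$ or $e^{\pm iAx}\,\bar M\,A,$ but the $y$-dependence is again only through $e^{i(\zeta^2 I-\bar A)y}$ or $e^{i(A-\zeta^2 I)y},$ so the same antiderivative formulas apply; the additional constant $1$ coming from the $e^{\pm i\zeta^2 x}$ term in \eqref{4.57} and \eqref{4.58} combines with the integral to produce the definitions of $g_4$ and $g_5$ in \eqref{5.29} and \eqref{5.30}.

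There is no real obstacle beyond bookkeeping: the only delicate point is to verify the sign of $\tfrac{1}{-i}=i$ and the direction of the identity $(\zeta^2 I-\bar A)^{-1}=-(\bar A-\zeta^2 I)^{-1}$ consistently in all four cases, and to check that the integrands decay at $y=\infty$ — which is guaranteed by the location of the spectra of $A$ and $\bar A$ in $\mathbb C^+$ and $\mathbb C^-,$ respectively. Once the four integrals are in hand, comparing with \eqref{4.56}--\eqref{4.59} gives \eqref{5.25}--\eqref{5.28} exactly as stated.
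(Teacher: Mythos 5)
Your proposal is correct and follows essentially the same route as the paper: substitute the closed-form kernels \eqref{5.3}--\eqref{5.6} into \eqref{4.56}--\eqref{4.59}, evaluate the matrix-exponential integrals over $[x,\infty)$ using the spectral locations of $A$ and $\bar A$ (this is exactly the paper's \eqref{5.31}--\eqref{5.34}), and identify the exponential prefactors with $e^{\pm 2G(x)}$ via \eqref{5.24}. The sign and resolvent bookkeeping you carry out for the $K_1$ integral matches the paper's computation.
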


\begin{proof}
With the input \eqref{5.1} specified in terms of the matrix triplets
$(A,B,C)$ and $(\bar A,\bar B,\bar C),$ the corresponding
solution to the Marchenko system \eqref{4.40} is explicitly given 
in \eqref{5.3}--\eqref{5.6}. Using those expressions
in \eqref{4.56}--\eqref{4.59},
we obtain the corresponding Jost solutions 
$\psi(\zeta,x)$ and $\bar\psi(\zeta,x)$ specified in \eqref{5.25}--\eqref{5.28}. The details are as follows.
As seen from \eqref{4.52} and the last equality in \eqref{5.24},
the exponential factors on the right-hand sides of
\eqref{4.56}--\eqref{4.59}
 are equal to either $e^{-2 G(x)}$ or $e^{2 G(x)},$
where $G(x)$ is the quantity defined in \eqref{5.21}.
From \eqref{5.21}--\eqref{5.23}, we observe that each of
$e^{-2 G(x)}$ and $e^{2 G(x)}$ is explicitly expressed
in terms of the two matrix triplets
$(A,B,C)$ and $(\bar A,\bar B,\bar C).$
We then consider the integral terms related
to the Fourier transforms in 
\eqref{4.56}--\eqref{4.59} and show that
each of those integrals can be explicitly expressed in terms
of our two matrix triplets. In fact, from
\eqref{5.3} we get
\begin{equation}\label{5.31}
\ds\int_x^\infty dy\,e^{i\lambda y}K_1(x,y)=i\,\bar C\,e^{-i\bar A x}\,\bar\Gamma(x)
^{-1}\,(\bar A -\lambda I)^{-1}e^{-i\bar A x+i\lambda x}\bar B.
\end{equation}
With the help of \eqref{2.11} and \eqref{5.31}, we write \eqref{4.56} as \eqref{5.25}.
Similarly, from \eqref{5.4} we obtain
\begin{equation}\label{5.32}
\ds\int_x^\infty dy \,e^{i\lambda y}K_2(x,y)
=-i\,C\,e^{iAx}\,\Gamma
^{-1}\,e^{iAx}\,M\,\bar A \,e^{-i\bar A x}\,(\bar A -\lambda I)^{-1}\,e^{-i\bar A x+i\lambda x}\,\bar B,
\end{equation}
and with the help of \eqref{2.11} and \eqref{5.32}, we express \eqref{4.57} as \eqref{5.26}.
In a similar manner, from \eqref{5.5} we get
\begin{equation}\label{5.33}
\ds\int_x^\infty dy\,e^{-i\lambda y}\bar K_1(x,y)
=i\,\bar C\,e^{-i\bar A x}\,\bar\Gamma(x)
^{-1}\,e^{-i\bar A x}\,\bar M\,A\,e^{iAx}\,(A-\lambda I)^{-1}\,e^{iAx-i\lambda x}\,B.
\end{equation}
Then, using \eqref{2.11} and \eqref{5.33} we write \eqref{4.58} as \eqref{5.27}.
Finally, from \eqref{5.6} we obtain
\begin{equation}\label{5.34}
\ds\int_x^\infty dy\,e^{-i\lambda y}\bar K_2(x,y)=-i\,C\,e^{iAx}\,\Gamma(x)
^{-1}\,(A-\lambda I)^{-1}\,e^{iAx-i\lambda x}\,B,
\end{equation}
and with the help of \eqref{2.11} and \eqref{5.34}, we write \eqref{4.59} as \eqref{5.28}.
We remark that the right-hand sides in \eqref{5.25}--\eqref{5.28} are all
expressed explicitly in terms of the matrix
triplets $(A,B,C)$ and $(\bar A,\bar B,\bar C)$ because
$\Gamma(x),$ $\bar\Gamma(x),$ $M,$ and $\bar M$ and in turn
the quantities $g_1(x),$
$g_2(x),$ $g_3(\zeta,x),$
$g_4(\zeta,x),$ $g_5(\zeta,x),$ $g_6(\zeta,x),$ and $G(x)$ are all 
explicitly expressed in terms of those two matrix triplets.
\end{proof}

For the reflectionless scattering data set specified in \eqref{5.1}, in Theorem~\ref{theorem5.1} we
have determined the corresponding solution to the Marchenko system \eqref{4.40}, in Theorem~\ref{theorem5.2} we
have provided the corresponding potentials, and in Theorem~\ref{theorem5.3} we
have obtained the corresponding Jost solutions. In the next theorem, for that same data set
we express the corresponding value of the constant $\mu$ defined in \eqref{2.20} and the corresponding 
transmission coefficients
explicitly in terms of the matrix triplets $(A,B,C)$ and $(\bar A,\bar B,\bar C)$
appearing in \eqref{5.1}.

\begin{theorem}
\label{theorem5.4}
For the reflectionless scattering data set specified in \eqref{5.1} expressed explicitly in terms of the matrix
triplets $(A,B,C)$ and $(\bar A,\bar B,\bar C),$ we have the following:

\begin{enumerate}

\item[\text{\rm(a)}] The corresponding value of the constant $\mu$ defined in\eqref{2.20} is explicitly determined by
the matrix triplet pair $(A,B,C)$ and $(\bar A,\bar B,\bar C)$ as
\begin{equation}\label{5.35}
\mu=-4 i\ds\int_{-\infty}^\infty dz\left[g_1(z)-g_2(z)\right],
\end{equation}
and hence the value of $e^{i\mu/2}$ is determined by
the matrix triplet pair as
\begin{equation}\label{5.36}
e^{i\mu/2}=\exp\left(2\ds\int_{-\infty}^\infty dz\left[g_1(z)-g_2(z)\right]
\right),
\end{equation}
where, as seen from \eqref{5.22} and \eqref{5.23},
the quantities $g_1(z)$ and $g_2(z)$ are explicitly determined
by our matrix triplet pair with the help of \eqref{5.7}--\eqref{5.9}.

\item[\text{\rm(b)}] The transmission coefficients
$T(\zeta)$ and $\bar T(\zeta)$ corresponding to 
 the reflectionless scattering data set specified in \eqref{5.1} are explicitly determined by
the matrix triplet pair $(A,B,C)$ and $(\bar A,\bar B,\bar C)$ as
\begin{equation}\label{5.37}
T(\zeta)=\ds\frac{1}{g_4(\zeta,-\infty)}\,\exp\left(-2\ds\int_{-\infty}^\infty dz\left[g_1(z)-g_2(z)\right]
\right),
\end{equation}
\begin{equation}\label{5.38}
\bar T(\zeta)=\ds\frac{1}{g_5(\zeta,-\infty)}\,\exp\left(2\ds\int_{-\infty}^\infty dz\left[g_1(z)-g_2(z)\right]
\right),
\end{equation}
where, as seen from \eqref{5.29} and \eqref{5.30},
the quantities $g_4(\zeta,x)$ and $g_5(\zeta,x)$ are explicitly determined
by our pair of matrix triplets with the help of \eqref{5.7}--\eqref{5.9}.

\end{enumerate}
\end{theorem}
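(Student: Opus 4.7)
\smallskip
\noindent
\textbf{Proof proposal.} My plan is to deduce part~(a) directly from the recovery formula \eqref{4.53} combined with the identification \eqref{5.24} that was already produced inside the proof of Theorem~\ref{theorem5.2}, and to obtain part~(b) by inserting the closed-form Jost-solution formulas \eqref{5.26} and \eqref{5.27} into the spatial asymptotics \eqref{2.5} and \eqref{2.6} that define $T(\zeta)$ and $\bar T(\zeta),$ then reading off the coefficients of the plane waves $e^{\pm i\zeta^2 x}$ as $x\to-\infty.$

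For part (a), I would recall from \eqref{4.52} and the last equality of \eqref{5.24} that $Q(x)=g_1(x)-g_2(x)$ whenever the scattering data are reflectionless and described by the matrix triplets $(A,B,C)$ and $(\bar A,\bar B,\bar C).$ Substituting this into the general recovery formula $\mu=-4i\int_{-\infty}^\infty dz\,Q(z)$ from \eqref{4.53} yields \eqref{5.35} immediately, and exponentiating $i\mu/2$ gives \eqref{5.36}. Neither of these steps requires any new computation; they are simply assemblies of previously derived pieces.

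For part (b), I would start from the explicit reflectionless formula \eqref{5.26} for $\psi_2(\zeta,x),$ namely $\psi_2(\zeta,x)=e^{i\zeta^2 x}\,g_4(\zeta,x)\,e^{2G(x)},$ and compare it with the defining asymptotics \eqref{2.5}, which gives $\psi_2(\zeta,x)=e^{i\zeta^2 x}\left[1/T(\zeta)+o(1)\right]$ as $x\to-\infty.$ Matching coefficients of $e^{i\zeta^2 x}$ in the $x\to-\infty$ limit produces $1/T(\zeta)=g_4(\zeta,-\infty)\,e^{2G(-\infty)},$ and since $G(-\infty)=\int_{-\infty}^\infty dz\,[g_1(z)-g_2(z)]$ directly from \eqref{5.21}, inverting the expression delivers \eqref{5.37}. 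The derivation of \eqref{5.38} is the mirror image: I would use \eqref{5.27} for $\bar\psi_1(\zeta,x)$ together with the asymptotics \eqref{2.6}, matching coefficients of $e^{-i\zeta^2 x}$ and solving for $\bar T(\zeta).$

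The main technical obstacle is justifying the pointwise limits $g_4(\zeta,x)\to g_4(\zeta,-\infty)$ and $g_5(\zeta,x)\to g_5(\zeta,-\infty)$ as $x\to-\infty,$ because the exponential factors $e^{iAx},$ $e^{-i\bar A x},$ and $e^{\pm 2iAx}$ appearing in \eqref{5.29}--\eqref{5.30} grow in that limit once one uses that the eigenvalues of $A$ lie in $\mathbb C^+$ and those of $\bar A$ lie in $\mathbb C^-.$ I would handle this by observing that the same growing exponentials appear in $\Gamma(x)^{-1}$ and $\bar\Gamma(x)^{-1}$ through \eqref{5.7}--\eqref{5.8}, so that the dominant exponential behavior cancels in the products defining $g_4$ and $g_5$; the residual quantities, together with $G(x),$ possess finite limits as $x\to-\infty$ because the Jost solutions themselves behave like plane waves times bounded coefficients in that limit (consistent with Theorem~\ref{theorem2.4} and the fact that $q$ and $r$ decay faster than any polynomial). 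Once these limits are established, parts (a) and (b) follow by the substitutions described above.
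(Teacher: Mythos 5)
Your proposal is correct and follows essentially the same route as the paper: part (a) is obtained by combining \eqref{4.53} with the identity $Q(x)=g_1(x)-g_2(x)$ from \eqref{5.24}, and part (b) by matching the closed-form expressions \eqref{5.26} and \eqref{5.27} against the $x\to-\infty$ asymptotics in \eqref{2.5} and \eqref{2.6} via \eqref{5.21}. Your additional remarks on why the limits $g_4(\zeta,-\infty)$ and $g_5(\zeta,-\infty)$ exist address a point the paper leaves implicit, but they do not change the argument.
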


\begin{proof}
We obtain \eqref{5.35} directly from \eqref{4.53} and
the last equality in \eqref{5.24}.
Then, \eqref{5.36} is a direct consequence of \eqref{5.35}. Alternatively, as seen from the second equality in \eqref{2.19},
we get \eqref{5.36} from \eqref{5.18} by letting $x\to+\infty$ there. Hence, the proof of (a)
is complete.
Note that \eqref{5.37} follows from the second component of \eqref{2.5} with
the help of \eqref{5.21} and
\eqref{5.26}. Similarly, \eqref{5.38} is
obtained by using the first component of \eqref{2.6} with
the help of \eqref{5.21} and
\eqref{5.27}. 
\end{proof}

As indicated at the end of Section~\ref{section4}, in this paper we only deal with the reconstruction
aspect of the inverse problem for \eqref{1.1}. Hence, the results presented in this section should be interpreted
in the sense of the reconstruction. The potentials and the corresponding Jost solutions are reconstructed
explicitly in Theorems~\ref{theorem5.2} and \ref{theorem5.3}, respectively, from their reflectionless
scattering data expressed in terms of a pair of matrix triplets. When the potentials $q$ and $r$
belong to the Schwartz class, there are additional restrictions on the two matrix triplets used in Theorem~\ref{5.2}.
As seen from \eqref{5.19} and \eqref{5.20},
those restrictions amount to the following: The determinants of the matrices $\Gamma(x)$ and $\bar\Gamma(x)$ defined
in \eqref{5.7} and \eqref{5.8} should not vanish for any $x\in\mathbb R,$ and the
exponential terms in \eqref{5.19} and \eqref{5.20} should not cause an exponential
increase and in fact should not yield a nonzero asymptotic
value as $x\to\pm\infty.$ In the next section we will illustrate this issue with
some explicit examples.

When the potentials $q$ and $r$ in \eqref{1.1} belong to the Schwartz class, in the reflectionless case we present an
important restriction on the number of bound states for \eqref{1.1}, and we now elaborate on this issue. 
Recall that the nonnegative
integer
$\mathcal N$ defined in \eqref{3.13} corresponds to the number of bound states, including the multiplicities, associated with the
bound-state poles of the transmission coefficient $T(\zeta)$ in the first quadrant in
the complex $\zeta$-plane. Similarly, the nonnegative
integer
$\bar{\mathcal N}$ defined in \eqref{3.15} corresponds to the number of bound states, including the multiplicities, 
associated with the
bound-state poles of the transmission coefficient $\bar T(\zeta)$ in the second quadrant in
the complex $\zeta$-plane.
In general,  $\mathcal N$ and $\bar{\mathcal N}$
do not have to be equal to each other. 
However, in the reflectionless case, when $q$ and $r$ belong to the Schwartz class, 
we will prove that we must have $\mathcal N=\bar{\mathcal N}.$ In fact, we will prove that this is also true for the AKNS system 
\eqref{1.7}, i.e. when the potentials $u$ and $v$ belong to the Schwartz class, in the reflectionless case
the number of bound-state poles, including the multiplicities, of the transmission coefficient $T^{(u,v)}(\lambda)$ in $\mathbb C^+$
must be equal to the number of bound-state poles, including the multiplicities, of the transmission coefficient $\bar T^{(u,v)}(\lambda)$ in $\mathbb C^-.$
Thus, in the explicit solution formulas presented in Theorems~\ref{theorem5.1}--\ref{theorem5.4}, unless we choose
the sizes of the matrices $A$ and $\bar A$ equal to each other, the corresponding potentials $q$ and $r$ both
cannot belong to the Schwartz class. We will illustrate this in Example~\ref{example6.7} in the next section.

The next theorem indicates the restriction $\mathcal N=\bar{\mathcal N}$ in the reflectionless case
when the potentials in \eqref{1.1} belong to the Schwartz class.

\begin{theorem}
\label{theorem5.5}
Let the potentials $q$ and $r$ in \eqref{1.1} belong to the Schwartz class, and assume that
the corresponding reflection coefficients $R(\zeta)$ and $\bar R(\zeta)$ 
appearing in \eqref{2.7} and \eqref{2.8}, respectively, are zero.
Then, we have the following:

\begin{enumerate}

\item[\text{\rm(a)}] The corresponding nonnegative
integers
$\mathcal N$ and $\bar{\mathcal N}$ defined in \eqref{3.13} and \eqref{3.15}, respectively, must be equal to each other, i.e. we must have
\begin{equation}\label{5.39}
\mathcal N=\bar{\mathcal N}.
\end{equation}

\item[\text{\rm(b)}] The corresponding matrix triplets $(A,B,C)$ and $(\bar A,\bar B,\bar C)$ appearing in \eqref{5.1} must have the same sizes.

\end{enumerate}

\end{theorem}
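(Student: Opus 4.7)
The plan is to reduce part (b) to part (a): once $\mathcal N=\bar{\mathcal N}$ is established, the block matrices $A$ and $\bar A$ constructed via \eqref{3.12} and \eqref{3.16} have matching sizes $\mathcal N\times\mathcal N$ and $\bar{\mathcal N}\times\bar{\mathcal N}$, and the accompanying column and row vectors $B,\bar B,C,\bar C$ of \eqref{3.14} and \eqref{3.17} then match accordingly. So the work lies in part (a). The first step is to derive the scattering identity
\[
T(\zeta)\,\bar T(\zeta)=1-R(\zeta)\,\bar R(\zeta),\qquad \zeta\in\mathbb R,
\]
by computing the Wronskian $[\phi(\zeta,x);\bar\phi(\zeta,x)]$ in two ways. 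It is $x$-independent; its evaluation as $x\to-\infty$ via \eqref{2.3} and \eqref{2.4} yields $1$, while its evaluation as $x\to+\infty$ via \eqref{2.7} and \eqref{2.8} yields $(1-R\bar R)/(T\bar T)$. Under the reflectionless hypothesis $R\equiv\bar R\equiv 0$, this collapses to $T(\lambda)\bar T(\lambda)=1$ for $\lambda\in\mathbb R$.

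Next, I would extend $T(\lambda)$ to a rational function on all of $\mathbb C$. By Theorem~\ref{theorem2.5}, $T$ is meromorphic on $\mathbb C^+$ and continuous on $\mathbb R$, with exactly $\mathcal N$ poles counted with multiplicity; analogously $\bar T$ is meromorphic on $\mathbb C^-$ and continuous on $\mathbb R$ with $\bar{\mathcal N}$ such poles. The identity $T\bar T=1$ on $\mathbb R$ lets me define a meromorphic continuation of $T$ into $\mathbb C^-$ by $\tilde T(\lambda):=1/\bar T(\lambda)$, whose boundary values on $\mathbb R$ match those of $T$. A standard application of Morera's theorem to triangles straddling the real axis, combined with the continuity of the boundary values, shows that $T$ and $\tilde T$ together form a single meromorphic function on $\mathbb C$. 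By the large-$\lambda$ asymptotics \eqref{2.39} and \eqref{2.40}, this combined function tends to the finite nonzero value $e^{-i\mu/2}$ as $|\lambda|\to\infty$, so it is rational.

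The final step is to count poles and zeros of the rational extension; for a rational function with a finite nonzero limit at infinity, these totals must agree. The poles in $\mathbb C^+$ are the $\mathcal N$ bound-state poles of $T$, while the zeros in $\mathbb C^-$ are the $\bar{\mathcal N}$ bound-state poles of $\bar T$ (since the extension equals $1/\bar T$ there). Additional poles in $\mathbb C^-$ would come from zeros of $\bar T$ in $\mathbb C^-$, and additional zeros in $\mathbb C^+$ would come from zeros of the original $T$ in $\mathbb C^+$. The main obstacle is thus to rule out these extra zeros. I would handle this using the Wronskian representations \eqref{2.13} together with Theorem~\ref{theorem2.2}(b): each of the combinations $[\phi;\psi]=\phi_1\psi_2-\phi_2\psi_1$ and $[\bar\psi;\bar\phi]$ is analytic (not merely meromorphic) in the relevant half-plane of $\lambda$, because the odd $\zeta$-factors in the Jost components combine into even expressions in $\lambda=\zeta^2$. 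Analyticity forbids poles in those Wronskians, which via \eqref{2.13} forbids zeros of $T$ in $\mathbb C^+$ and of $\bar T$ in $\mathbb C^-$. No zeros or poles can lie on $\mathbb R$, by continuity of both $T$ and $\bar T$ together with $T\bar T=1$. Equating the total pole count $\mathcal N$ with the total zero count $\bar{\mathcal N}$ then delivers $\mathcal N=\bar{\mathcal N}$.
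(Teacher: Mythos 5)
Your proof is correct, but it reaches \eqref{5.39} by a genuinely different route than the paper. The paper imports the relation $T^{(u,v)}\bar T^{(u,v)}+R^{(u,v)}\bar R^{(u,v)}=1$ from the AKNS literature and transfers it via \eqref{2.43}--\eqref{2.46}, then factors $T$ and $\bar T$ into Blaschke-type products times zero-free factors $T_0,\bar T_0$, rearranges $T\bar T=1$ so that one side is analytic in $\lambda\in\mathbb C^+$ and the other in $\lambda\in\mathbb C^-$, concludes by Liouville that both equal a monic polynomial $P(\lambda)$ of degree $\mathcal N-\bar{\mathcal N}$, and finally shows $P$ can have no zeros, forcing $P\equiv 1$. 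You instead (i) derive $T\bar T=1-R\bar R$ directly from the $x$-independence of the Wronskian $[\phi;\bar\phi]$, which is more self-contained than citing \eqref{5.44}; (ii) glue $T$ on $\overline{\mathbb C^+}$ to $1/\bar T$ on $\overline{\mathbb C^-}$ across $\mathbb R$ by Morera to obtain a single rational function with nonzero limit $e^{-i\mu/2}$ at infinity; and (iii) equate its total pole count ($\mathcal N$, in $\mathbb C^+$) with its total zero count ($\bar{\mathcal N}$, in $\mathbb C^-$). Both arguments ultimately rest on the same three facts --- $T\bar T=1$ on $\mathbb R$, the absence of zeros of $T$ in $\mathbb C^+$ and of $\bar T$ in $\mathbb C^-$ because $1/T=[\phi;\psi]$ and $1/\bar T=[\bar\psi;\bar\phi]$ are analytic (your observation that the odd $\zeta$-factors pair into even functions of $\zeta$, hence functions of $\lambda$, is exactly what Theorem~\ref{theorem2.2}(b) provides), and the nonvanishing asymptotics \eqref{2.39}--\eqref{2.40}. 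Your version buys economy: it avoids the explicit Blaschke factorization and the separate "monic polynomial with no zeros" step, replacing them with the standard zero--pole count for a rational function. The paper's version keeps the quantity $\mathcal N-\bar{\mathcal N}$ visible as the degree of $P(\lambda)$ and never needs an explicit continuation across the real axis. Your reduction of part (b) to part (a) via the sizes of $A$ and $\bar A$ matches the paper.
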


\begin{proof} 
Recall that the spectral parameter $\zeta$ is related to the parameter $\lambda$ as in \eqref{2.11}.
Based on the bound-state information provided in \eqref{3.7}, we know that
the transmission coefficient $T(\zeta)$ appearing in \eqref{2.5} is a meromorphic function of $\lambda$ in $\mathbb C^+$ with the poles
at $\lambda=\lambda_j,$ each with multiplicity $m_j$ for $1\le j\le N.$ Using Theorem~\ref{theorem2.5} we conclude
that the quantity $1/T(\zeta)$ is analytic in $\lambda\in\mathbb C^+,$ is continuous in $\lambda\in\overline{\mathbb C^+},$ 
vanishes only
at $\lambda=\lambda_j$ for $1\le j\le N,$ and has the large $\lambda$-asymptotics described in \eqref{2.39}. 
Similarly, from Theorem~\ref{theorem2.5} we conclude that the transmission coefficient 
$\bar T(\zeta)$ appearing in \eqref{2.6} is analytic in $\lambda\in\mathbb C^-,$ 
is continuous in $\lambda\in\overline{\mathbb C^-},$ vanishes only
at $\lambda=\bar\lambda_k$ for $1\le k\le \bar N,$ and has the large $\lambda$-asymptotics described in \eqref{2.40}. 
Let us write $T(\zeta)$ and $\bar T(\zeta),$ respectively, as
\begin{equation}\label{5.40}
T(\zeta)=T_0(\zeta)\,\ds\prod_{j=1}^N\ds\left(\frac{\lambda-\lambda_j^*}{\lambda-\lambda_j}\right)^{m_j},
\end{equation}
\begin{equation}\label{5.41}
\bar T(\zeta)=\bar T_0(\zeta)\,\ds\prod_{k=1}^{\bar N}\ds\left(\frac{\lambda-\bar\lambda_k^*}{\lambda-\bar\lambda_k}\right)^{\bar m_k},
\end{equation}
where $1/T_0(\zeta)$ 
is analytic in $\lambda\in\mathbb C^+,$ 
is continuous in $\lambda\in\overline{\mathbb C^+},$ does not vanish in $\overline{\mathbb C^+},$
and has the large $\lambda$-asymptotics given by
\begin{equation}\label{5.42}
\ds\frac{1}{T_0(\zeta)}=\ds e^{i\mu/2}\left[1+O\left(\frac{1}{\lambda}\right)\right],\qquad \lambda\to\infty  \text{\rm{ in }} 
\lambda\in\overline{\mathbb C^+},
\end{equation}			
and $1/\bar T_0(\zeta)$ 
is analytic in $\lambda\in\mathbb C^-,$ 
is continuous in $\lambda\in\overline{\mathbb C^-},$ does not vanish in $\overline{\mathbb C^-},$
and has the large $\lambda$-asymptotics given by
\begin{equation}\label{5.43}
\ds\frac{1}{\bar T_0(\zeta)}=\ds e^{-i\mu/2}\left[1+O\left(\frac{1}{\lambda}\right)\right],\qquad \lambda\to\infty  \text{\rm{ in }} 
\lambda\in\overline{\mathbb C^-}.
\end{equation}	
Note that we use an asterisk to denote complex conjugation.
It is known \cite{AKNS1974,E2018} that the scattering coefficients for the AKNS system
\eqref{1.7} satisfy
\begin{equation}\label{5.44}
T^{(u,v)}(\lambda)\,\bar T^{(u,v)}(\lambda)+R^{(u,v)}(\lambda)\,\bar R^{(u,v)}(\lambda)=1,
\qquad \lambda\in\mathbb R.
\end{equation}
Using the first equalities of \eqref{2.43}--\eqref{2.46} in \eqref{5.44} we obtain
\begin{equation}\label{5.45}
T(\zeta)\,\bar T(\zeta)+R(\zeta)\,\bar R(\zeta)=1,
\end{equation}
and hence, in the reflectionless case,  from \eqref{5.45} we get
\begin{equation}\label{5.46}
T(\zeta)\,\bar T(\zeta)=1,\qquad \lambda\in\mathbb R,
\end{equation}
where we recall that $T(\zeta)$ and $\bar T(\zeta)$ each contain $\zeta$ as $\zeta^2$ and thus
\eqref{5.46} is valid for $\lambda\in\mathbb R.$
Using \eqref{5.40} and \eqref{5.41} in \eqref{5.46} we obtain
\begin{equation}\label{5.47}
T_0(\zeta)\,\bar T_0(\zeta)\ds\prod_{j=1}^N\ds\left(\frac{\lambda-\lambda_j^*}{\lambda-\lambda_j}\right)^{m_j}\,
\ds\prod_{k=1}^{\bar N}\ds\left(\frac{\lambda-\bar\lambda_k^*}{\lambda-\bar\lambda_k}\right)^{\bar m_k}
=1,\qquad \lambda\in\mathbb R.
\end{equation}
Let us rewrite \eqref{5.47} so that the left-hand side is analytic in $\lambda\in \mathbb C^+$ 
and the right-hand side is analytic in $\lambda\in \mathbb C^-.$
For $\lambda\in\mathbb R,$ we then get
\begin{equation}\label{5.48}
\ds e^{i\mu/2}\,T_0(\zeta)\,\ds\prod_{j=1}^N\ds\left(\lambda-\lambda_j^*\right)^{m_j}\,
\ds\prod_{k=1}^{\bar N}\ds\left(\frac{1}{\lambda-\bar\lambda_k}\right)^{\bar m_k}
=\ds\frac{1}{\ds e^{-i\mu/2}\,\bar T_0(\zeta)}\,\ds\prod_{j=1}^N\ds\left(\lambda-\lambda_j\right)^{m_j}\,
\ds\prod_{k=1}^{\bar N}\ds\left(\frac{1}{\lambda-\bar\lambda_k^*}\right)^{\bar m_k}.
\end{equation}
We must have either $\mathcal N\ge \bar{\mathcal N}$ or
$\mathcal N\le \bar{\mathcal N}.$ We will prove that either of those two inequalities can hold only in the
case of an equality. The proof for the former is as follows.
When $\mathcal N\ge \bar{\mathcal N},$ with the help of \eqref{5.42} we conclude that
the left-hand side of \eqref{5.48} has an extension from $\lambda\in\mathbb R$ to
$\mathbb C^+$ in such a way that that extension is analytic in $\lambda\in\mathbb C^+,$ continuous in
$\lambda\in\overline{\mathbb C^+},$ and is asymptotic to a monic polynomial $P(\lambda)$ of degree
$\mathcal N-\bar{\mathcal N}$ as $\lambda\to\infty$ in $\overline{\mathbb C^+}.$
Similarly, with the help of \eqref{5.43} we conclude that
the right-hand side of \eqref{5.48} is analytic in $\lambda\in\mathbb C^-,$ continuous in
$\lambda\in\overline{\mathbb C^-},$ and is asymptotic to $P(\lambda)$ as $\lambda\to\infty$ in $\overline{\mathbb C^-}.$ 
Thus, both sides of \eqref{5.48} must have an analytic extension to the entire complex $\lambda$-plane and 
be equal to a monic polynomial of degree $\mathcal N-\bar{\mathcal N},$ i.e. we
must have
\begin{equation}\label{5.49}
P(\lambda)=
\ds e^{i\mu/2}\,T_0(\zeta)\,\ds\prod_{j=1}^N\ds\left(\lambda-\lambda_j^*\right)^{m_j}\,
\ds\prod_{k=1}^{\bar N}\ds\left(\frac{1}{\lambda-\bar\lambda_k}\right)^{\bar m_k},\qquad \lambda\in\mathbb C,
\end{equation}
\begin{equation}\label{5.50}
P(\lambda)=
\ds\frac{1}{\ds e^{-i\mu/2}\,\bar T_0(\zeta)}\,\ds\prod_{j=1}^N\ds\left(\lambda-\lambda_j\right)^{m_j}\,
\ds\prod_{k=1}^{\bar N}\ds\left(\frac{1}{\lambda-\bar\lambda_k^*}\right)^{\bar m_k},\qquad \lambda\in\mathbb C.
\end{equation}
From \eqref{5.46} we see that neither $T(\zeta)$ nor $\bar T(\zeta)$ can have any zeros or any poles
when $\lambda\in\mathbb R.$ Hence, from \eqref{5.41} we can conclude that
$\bar T_0(\zeta)$ does not have any poles
when $\lambda\in\mathbb R.$
Consequently, from \eqref{5.50} we conclude that $P(\lambda)$ cannot have any zeros
when $\lambda\in\mathbb R.$
From the right-hand side of \eqref{5.50}, we also see that $P(\lambda)$ cannot have any zeros when
$\lambda\in\mathbb C^-$ and as a result any zero of $P(\lambda)$ can only occur when $\lambda\in\mathbb C^+.$
Consequently, any pole of $1/P(\lambda)$ can only occur when $\lambda\in\mathbb C^+.$
Let us write \eqref{5.49} as
\begin{equation}\label{5.51}
\ds\frac{1}{P(\lambda)}=
\ds \frac{1}{e^{i\mu/2}\,T_0(\zeta)}\,\ds\prod_{j=1}^N\ds\left(\frac{1}{\lambda-\lambda_j^*}\right)^{m_j}\,
\ds\prod_{k=1}^{\bar N}\ds\left(\lambda-\bar\lambda_k\right)^{\bar m_k},\qquad \lambda\in\mathbb C.
\end{equation}
From the right-hand side of \eqref{5.51} we see that $1/P(\lambda)$ cannot have any poles when $\lambda\in\mathbb C^+.$
Therefore, we conclude that the monic polynomial $P(\lambda)$ cannot have any zeros at all when $\lambda\in\mathbb C.$
Hence, we must have $P(\lambda)\equiv 1,$ which yields $\mathcal N=\bar{\mathcal N}.$ 
A similar argument shows that the case $\mathcal N\le \bar{\mathcal N}$ 
can occur only when $\mathcal N=\bar{\mathcal N}.$ Thus, the proof of (a) is complete.
The proof of (b) is a direct consequence of (a) because
the matrix triplet $(A,B,C)$ has size $\mathcal N$ and
the matrix triplet $(\bar A,\bar B,\bar C)$ has size $\bar{\mathcal N}.$ 
\end{proof}

Let us remark that the result presented in Theorem~\ref{theorem5.5} for \eqref{1.1} holds also for
the AKNS system given in \eqref{1.7}. Next, we present that result as a corollary because its proof follows by essentially
repeating the proof given for Theorem~\ref{theorem5.5}.

\begin{corollary}
\label{corollary5.6}
Let the potentials $u$ and $v$ in the AKNS system \eqref{1.7} belong to the Schwartz class. Let us also assume that
the corresponding reflection coefficients $R^{(u,v)}(\lambda)$ and $\bar R^{(u,v)}(\lambda)$ 
are zero. Then, 
the number of bound-state poles, including the multiplicities, of the transmission coefficient $T^{(u,v)}(\lambda)$ in $\mathbb C^+$
must be equal to the number of bound-state poles, including the multiplicities, of the transmission coefficient $\bar T^{(u,v)}(\lambda)$ in $\mathbb C^-.$
\end{corollary}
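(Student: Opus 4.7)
The plan is to copy the architecture of the proof of Theorem~\ref{theorem5.5} almost verbatim, replacing the scattering coefficients for \eqref{1.1} by those for \eqref{1.7}. The only substantive differences are cosmetic: the large $\lambda$-asymptotics of $T^{(u,v)}(\lambda)$ and $\bar T^{(u,v)}(\lambda)$ both approach $1$ (rather than $e^{\mp i\mu/2}$), and the spectral parameter is $\lambda$ throughout so there is no $\zeta \leftrightarrow \zeta^2$ parity to keep track of. This is why Theorem~\ref{theorem5.5} leverages \eqref{5.44} in the first place: the AKNS version is really the more primitive statement, and the only reason it appears here as a corollary is that its proof is a transcription of the one already given.

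First I would set up the bound-state factorizations of $T^{(u,v)}(\lambda)$ and $\bar T^{(u,v)}(\lambda)$ exactly as in \eqref{5.40}--\eqref{5.41}, writing
\begin{equation*}
T^{(u,v)}(\lambda)=T_0^{(u,v)}(\lambda)\,\prod_{j=1}^{N}\left(\frac{\lambda-\lambda_j^*}{\lambda-\lambda_j}\right)^{m_j},\qquad
\bar T^{(u,v)}(\lambda)=\bar T_0^{(u,v)}(\lambda)\,\prod_{k=1}^{\bar N}\left(\frac{\lambda-\bar\lambda_k^*}{\lambda-\bar\lambda_k}\right)^{\bar m_k},
\end{equation*}
where the bound-state data $\{\lambda_j,m_j\}$ and $\{\bar\lambda_k,\bar m_k\}$ for \eqref{1.7} coincide with those for \eqref{1.1} by item (b) at the start of Section~\ref{section3}. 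Here $1/T_0^{(u,v)}$ is analytic in $\mathbb C^+$, continuous and nonvanishing on $\overline{\mathbb C^+}$, with the asymptotics $1/T_0^{(u,v)}(\lambda)=1+O(1/\lambda)$ as $\lambda\to\infty$ in $\overline{\mathbb C^+}$; the mirror statement holds for $1/\bar T_0^{(u,v)}$ on $\overline{\mathbb C^-}$. These facts are standard for the AKNS system (see \cite{AKNS1974,E2018}).

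Next I would feed this factorization into the reflectionless specialization of \eqref{5.44}, namely $T^{(u,v)}(\lambda)\,\bar T^{(u,v)}(\lambda)=1$ for $\lambda\in\mathbb R$, and then rearrange to separate the upper-half and lower-half meromorphic contributions just as in \eqref{5.48}:
\begin{equation*}
T_0^{(u,v)}(\lambda)\,\prod_{j=1}^N(\lambda-\lambda_j^*)^{m_j}\,\prod_{k=1}^{\bar N}(\lambda-\bar\lambda_k)^{-\bar m_k}
=\frac{1}{\bar T_0^{(u,v)}(\lambda)}\,\prod_{j=1}^N(\lambda-\lambda_j)^{m_j}\,\prod_{k=1}^{\bar N}(\lambda-\bar\lambda_k^*)^{-\bar m_k},\qquad \lambda\in\mathbb R.
\end{equation*}
The left-hand side then extends analytically to $\mathbb C^+$, the right-hand side to $\mathbb C^-$, and by Morera's theorem the two extensions agree to an entire function, which by the asymptotic normalizations is a monic polynomial $P(\lambda)$ of degree $\mathcal N-\bar{\mathcal N}$ (assuming without loss of generality that $\mathcal N\ge\bar{\mathcal N}$; the reverse inequality is handled symmetrically).

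Finally I would rule out any zeros of $P(\lambda)$ by the same accounting used after \eqref{5.51}: from the factorization on the right-hand side, $P(\lambda)$ has no zeros in $\overline{\mathbb C^-}$, and then from the reciprocal factorization on the left-hand side, $1/P(\lambda)$ has no poles in $\mathbb C^+$, so $P$ is a zero-free monic polynomial on $\mathbb C$, forcing $P\equiv 1$ and $\mathcal N=\bar{\mathcal N}$. The only conceptual obstacle is really the bookkeeping at $\lambda=0$, which is fine because \eqref{5.44} together with the continuity of $R^{(u,v)}$ and $\bar R^{(u,v)}$ on $\mathbb R$ prevents $T^{(u,v)}$ or $\bar T^{(u,v)}$ from vanishing on the real axis in the reflectionless case, so the factors $\bar\lambda_k$ and $\bar\lambda_k^*$ genuinely stay off $\mathbb R$ and the rearrangement is legitimate.
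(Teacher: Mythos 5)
Your proposal is correct and matches the paper's intent exactly: the paper gives no separate proof of Corollary~\ref{corollary5.6}, stating only that it ``follows by essentially repeating the proof given for Theorem~\ref{theorem5.5},'' which is precisely the transcription you carry out (with the normalization $1$ in place of $e^{\mp i\mu/2}$ and no $\zeta^2$ bookkeeping).
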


\section{Explicit examples}
\label{section6}

In this section we elaborate on the results from the previous sections with some illustrative and explicit examples.

As indicated in Section~\ref{section3}, for the linear system \eqref{1.1} one can construct
the norming constants $c_{jk}$ appearing in \eqref{3.7}
explicitly in terms of the set of residues $\{t_{jk}\}_{k=1}^{m_j}$ 
and the dependency constants $\{\gamma_{jk}\}_{k=0}^{m_j-1}.$
Similarly, one can construct
the norming constants
$\bar c_{jk}$ appearing in \eqref{3.7}
explicitly in terms of the set of residues $\{\bar t_{jk}\}_{k=1}^{\bar m_j}$ 
and the dependency constants $\{\bar\gamma_{jk}\}_{k=0}^{\bar m_j-1}.$
In the first two examples, we illustrate that construction and observe that, especially in the case
of bound states with multiplicities, it is
cumbersome to deal with the individual norming constants, and it is better to use
the bound-state information not in the form given in \eqref{3.7}
but rather in the form of matrix triplet pair
$(A,B,C)$ and $(\bar A,\bar B,\bar C).$
 
The first example considers the norming constants for simple bound states.

\begin{example}
\label{example6.1}
\normalfont
Consider the linear system \eqref{1.1} with the potentials $q$ and $r$ in the Schwartz class. We elaborate on step (d) 
appearing in
the beginning of Section~\ref{section3}.
If the bound state at $\lambda=\lambda_j$ is simple, then we have $m_j=1$ and hence there is only one norming 
constant $c_{j0}.$ By proceeding as
in \cite{AE2021} we obtain
\begin{equation}\label{6.1}
c_{j0}=-\ds\frac{i\,t_{j1}\,\gamma_{j0}}{\zeta_j},\end{equation}
where $\zeta_j$ is the complex number in the first quadrant
in $\mathbb C$ for which we have $\lambda_j=\zeta_j^2,$
the complex constant $t_{j1}$ corresponds to the residue in \eqref{3.1} in the expansion of
the transmission coefficient $T(\zeta),$ i.e.
\begin{equation*}
T(\zeta)=\ds\frac{t_{j1}}{\lambda-\lambda_j}+O(1),\qquad \lambda\to\lambda_j,
\end{equation*}
and $\gamma_{j0}$ is the dependency constant appearing in \eqref{3.3}, i.e.
\begin{equation*}
\phi(\zeta_j,x)=\gamma_{j0}\,\psi(\zeta_j,x),
\end{equation*}
with $\psi(\zeta,x)$ and $\phi(\zeta,x)$ being the Jost solutions appearing in
\eqref{2.1} and \eqref{2.3}, respectively. If the bound state at 
$\lambda=\bar\lambda_j$ is simple, we have $\bar m_j=1$ and hence there is only one norming 
constant $\bar c_{j0},$ which is expressed as
\begin{equation}\label{6.2}
\bar c_{j0}=\ds\frac{i\,\bar t_{j1}\,\bar\gamma_{j0}}{\bar\zeta_j},\end{equation}
where $\bar\zeta_j$ is the complex number in the fourth quadrant
in $\mathbb C$ for which we have $\bar\lambda_j=\bar\zeta_j^2,$
the complex constant $\bar t_{j1}$ corresponds to the residue in \eqref{3.2} in the expansion of
the transmission coefficient $\bar T(\zeta),$ i.e.
\begin{equation*}
\bar T(\zeta)=\ds\frac{\bar t_{j1}}{\lambda-\bar\lambda_j}+O(1),\qquad \lambda\to\bar\lambda_j,
\end{equation*}
and $\bar\gamma_{j0}$ is the dependency constant appearing in \eqref{3.5}, i.e.
\begin{equation*}
\bar\phi(\bar\zeta_j,x)=\bar\gamma_{j0}\,\bar\psi(\bar\zeta_j,x),
\end{equation*}
with $\bar\psi(\zeta,x)$ and $\bar\phi(\zeta,x)$ being the Jost solutions appearing in
\eqref{2.2} and \eqref{2.4}, respectively. 
As seen from \eqref{6.1} and \eqref{6.2}, the norming constants $c_{j0}$ and
$\bar c_{j0}$ are related 
to each other via the transformations
\begin{equation}\label{6.3}
\lambda_j\mapsto\bar\lambda_j,
\quad
t_{jk}\mapsto -\bar t_{jk},
\quad 
\gamma_{jk}\mapsto \bar\gamma_{jk},
\quad
c_{jk}\mapsto \bar c_{jk}.
\end{equation}

\end{example}

The next example considers the norming constants for bound states with multiplicities.

\begin{example}
\label{example6.2}
\normalfont
We consider the linear system \eqref{1.1} with the potentials $q$ and $r$ in the Schwartz class, and we elaborate on step (d) 
appearing in
the beginning of Section~\ref{section3}.
If the bound state at $\lambda=\lambda_j$ is double, we have $m_j=2$ and there are only two norming 
constant $c_{j0}$ and $c_{j1},$ which are expressed in terms of the
residues $t_{j1}$ and $t_{j2}$ and the dependency constants
$\gamma_{j0}$ and $\gamma_{j1}$ as
\begin{equation}\label{6.4}
\begin{cases}
c_{j1}=-\ds\frac{i\,t_{j2}\,\gamma_{j0}}{\zeta_j},\\
\noalign{\medskip}
c_{j0}=-\ds\frac{i\,t_{j1}\,\gamma_{j0}}{\zeta_j}-\ds\frac{i\,t_{j2}}{\zeta_j}\left(\gamma_{j1}-\frac{\gamma_{j0}}{2\lambda_j}\right),
\end{cases}
\end{equation}
where we recall that $\zeta_j$ is the complex constant  in the first quadrant
in $\mathbb C$ for which we have $\lambda_j=\zeta_j^2.$
If the bound state at $\lambda=\bar\lambda_j$ is double, we have $\bar m_j=2$ and there are only two norming 
constant $\bar c_{j0}$ and $\bar c_{j1},$ which are obtained 
from \eqref{6.4} by using the transformations given in \eqref{6.3}.
For a triple bound state at $\lambda=\lambda_j,$ we have  $m_j=3$ and the three norming 
constants are expressed 
in terms of the
residues $t_{j1},$ $t_{j2},$ $t_{j3}$ and the dependency constants
$\gamma_{j0},$ $\gamma_{j1},$ $\gamma_{j2}$
as
\begin{equation}
\label{6.5}
\begin{cases}
c_{j2}=-\ds\frac{i\,t_{j3}\,\gamma_{j0}}{\zeta_j},\\
\noalign{\medskip}
c_{j1}=-\ds\frac{i\,t_{j2}\,\gamma_{j0}}{\zeta_j}-\ds\frac{i\,t_{j3}}{\zeta_j}\left(\gamma_{j1}-\frac{\gamma_{j0}}{2\lambda_j}\right),
\\
\noalign{\medskip}
c_{j0}=-\ds\frac{i\,t_{j1}\,\gamma_{j0}}{\zeta_j}
-\ds\frac{i\,t_{j2}}{\zeta_j}\left(\gamma_{j1}
-\frac{\gamma_{j0}}{2\lambda_j}\right)
-\ds\frac{i\,t_{j3}}{2\zeta_j}\left(\gamma_{j2}-\frac{\gamma_{j1}}{\lambda_j}+\frac{3\gamma_{j0}}{4\lambda_j^2}\right).
\end{cases}
\end{equation}
For a bound state at $\lambda=\bar\lambda_j$ of multiplicity three, 
we can obtain the norming constants $\bar c_{j0},$ $\bar c_{j1},$ $\bar c_{j2}$
by using the transformations in \eqref{6.3} on \eqref{6.5}.
For bound states with higher multiplicities, the norming constants can be explicitly constructed
by using the corresponding residues and the dependency constants. However, as 
already mentioned, the use of the matrix triplet pair
$(A,B,C)$ and $(\bar A,\bar B,\bar C)$
is the simplest and most elegant way to represent the bound-state
information without having to deal with any cumbersome formulas involving the individual norming constants.

\end{example}

The formulas 
presented in Theorems~\ref{theorem5.1}, \ref{theorem5.2}, and \ref{theorem5.3}
express all the relevant quantities in a compact form with the help of
matrix exponentials.
We have prepared a Mathematica notebook
using the matrix triplets
$(A,B,C)$ and $(\bar A,\bar B,\bar C)$ as input and evaluating
all the relevant quantities by unpacking the matrix exponentials
and displaying all those relevant quantities in terms of 
elementary functions. 
In particular, our Mathematica notebook
provides in terms of elementary functions
the solution to the Marchenko system as indicated in Theorem~\ref{theorem5.1},
the potentials $q$ and $r$ given in Theorem~\ref{theorem5.2},
the Jost solutions given in Theorem~\ref{theorem5.3}, and
the corresponding auxiliary quantities $E(x)$ and $\mu$ given in \eqref{5.18} and \eqref{5.35}, respectively.
It also verifies 
that \eqref{1.1} is satisfied when those expressions for
the potentials and the Jost solutions are used in \eqref{1.1}.
As the matrix sizes in the triplets
get large, contrary to the compact expressions involving
the matrix exponentials, the equivalent
expressions presented in terms of elementary functions 
become lengthy. 

In the next example, we illustrate Theorems~\ref{theorem5.2} and \ref{theorem5.3} by using
a pair of matrix triplets corresponding to two simple bound states.

\begin{figure}[!h]
  \centering
\includegraphics[width=2.5in]{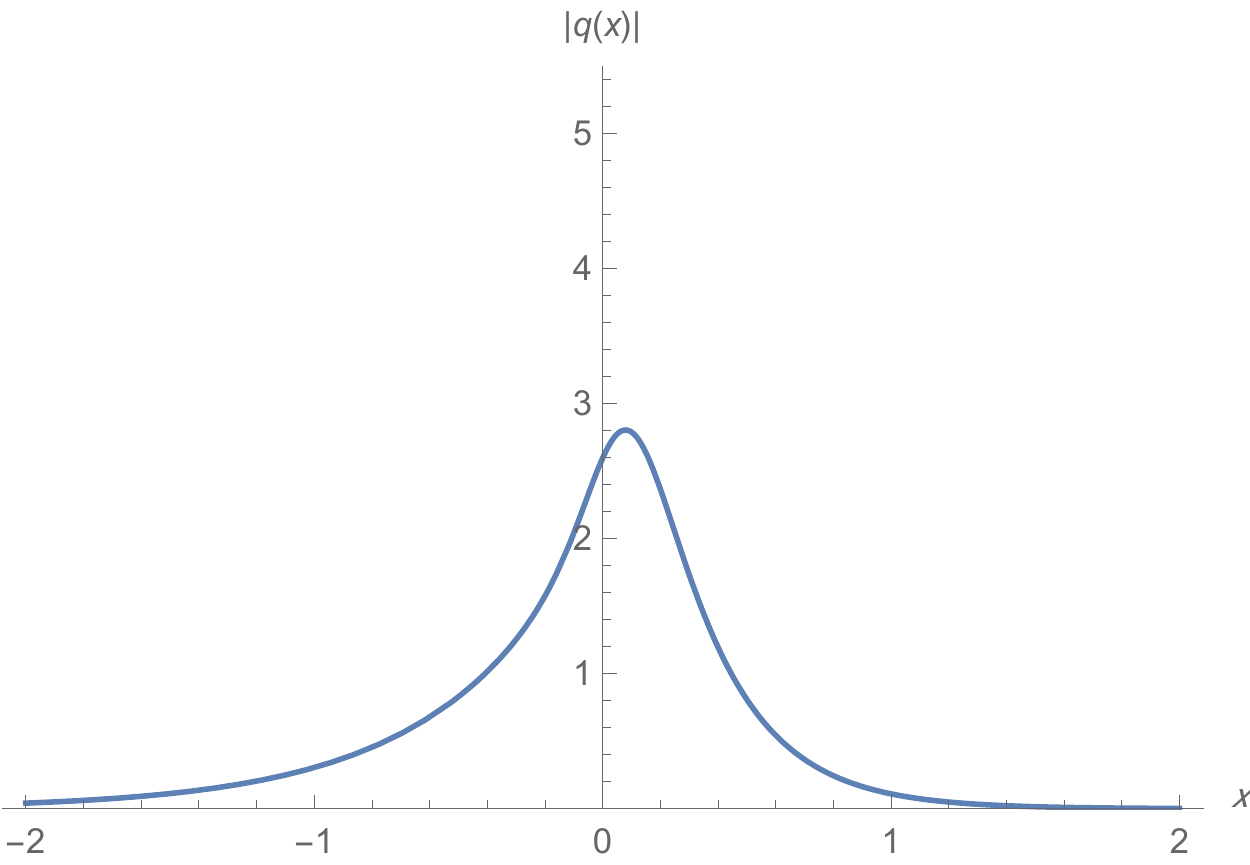}\hskip 0.5cm
\includegraphics[width=2.5in]{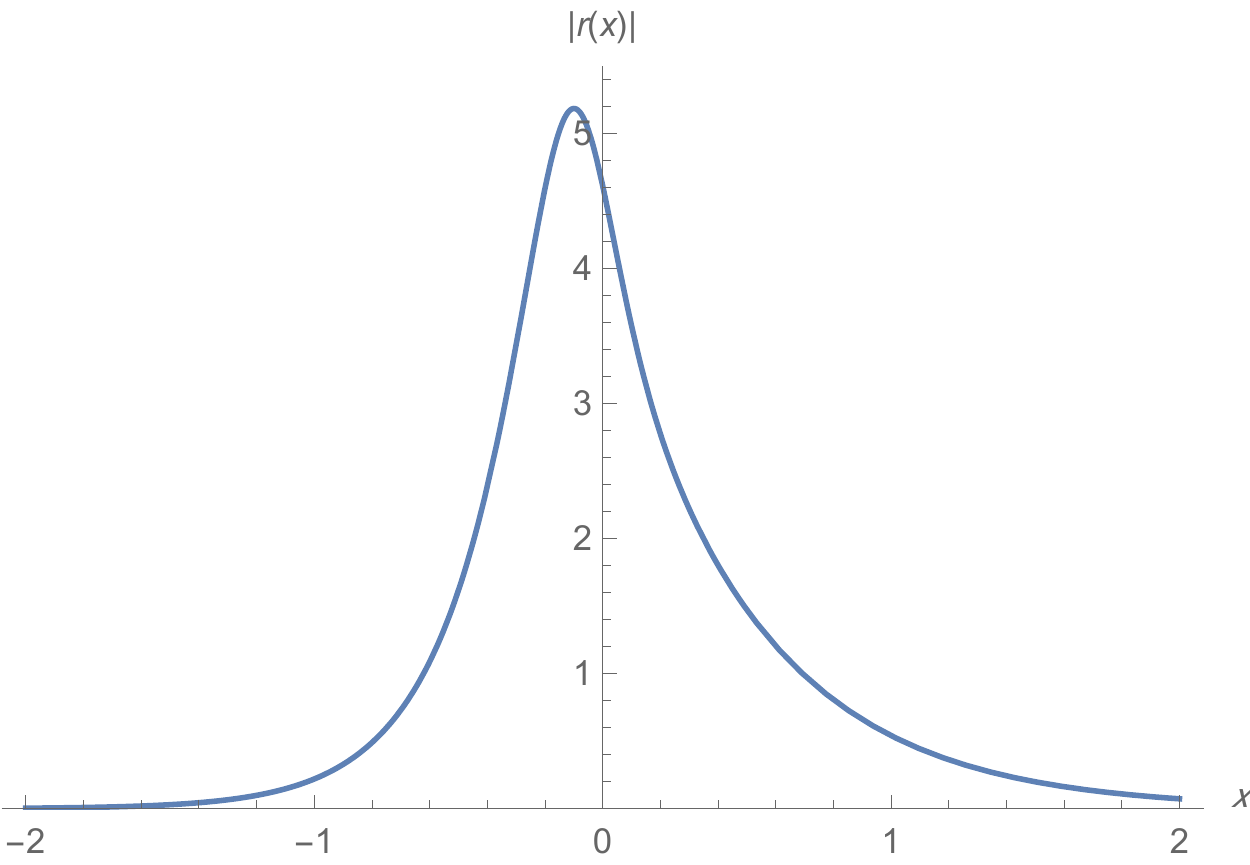}
  \caption{The absolute potentials $|q(x)|$ and $|r(x)|$ in Example~\ref{example6.3}}
\end{figure}

\begin{example}
\label{example6.3}
\normalfont
 Consider the reflectionless scattering data with two
simple bound states described by the matrix triplets $(A,B,C)$ and $(\bar A,\bar B,\bar C)$ given by
\begin{equation}
\label{6.6}
A=\begin{bmatrix}i\end{bmatrix}, \quad B=\begin{bmatrix}1\end{bmatrix}, \quad
C=\begin{bmatrix}2\end{bmatrix},
\quad
\bar A=\begin{bmatrix}-2i\end{bmatrix}, \quad \bar B=\begin{bmatrix}1\end{bmatrix}, \quad
\bar C=\begin{bmatrix}3\end{bmatrix}.
\end{equation}
Using \eqref{6.6} in \eqref{5.19} and \eqref{5.20}, we obtain the corresponding potentials $q$ and $r$ as
\begin{equation}
\label{6.7}
q(x)=\ds\frac{18}{3 e^{6x}-2i}\,\exp\bigg(2x-4\,\tanh^{-1}\big(1/3-i e^{6x}\big)\bigg),
\end{equation}
\begin{equation}
\label{6.8}
r(x)=\ds\frac{12}{3 e^{6x}+4i}\,\exp\bigg(4x+4\,\tanh^{-1}\big(1/3-i e^{6x}\big)\bigg),
\end{equation}
where we use the principal branch of the inverse hyperbolic tangent.
Since $q(x)$ and $r(x)$ are complex valued, in Figure~1 we present the plots of their absolute values.
From \eqref{6.7} and \eqref{6.8} we see that
$q$ and $r$ both belong to the Schwartz class.
The corresponding Jost solutions $\psi(\zeta,x)$ and $\bar\psi(\zeta,x)$ are obtained by using \eqref{6.6} in
\eqref{5.25}--\eqref{5.28}, and we get
\begin{equation*}
\psi_1(\zeta,x)=\ds\frac{-9 \zeta  \,\exp\bigg(i\lambda x+2x-2\,\tanh^{-1}\big(1/3-i e^{6x}\big)\bigg)}{(\lambda+2i)\left(2+3i e^{6x}\right)},
\end{equation*}
\begin{equation}\label{6.9}
\psi_2(\zeta,x)=\ds\frac{\omega_1 \,\exp\bigg(i\lambda x+2\,\tanh^{-1}\big(1/3-i e^{6x}\big)\bigg)}{(\lambda+2i)\left(-4+3i e^{6x}\right)},
\end{equation}
\begin{equation}\label{6.10}
\bar\psi_1(\zeta,x)=\ds\frac{\omega_2  \,\exp\bigg(-i\lambda x-2\,\tanh^{-1}\big(1/3-i e^{6x}\big)\bigg)}
{(\lambda-i)\left(2+3i e^{6x}\right)},
\end{equation}
\begin{equation*}
\bar\psi_2(\zeta,x)=\ds\frac{6 \zeta  \,\exp\bigg(-i\lambda x+4x+2\,\tanh^{-1}\big(1/3-i e^{6x}\big)\bigg)}
{(\lambda-i)\left(-4+3i e^{6x}\right)},
\end{equation*}
where we have defined
\begin{equation*}
\omega_1:=4(\lambda-i)-3i e^{6x}\,(\lambda+2i),
\quad
\omega_2:=-2(\lambda+2i)-3i e^{6x}\,(\lambda-i),
\end{equation*}
with $\lambda=\zeta^2,$ as indicated in \eqref{2.11}.
In this example, the constant $\mu$ appearing in \eqref{2.20} and the transmission coefficients
$T(\zeta)$ and $\bar T(\zeta)$ are given by
\begin{equation*}
\mu=2\pi-2i\,\ln 2,\quad T(\zeta)=-\ds\frac{1}{2}\,\left(\ds\frac{\lambda+2i}{\lambda-i}\right),\quad \bar T(\zeta)=
-2\,\left(\ds\frac{\lambda-i}{\lambda+2i}\right),
\end{equation*}
which can be verified by using the asymptotics of \eqref{6.9} and \eqref{6.10} as $x\to-\infty.$

\end{example}

\begin{figure}[!h]
  \centering
\includegraphics[width=2.5in]{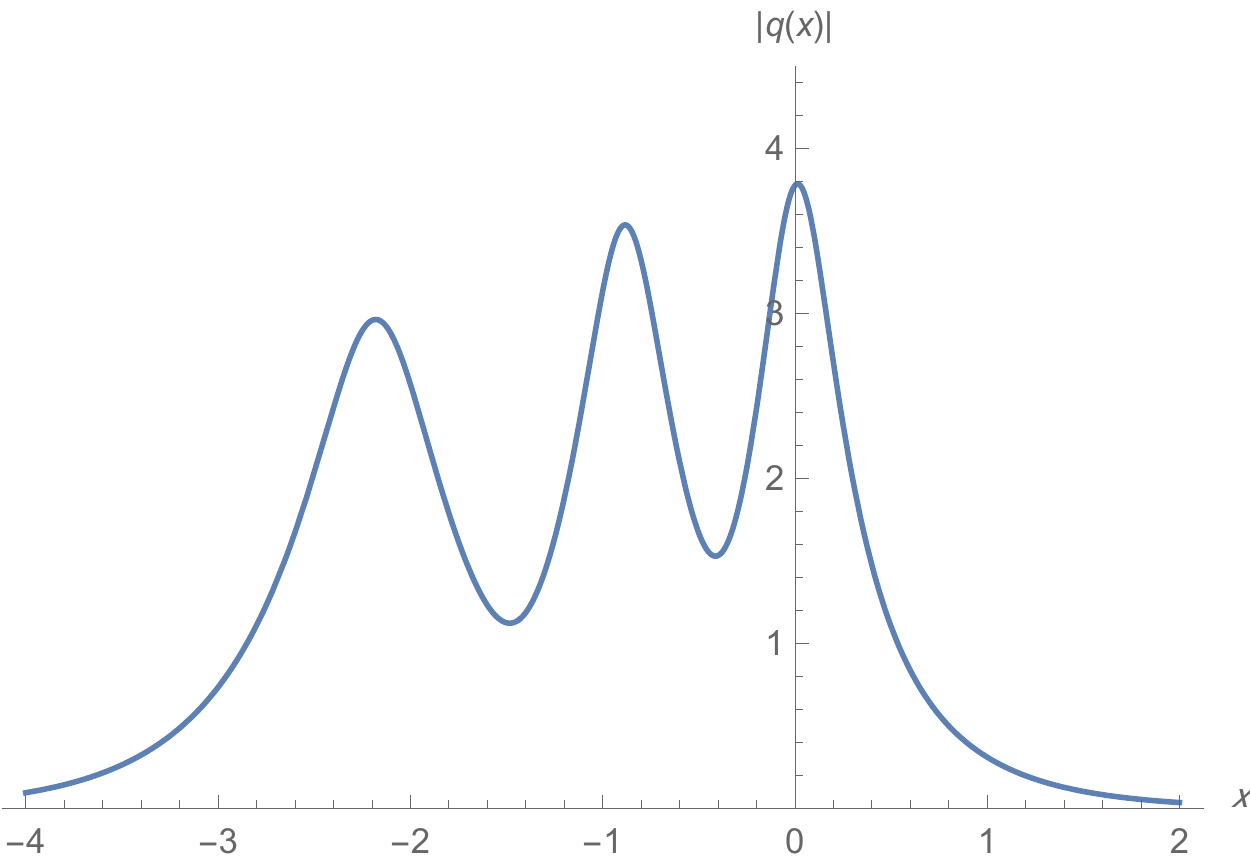}\hskip 0.5in
\includegraphics[width=2.5in]{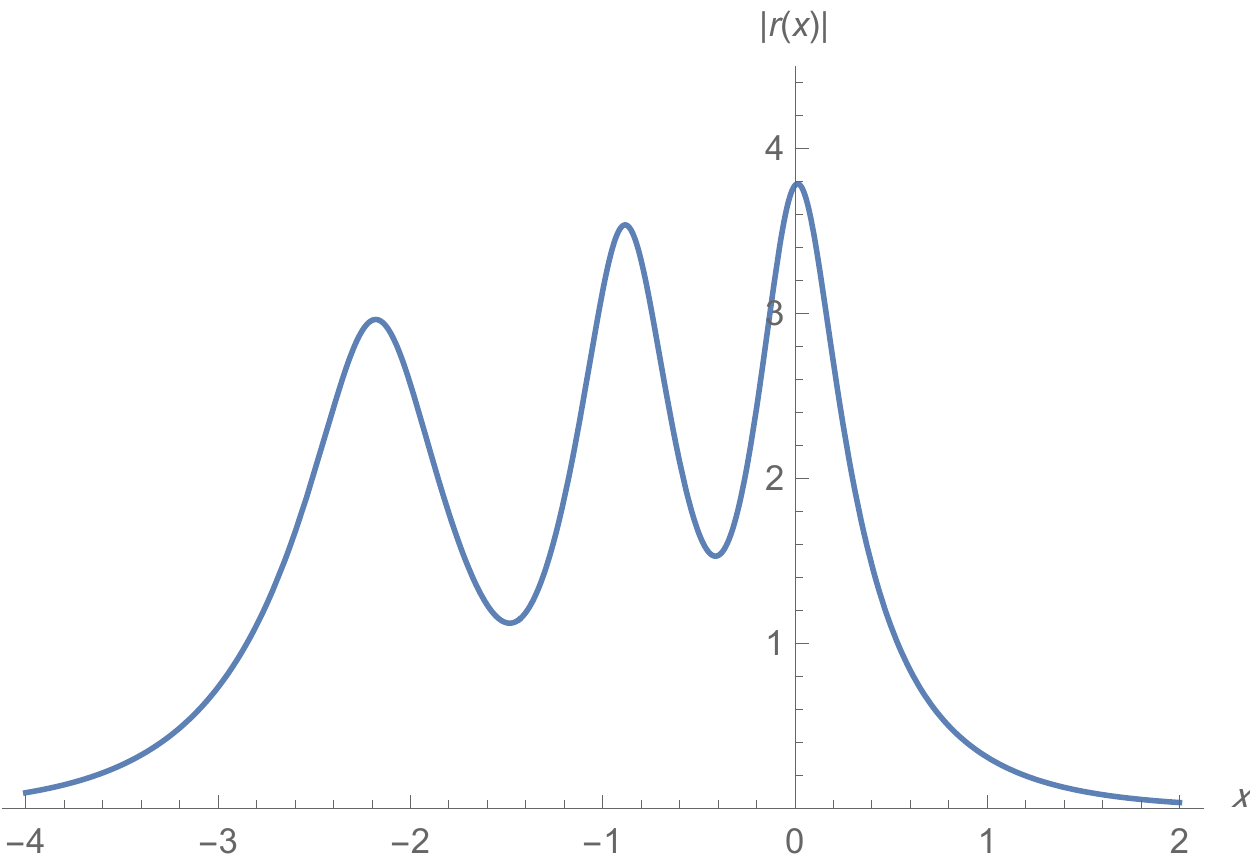}
  \caption{The absolute potentials $|q(x)|$ and $|r(x)|$ in Example~\ref{example6.4}.}
\end{figure}

In the next example we illustrate Theorem~\ref{theorem5.2} by using a pair of matrix triplets
corresponding to six simple
bound states.

\begin{example}
\label{example6.4}
\normalfont
Consider the reflectionless scattering data with six simple
bound states described by the matrix triplets $(A,B,C)$ and $(\bar A,\bar B,\bar C)$ given by
\begin{equation}\label{6.11}
A=\begin{bmatrix}i&0&0\\
0&2i&0\\
0&0&3i\end{bmatrix}, \quad B=\begin{bmatrix}1\\
1\\
1\end{bmatrix}, \quad
C=\begin{bmatrix}1&1&1\end{bmatrix},
\end{equation}
\begin{equation}\label{6.12}
\bar A=\begin{bmatrix}-i&0&0\\
0&-2i&0\\
0&0&-3i\end{bmatrix}, \quad \bar B=\begin{bmatrix}1\\ 1
\\
1\end{bmatrix}, \quad
\bar C=\begin{bmatrix}1&1&1\end{bmatrix}.
\end{equation}
Using \eqref{6.11} and \eqref{6.12} as input in \eqref{5.19} and \eqref{5.20}, we obtain the
corresponding potentials $q$ and $r$ as
\begin{equation}
\label{6.13}
q(x)=\ds\frac{48(\omega_5+\omega_6)}{\omega_7+\omega_8}\,\exp\bigg(2x+4i\,\tan^{-1} \left(\omega_3/\omega_4\right)\bigg),
\quad 
r(x)=q(x)^*,
\end{equation}
where we recall that we use an asterisk to denote complex conjugation and we have defined
\begin{equation*}
\omega_3:=24\, e^{4x} (-216-3600\,e^{2x}-18675\, e^{4x}-18000\, e^{6x}-5000\,e^{8x}+12960000\,e^{20x}),
\end{equation*}
\begin{equation*}
\omega_4:=-1+1000\, e^{12x}\left[25920+62208\,e^{2x}+116640\,e^{4x}\\
+103680\,e^{6x}+77760\,e^{8x}\right],
\end{equation*}
\begin{equation*}
\omega_5:=6+75\,e^{2x}+50\,e^{4x}+43200i\,e^{6x}+334800i\,e^{8x}+648000i\,e^{10x},
\end{equation*}
\begin{equation*}
\omega_6:=10000\,e^{12x}\left(99i+
36i\,e^{2x}-1296\,e^{4x}-1296\,e^{6x}-1296\,e^{8x}\right),
\end{equation*}
\begin{equation*}
\omega_7:=-i+5184\,e^{4x} +86400\,e^{6x}+448200\,e^{8x}+432000\,e^{10x}+
10000 (12+2592i)\,e^{12x},
\end{equation*}
\begin{equation*}
\omega_8:=1000\,e^{14x}\left( 62208i+116640i\,e^{2x}
+
103680i\,e^{4x}+77760i\,e^{6x}-311040\,e^{10x}\right).
\end{equation*}
In this example, the constant $\mu$ appearing in \eqref{2.20} and the transmission coefficients
$T(\zeta)$ and $\bar T(\zeta)$ are given by
\begin{equation*}
\mu=2\pi,\quad T(\zeta)=-\ds\frac{(\lambda+i)(\lambda+2i)(\lambda+3i)}{(\lambda-i)(\lambda-2i)(\lambda-3i)},\quad 
\bar T(\zeta)=-\ds\frac{(\lambda-i)(\lambda-2i)(\lambda-3i)}{(\lambda+i)(\lambda+2i)(\lambda+3i)}.
\end{equation*}
In Figure~2 we have the plots of the absolute values of the potentials $q$ and $r$
listed in \eqref{6.13}.

\end{example}

\begin{figure}[!h]
  \centering
\includegraphics[width=2.5in]{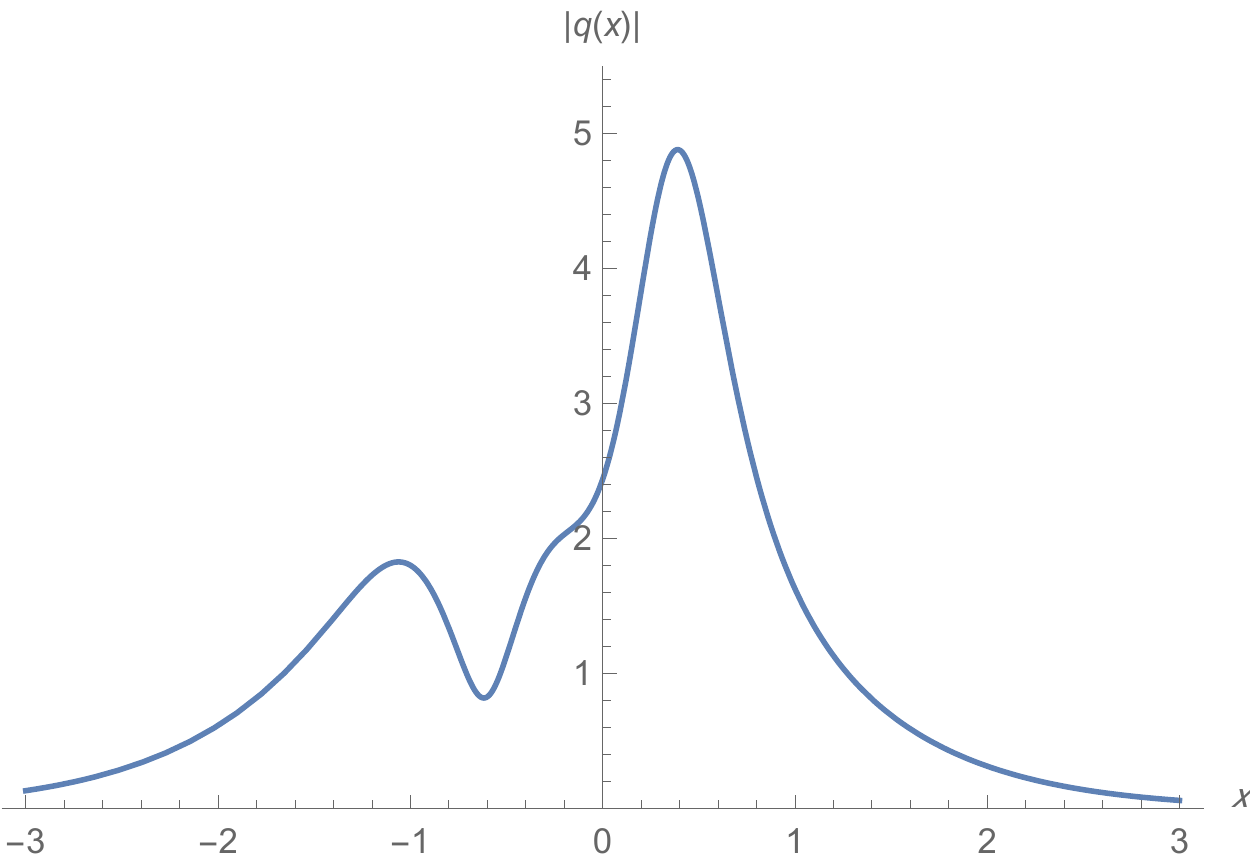}\hskip 0.5in
\includegraphics[width=2.5in]{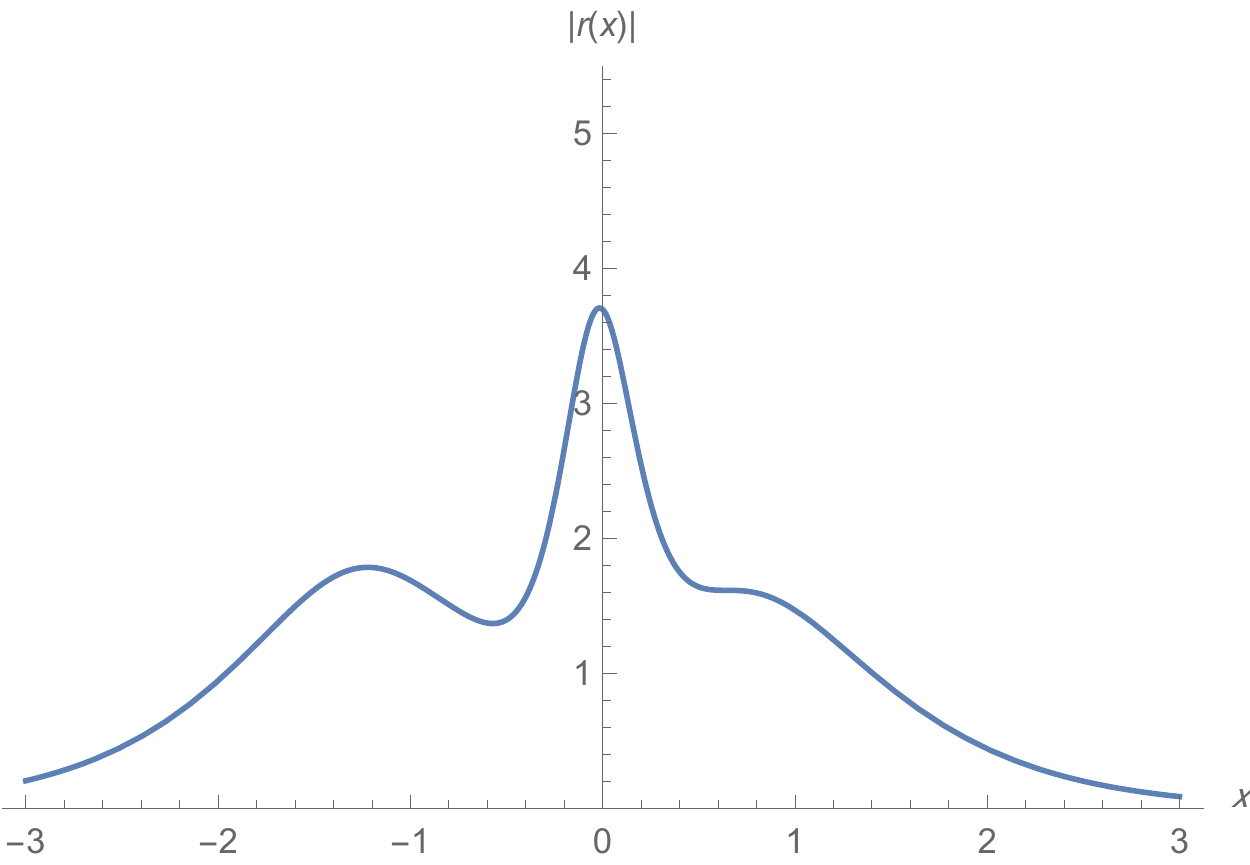}
  \caption{The absolute potentials $|q(x)|$ and $|r(x)|$ in Example~\ref{example6.5}.}
\end{figure}

In the next example, we illustrate Theorem~\ref{theorem5.2} by using
a pair of matrix triplets corresponding to two bound states each with multiplicity two.

\begin{example}
\label{example6.5}
\normalfont
Consider the reflectionless scattering data with two
double bound states described by the matrix triplets $(A,B,C)$ and $(\bar A,\bar B,\bar C)$ given by
\begin{equation}
\label{6.14}
A=\begin{bmatrix}i&1\\
0&i\end{bmatrix}, \quad B=\begin{bmatrix}0\\
1\end{bmatrix}, \quad
C=\begin{bmatrix}3&2\end{bmatrix},
\end{equation}
\begin{equation}
\label{6.15}
\bar A=\begin{bmatrix}-i&1\\
0&-i\end{bmatrix}, \quad \bar B=\begin{bmatrix}0\\ 1\end{bmatrix}, \quad
\bar C=\begin{bmatrix}2&3\end{bmatrix}.
\end{equation}
Using \eqref{6.14} and \eqref{6.15} in \eqref{5.19} and \eqref{5.20}, we get
\begin{equation}
\label{6.16}
q(x)=\ds\frac{32\,\omega_{11}\,\omega_{12}}{\omega_{13}+\omega_{14}}\,\exp\bigg(2x-2i\,\tan^{-1} \omega_9-2i\,\tan^{-1}\omega_{10}\bigg),
\end{equation}
\begin{equation}
\label{6.17}
r(x)=\ds\frac{8\,\omega_{15}\,\omega_{16}}{\omega_{17}+\omega_{18}}\,\exp\bigg(2x+2i\,\tan^{-1} \omega_9+2i\,\tan^{-1}\omega_{10}\bigg),
\end{equation}
where we have defined
\begin{equation*}
\omega_9:=\ds\frac{48\, e^{4x} (3+4x+8 x^2)}{-9+64\, e^{8x}+32\, e^{4x}(-2+5x)},
\quad
\omega_{10}:=\ds\frac{48\, e^{4x} (3+4 x+8 x^2)}{-9+64\, e^{8x}-16\, e^{4x}(9+10x)},
\end{equation*}
\begin{equation*}
\omega_{11}:=3-2i+6x+4 \,e^{4x} (3-4ix),
\end{equation*}
\begin{equation*}
\omega_{12}:=-9+64\, e^{8x}+16\, e^{4x}[-4+9i+(10+12i) x+24ix^2],
\end{equation*}
\begin{equation*}
\omega_{13}:=81+4096\, e^{16x}+288\, e^{4x} (9+10x)-2048 \,e^{12 x}(9+10x),
\end{equation*}
\begin{equation*}
\omega_{14}:=128\, e^{8x}(315+792\, x+1352\, x^2+1152\, x^3+1152\, x^4),
\end{equation*}
\begin{equation*}
\omega_{15}:=32\,e^{4x} (1+3ix)+9(2+3i)+36x),
\end{equation*}
\begin{equation*}
\omega_{16}:=-9+64\, e^{8x}-16(1+i)\, e^{4x}\,[9+(11+i)x+12(1+i) x^2],
\end{equation*}
\begin{equation*}
\omega_{17}:=81+4096\, e^{16x}+576\, e^{4x}(2-5x)+4096\, e^{12 x}(-2+5x),
\end{equation*}
\begin{equation*}
\omega_{18}:=128\, e^{8x}[185+272  x+1352  x^2+1152  x^3+1152  x^4].
\end{equation*}
In this example, we obtain the constant $\mu$ defined in \eqref{2.20} 
and the two transmission coefficients as
\begin{equation*}
\mu=0,\quad T(\zeta)=\left(\ds\frac{\lambda+i}{\lambda-i}\right)^2,\quad \bar T(\zeta)=\left(\ds\frac{\lambda-i}{\lambda+i}\right)^2,
\end{equation*}
where we recall that $\lambda=\zeta^2.$
In Figure~3 we present the plots of the absolute values of the potentials $q$ and $r$
given in \eqref{6.16} and \eqref{6.17}, respectively.

\end{example}

\begin{figure}[!h]
  \centering
\includegraphics[width=2.5in]{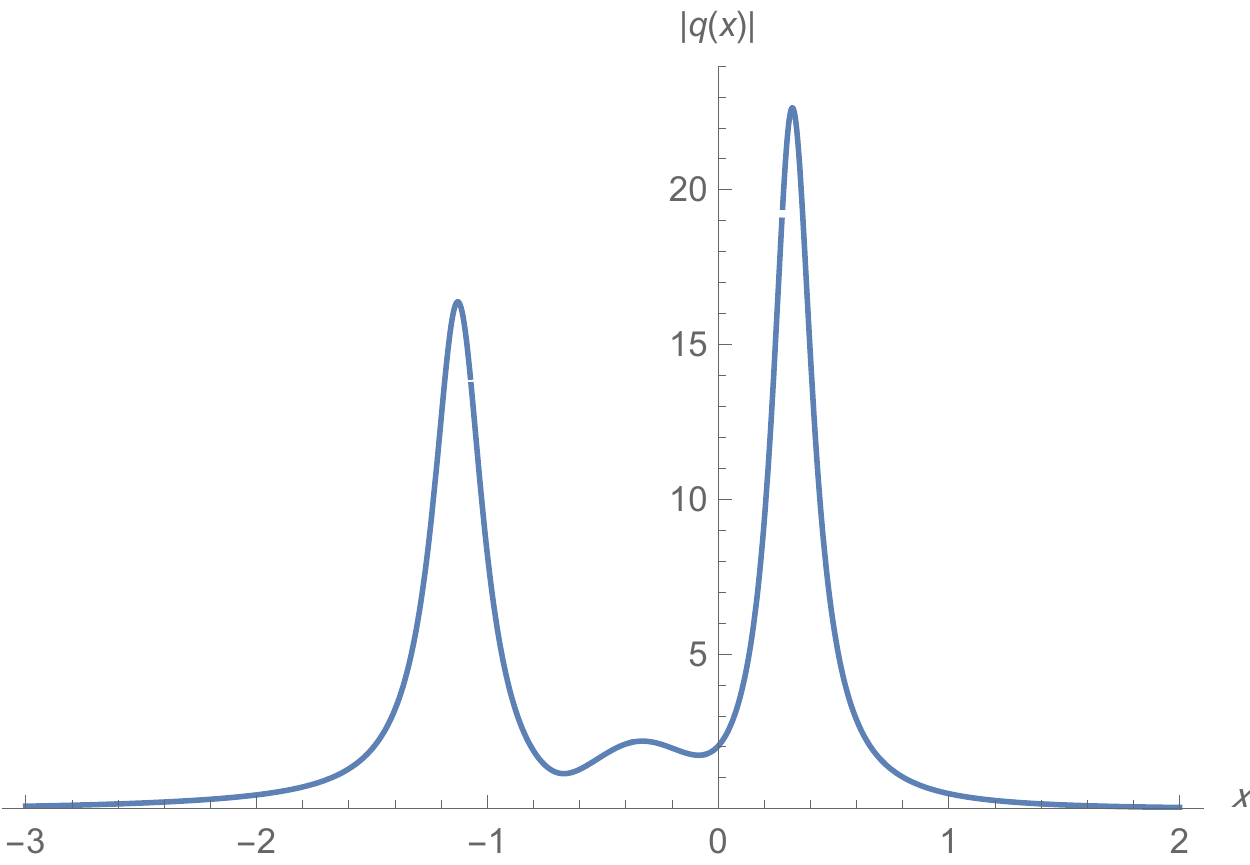}\hskip 0.5in
\includegraphics[width=2.5in]{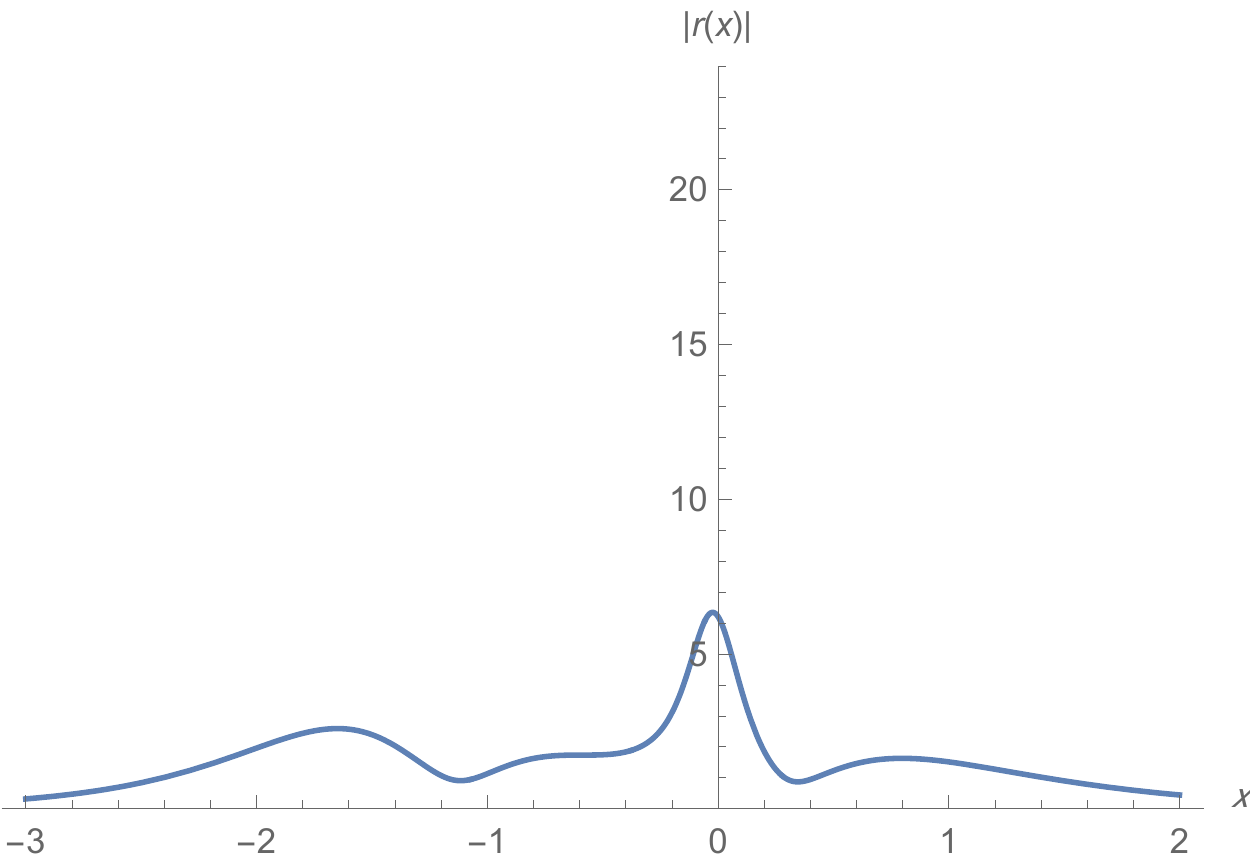}
  \caption{The absolute potentials $|q(x)|$ and $|r(x)|$ in Example~\ref{example6.6}.}
\end{figure}

In the next example, we illustrate Theorem~\ref{theorem5.2} by using
a pair of matrix triplets corresponding to one bound state of multiplicity two and
two simple bound states. 

\begin{example}
\label{example6.6}
\normalfont
Consider the reflectionless scattering data with two
double bound states described by the matrix triplets $(A,B,C)$ and $(\bar A,\bar B,\bar C)$ given by
\begin{equation}
\label{6.18}
A=\begin{bmatrix}i&1\\
0&i\end{bmatrix}, \quad B=\begin{bmatrix}0\\
1\end{bmatrix}, \quad
C=\begin{bmatrix}3&2\end{bmatrix},
\end{equation}
\begin{equation}
\label{6.19}
\bar A=\begin{bmatrix}-i&0\\
0&-2i\end{bmatrix}, \quad \bar B=\begin{bmatrix}1\\ 1\end{bmatrix}, \quad
\bar C=\begin{bmatrix}1&4\end{bmatrix}.
\end{equation}
Using \eqref{6.18} and \eqref{6.19} in \eqref{5.19} and \eqref{5.20}, we get
the corresponding potentials $q$ and $r$ as
\begin{equation}
\label{6.20}
q(x)=\ds\frac{8\,\omega_{21}\,\omega_{22}}{\omega_{23}+\omega_{24}}\,\exp\bigg(2x-2i\,\tan^{-1} \omega_{19}-2i\,\tan^{-1}\omega_{20}\bigg),
\end{equation}
\begin{equation}
\label{6.21}
r(x)=\ds\frac{8\,\omega_{25}\,\omega_{26}}{\omega_{27}+\omega_{28}}\,\exp\bigg(2x+2i\,\tan^{-1} \omega_{19}+2i\,\tan^{-1}\omega_{20}\bigg),
\end{equation}
where we have defined
\begin{equation*}
\omega_{19}:=\ds\frac{64\, e^{4x} +18\, e^{6x}}{-2+36\, e^{10x}-27\, e^{6x}(1+2x)-64\, e^{4x}(1+3x)},
\end{equation*}
\begin{equation*}
\omega_{20}:=\ds\frac{32\, e^{4x} +18\, e^{6x}}{-1+36\, e^{10x}+54\, x\,e^{6x}+16\, e^{4x}(-1+6x)},
\end{equation*}
\begin{equation*}
\omega_{21}:=2-2i+6x+36 \,e^{4x} +9\,e^{6x},
\end{equation*}
\begin{equation*}
\omega_{22}:=-1+36\, e^{10x}+18\, e^{6x}(i+3x)+16\, e^{4x}(-1+2i+6x),
\end{equation*}
\begin{equation*}
\omega_{23}:=1296\,e^{20 x}   +108\,e^{6 x} (1 + 2 x) - 1944\,^{16 x}(1 + 2 x) + 
 256\,e^{4 x} (1 + 3 x) - 4608 \,e^{14 x)}(1 + 3 x) ,
\end{equation*}
\begin{equation*}
\omega_{24}:=4+
 4096\,e^{8 x} (2 + 6 x + 9 x^2) + 81 \,e^{12 x} (13 + 36 x + 36 x^2) + 
 432\,e^{10 x} (13 + 40 x + 48 x^2),
\end{equation*}
\begin{equation*}
\omega_{25}:=128 i + 81 i \,e^{2 x} + 288 \,e^{6 x}(1 + 3 i x),
\end{equation*}
\begin{equation*}
\omega_{26}:=-2 + 36 \,e^{10 x} - 64 \,e^{4 x} (1 + i + 3 x) - 
 9 \,e^{6 x} (3 + 2 i+ 6 x),
\end{equation*}
\begin{equation*}
\omega_{27}:=2 + 2592 \,e^{20 x} - 216x\,e^{6 x}  + 7776 x\,e^{16 x}  +
 64 \,e^{4 x} (1 - 6 x) - 2304 \,e^{14 x} (1 -6 x),
\end{equation*}
\begin{equation*}
\omega_{28}:=648 \,e^{12 x} (1 + 9 x^2) + 512 \,e^{8 x} (5 - 12 x + 36 x^2) + 
 432 \,e^{10 x} (5 - 8 x + 48 x^2).
\end{equation*}
In this example, we obtain the constant $\mu$ defined in \eqref{2.20} 
and the transmission coefficients as
\begin{equation}
\label{6.22}
\mu=-2i\,\ln 2,\quad T(\zeta)=\ds\frac{(\lambda+i)(\lambda+2i)}{2(\lambda-i)^2},\quad \bar T(\zeta)=
\ds\frac{2(\lambda-i)^2}{(\lambda+i)(\lambda+2i)},
\end{equation}
where we recall that $\lambda=\zeta^2.$ As seen from \eqref{6.22}, $T(\zeta)$ has a double pole
at $\lambda=i$ and $\bar T(\zeta)$ has two simple poles at $\lambda=-i$ and $\lambda=-2i,$
respectively. Hence, we have one bound state of multiplicity two and two simple bound states.
In Figure~4 we present the plots of the absolute values of the potentials $q$ and $r$
given in \eqref{6.20} and \eqref{6.21}, respectively.

\end{example}

In the next example, we illustrate Theorem~\ref{theorem5.5} by using a pair of matrix triplets
with different sizes 
as input to the Marchenko system, demonstrating that
the corresponding potentials cannot both belong to the
Schwartz class.

\begin{example}
\label{example6.7}
\normalfont
Using the matrix triplet $(A,B,C)$ and $(\bar A,\bar B,\bar C)$ given by
\begin{equation*}
A=\begin{bmatrix}i&1&0\\
0&i&1\\
0&0&i\end{bmatrix}, \quad B=\begin{bmatrix}0\\
0\\
1\end{bmatrix}, \quad
C=\begin{bmatrix}1&1&1\end{bmatrix},
\end{equation*}
\begin{equation*}
\bar A=\begin{bmatrix}-i&1\\
0&-i\end{bmatrix}, \quad \bar B=\begin{bmatrix}0\\ 1\end{bmatrix}, \quad
\bar C=\begin{bmatrix}1&1\end{bmatrix},
\end{equation*}
as input in \eqref{5.19} and \eqref{5.20}, we obtain the corresponding potentials $q$ and $r$ as
\begin{equation}
\label{6.22}
q(x)=\ds\frac{32\,\omega_{31}\,\omega_{32}}{\omega_{33}
+\omega_{34}+\omega_{35}}\,\exp\bigg(2x-2i\,\tan^{-1} \omega_{29}-2i\,\tan^{-1}\omega_{30}\bigg),
\end{equation}
\begin{equation}
\label{6.23}
r(x)=\ds\frac{\omega_{36}\,\omega_{37}}{4(\omega_{38}+\omega_{39}
+\omega_{40})}\,\exp\bigg(-2x+2i\,\tan^{-1} \omega_{29}+2i\,\tan^{-1}\omega_{30}\bigg),
\end{equation}
where we have defined
\begin{equation*}
\omega_{29}:=\ds\frac{2+4x+32\, e^{4x} (7+8x+8 x^2)}{512\,e^{8x}+(1+2x)^2+64\,e^{4x}(-2+x+4x^2+8x^3)},
\end{equation*}
\begin{equation*}
\omega_{30}:=\ds\frac{-4(1+x)+32\, e^{4x} (3+8 x^2)}{3+8x+4x^2+512\, e^{8x}-32\, e^{4x}(7+10x+16 x^2+16x^3)},
\end{equation*}
\begin{equation*}
\omega_{31}:=1+2i-4(1-i)x-4x^2 +32 \,e^{4x} (i+2x),
\end{equation*}
\begin{equation*}
\omega_{32}:=2-i+4(1-i)x-4ix^2-512ie^{8x}+32e^{4x}[
7+4i+(8-2i)x+8(1-i)x^2-16i x^3],
\end{equation*}
\begin{equation*}
\omega_{33}:=(5+8x+4x^2)^2+262144\,e^{16x}-32768\,e^{12x}(7+10x+16x^2+16x^3),
\end{equation*}
\begin{equation*}
\omega_{34}:=
-64\,e^{4x}(33+98x+188x^2+248x^3+192x^4+64x^5),
\end{equation*}
\begin{equation*}
\omega_{35}:=
1024\,e^{8x}(61+148x+376x^2+544x^3+640x^4+512x^5+256x^6),
\end{equation*}
\begin{equation*}
\omega_{36}:=(2+i+2x)^2+512 e^{8x}-32(1-i) e^{4x}[2+5i+(4+12i)x^2+(1+i)(5x+8x^3)],
\end{equation*}
\begin{equation*}
\omega_{37}:=-1+4096e^{8x}(1+2ix-2x^2)+64e^{4x}[-6+5i+(8i-22)x+(8i-24)x^2-16x^3],
\end{equation*}
\begin{equation*}
\omega_{38}:=(1+2x)^2 (5+4x+4x^2)+262144\,e^{16x}+65536\,e^{12x}(-2+x+4x^2+8x^3),
\end{equation*}
\begin{equation*}
\omega_{39}:=128\,e^{4x}(5+15x+24x^2+44x^3+48x^4+32x^5),
\end{equation*}
\begin{equation*}
\omega_{40}:=2048\,e^{8x}(33+50x+60x^2+16x^3+96x^4+128x^5+128x^6).
\end{equation*}
In Figure~5 we present the plots of the absolute values of the potentials $q$ and $r$
given in \eqref{6.22} and \eqref{6.23}, respectively. From \eqref{6.22}
we observe that $q$ belongs to the Schwartz class. On the other hand, from the graph in Figure~5 
it is clear that $r$ cannot belong to the Schwartz class because $|r(x)|$ becomes
unbounded as $x\to-\infty.$
In this example, as $x\to-\infty$ we have 
\begin{equation*}
\omega_{36}=4x^2[1+o(1)],\quad \omega_{37}=-1+o(1),\quad  
\omega_{38}+\omega_{39}+\omega_{40}=16x^4[1+o(1)],
\end{equation*} 
and hence the term responsible for the blow up of $|r(x)|$ as $x\to-\infty$
is the term $e^{-2x}$ appearing in \eqref{6.23}.

\end{example}

\begin{figure}[!h]
  \centering
\includegraphics[width=2.5in]{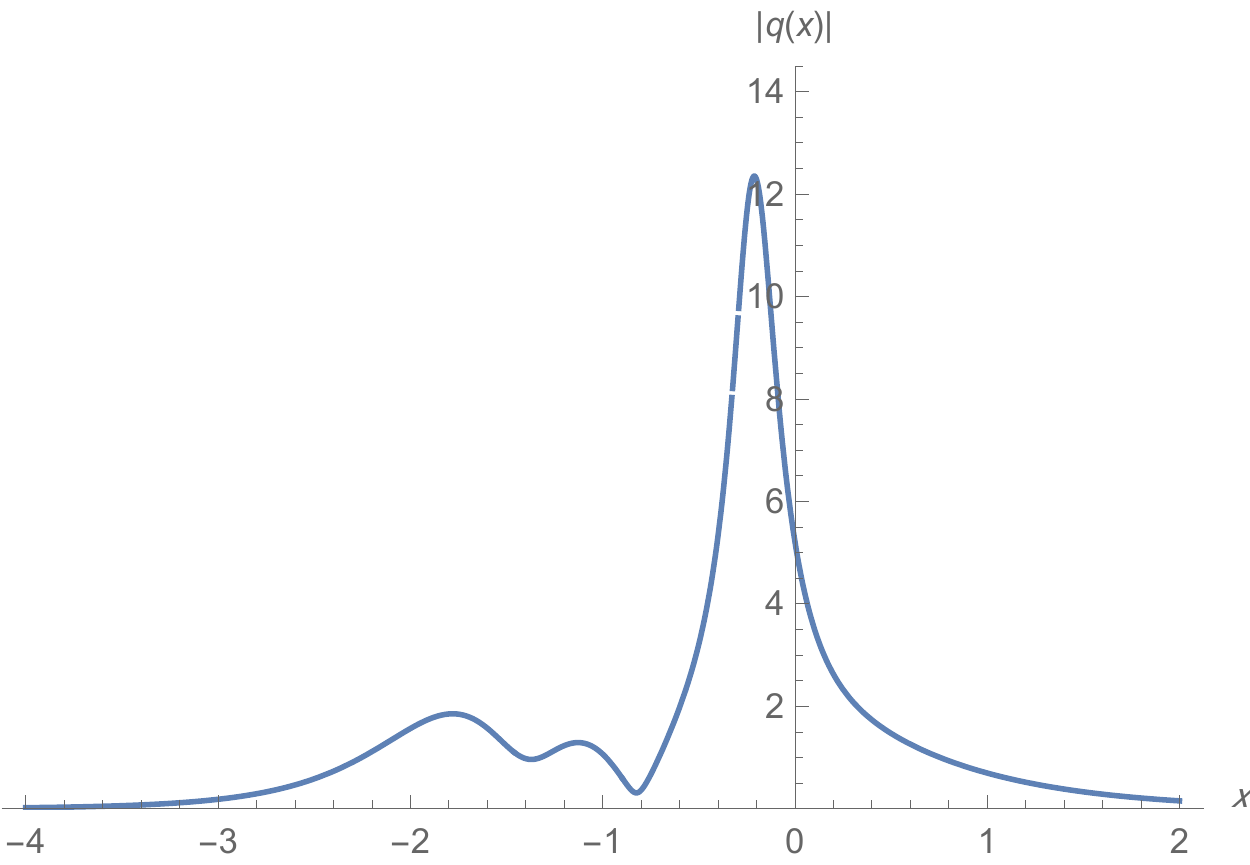}\hskip 0.5in
\includegraphics[width=2.5in]{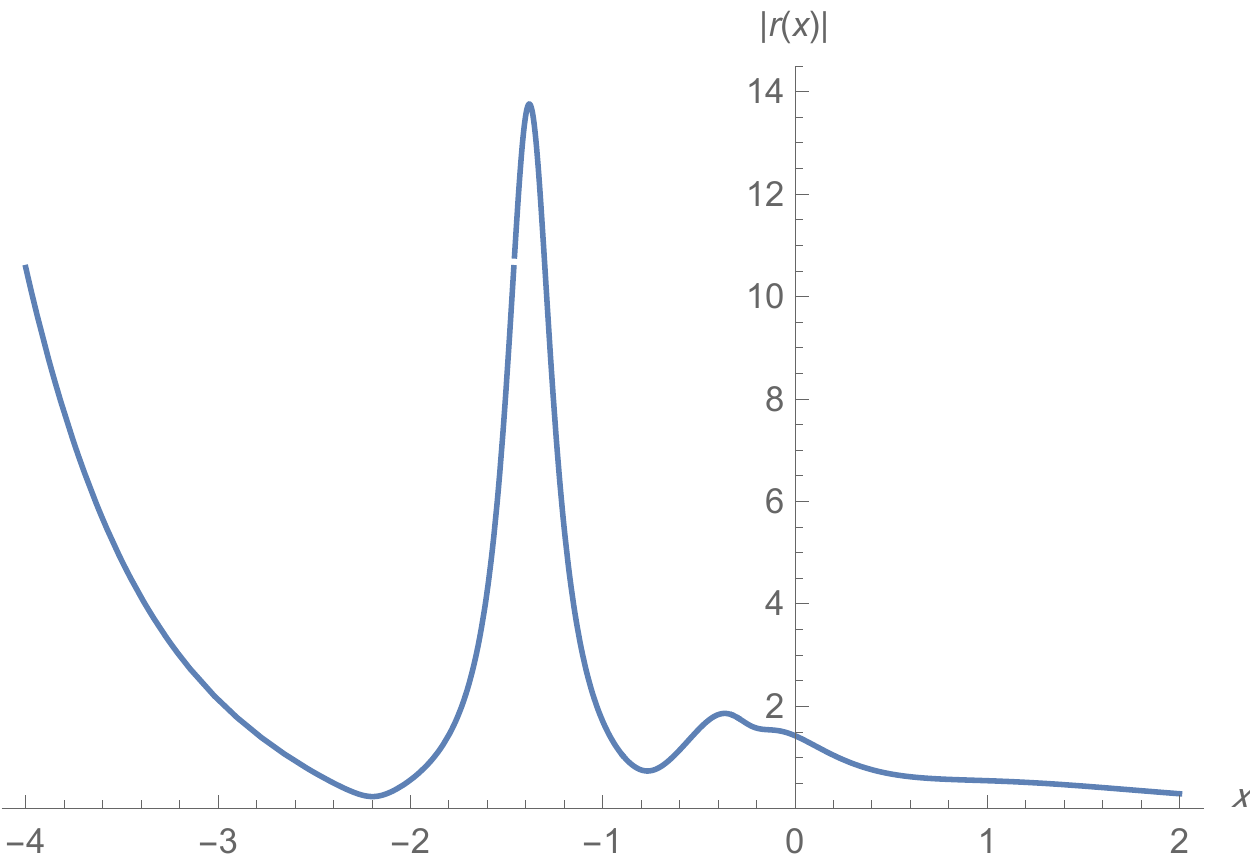}
  \caption{The absolute potentials $|q(x)|$ and $|r(x)|$ in Example~\ref{example6.7}.}
\end{figure}


\begin{thebibliography}{16}

\bibitem{AC1991}
M. J. Ablowitz and  P. A. Clarkson, \textit{Solitons, nonlinear evolution equations and inverse scattering}, Cambridge Univ. Press, Cambridge, 1991.

\bibitem{AKNS1974}
M. J. Ablowitz, D. J. Kaup, A. C. Newell, and H. Segur, \textit{The inverse
scattering transform-Fourier analysis for nonlinear problems}, Stud. Appl. Math. {\bf{53}}, 249--315 (1974).

\bibitem{AS1981}
M. J. Ablowitz and H. Segur, \textit{Solitons and the inverse scattering transform}, SIAM, Philadelphia, 1981.

\bibitem{AM1963}
Z. S. Agranovich and V. A. Marchenko, \textit {The inverse problem of scattering theory}, Gordon and Breach, New York, 1963.


\bibitem{ABDV2010}
T. Aktosun, T. Busse, F. Demontis, and C. van der Mee, \textit{Symmetries for exact solutions to the 
nonlinear Schr\"odinger equation}, J. Phys. A {\bf{43}}, 025202 (2010).

\bibitem{ADV2007}
T. Aktosun, F. Demontis, and C. van der Mee, \textit{ Exact solutions to the focusing nonlinear 
Schr\"odinger equation}, Inverse Problems {\bf{23}}, 2171--2195 (2007).
	

\bibitem{ADV2010}
T. Aktosun, F. Demontis, and C. van der Mee, \textit{ Exact solutions to the sine-Gordon equation}, J. Math.
Phys. {\bf{51}}, 123521 (2010).
	


\bibitem{AE2019}
T. Aktosun and R. Ercan, \textit{Direct and inverse scattering problems for a first-order system with 
energy-dependent potentials}, Inverse Problems {\bf{35}}, 085002 (2019).

\bibitem{AE2021}	
T. Aktosun and R. Ercan, \textit{Direct and inverse scattering problems for the first-order discrete 
system associated with the derivative NLS system}, 
Stud. Appl. Math. {\bf{148}}, 270--339 (2022).

	
\bibitem{AK2006}
T. Aktosun and M. Klaus, \textit{Chapter 2.2.4, Inverse theory: problem on the line}, In: E. R. Pike and P. C.
Sabatier (eds.), \textit{Scattering,} Academic Press, London, 2001, pp. 770--785.



\bibitem{AP2015} T. Aktosun and V. G. Papanicolaou,
\textit{Inverse problem with transmission eigenvalues for the discrete Schr\"odinger equation,}
J. Math. Phys. {\bf{56}}, 082101 (2015).

\bibitem{AV2006}
T. Aktosun and C. van der Mee, \textit{Explicit solutions to the Korteweg--de 
Vries equation on the half line}, Inverse Problems {\bf{22}},  2165--2174 (2006).


\bibitem{AW2006}
T. Aktosun and R. Weder, \textit{Inverse spectral-scattering problem with two sets of 
discrete spectra for the radial Schr\"odinger equation}, Inverse Problems {\bf{22}}, 89--114 (2006).


\bibitem{BGK1979} H. Bart, I. Gohberg, and M. A. Kaashoek,  \textit{Minimal factorization of matrix and operator functions,}
Basel, Birkh\"auser, 1979.
		
\bibitem{B2008}
T.\,N.\,Busse, \textit{Generalized inverse scattering transform for the nonlinear
Schr\"odinger equation},  Ph.D. thesis, The University of Texas at
Arlington, 2008.

\bibitem{B2017}
T. N. Busse Martines, \textit{Generalized inverse scattering transform for the nonlinear Schr\"odinger 
equation for bound states with higher multiplicities}, Electron. J. Differential Equations {\bf{2017}}, No. 179, pp. 1--15 (2017).

\bibitem{CS1989}
K. Chadan and P. C. Sabatier, \textit{Inverse problems in quantum scattering theory}, 2nd ed., Springer, New York, 1989.


\bibitem{E2018}
R. Ercan, \textit{Scattering and inverse scattering on the line for a first-order system with energy-dependent potentials},  
Ph.D. thesis, The University of Texas at
Arlington, 2018.

\bibitem{F1967}
L.\,D.\,Faddeev, \textit{Properties of the $S$-matrix of the one-dimensional Schr{\"o}dinger equation}, Amer. Math. Soc. Transl. 
(Ser. 2) {\bf{65}}, 139--166 (1967).


\bibitem{KN1978}
D. J. Kaup  and A. C. Newell, \textit{An exact solution for a derivative nonlinear Schr{\"o}dinger equation}, 
J. Math. Phys. {\bf{19}}, 798--801 (1978).	

\bibitem{L1987}
B. M. Levitan, \textit{Inverse Sturm--Liouville problems}, VNU Science Press, Utrecht, 1987.

\bibitem{M1986}
V. A. Marchenko, \textit{Sturm--Liouville operators and applications}, Birkh\"auser, Basel, 1986.

\bibitem{N1980}
R. G. Newton, \textit{Inverse scattering. I. One dimension,} J. Math. Phys. {\bf{21}}, 493--505 (1980).

\bibitem{N1983}
R. G. Newton, \textit{The Marchenko and Gel’fand--Levitan methods in the inverse scattering problem in one and 
three dimensions}, In: J. B. Bednar, R. Redner, E. Robinson, and A. Weglein (eds.), 
\textit{Conference on inverse scattering: theory and application,} SIAM, Philadelphia, 1983, pp. 1–-74.


\bibitem{T2010}
T. Tsuchida, \textit{New reductions of integrable matrix partial differential equations: $Sp(m)$-invariant systems}, J. Math. Phys. {\bf{51}}, 053511 (2010).
		


\end{thebibliography}
\end{document}